\definecolor{cornellred}{rgb}{0.7, 0.11, 0.11}
\newtheoremstyle{break}
  {\topsep}{\topsep}%
  {\itshape}{}%
  {\bfseries}{}%
  {\newline}{}%
\theoremstyle{break}
\newtheorem{definition}{Definition}
\newtheorem{theorem}[definition]{Theorem}
\newtheorem{lemma}[definition]{Lemma}
\newtheorem{assumption}{Assumption}
\DeclareMathOperator{\spargel}{sp}
\DeclareMathOperator{\diag}{diag}
\DeclareMathOperator{\rank}{rk}
\DeclareMathOperator{\E}{\mathbb E}
\DeclareMathOperator{\proj}{proj}
\DeclareMathOperator{\vech}{vech}
\DeclareMathOperator{\vect}{vec}
\renewenvironment{proof}[1][\proofname]{%
  \par\pushQED{\qed}\normalfont%
  \topsep6\p@\@plus6\p@\relax
  \trivlist\item[\hskip\labelsep\bfseries#1\@addpunct{.}]%
  \ignorespaces
}{%
  \popQED\endtrivlist\@endpefalse
}
\newcommand\norm[1]{\left\lVert#1\right\rVert}
\newcommand{\bianca}{\renewcommand\NAT@open{[}\renewcommand\NAT@close{]}}
\begin{document}

%\title{Generic Identifiability for the Cointegrated Unit %Root VAR%
\title{Retrieval from Mixed Sampling Frequency:\\ Generic Identifiability in the Unit Root VAR %-Case
%\footnote{{The authors would like to thank 
%the participants of the Economics Research Seminar of the %University of Regensburg, the CFE 2020 conference and the
%3rd Italian Workshop of Econometrics and Empirical %Economics (IWEEE 2022) and Christoph Rust for helpful %comments that lead to improvement of the paper. 
%The authors gratefully acknowledge financial support from %the Austrian Central Bank under Anniversary Grant No. %18287 and the DOC-Fellowship of the Austrian Academy of %Sciences (ÖAW).
%%The third author gratefully acknowledges support by %the %Austrian Science Foundation FWF, project P 30690-%N35. 
%Manfred Deistler and Leopold Sögner acknowledge support %by the Cost Action HiTEc - CA21163.
%} }
} 

\author{Philipp Gersing\footnote{Department of Statistics, Vienna University of Technology, Institute for Statistics and Mathematics, Vienna University of Economics and Business.
Email: philipp.gersing@wu.ac.at.
}, 
Leopold S\"ogner\footnote{Department of Economics and Finance, Institute for Advanced Studies,
Josefst\"{a}dter Stra\ss{}e 39,
1080 Vienna, Austria. %${^*}$ corresponding author. 
Leopold S\"ogner has a further affiliation with the Vienna Graduate School of Finance (VGSF). 
Email: soegner@ihs.ac.at.},
Manfred Deistler\footnote{Department of Statistics, Vienna University of Technology, Institute for Statistics and Mathematics, Vienna University of Economics and Business;
Email: manfred.deistler@tuwien.ac.at
}}
%\maketitle

	\vspace*{-150pt}
	{\let\newpage\relax\maketitle}
	\maketitle
	
%\begin{center}
%[Preliminary Version - DO NOT DISTRIBUTE]
%\end{center}
%

%\clearpage 

%\noindent
%{\it Running head:} \\
%Identifiability for Cointegrated VARs \\[7pt]
%
%\noindent
%{\it Corresponding author:} \\
%Philipp Gersing; Department of Statistics, Vienna University %of Technology, Institute for Statistics and Mathematics, %Vienna University of Economics and Business. Email:  %philipp.gersing@wu.ac.at.
%\\[-7pt]
%\vskip-17mm 
\vspace*{-5pt}
 %
 %\clearpage
\begin{abstract}
The ``\textit{RE}trieval from \textit{MI}xed Frequency \textit{S}ampling'' (REMIS) approach based on blocking developed in \cite{AndersonEtAl2016a} is concerned with retrieving an underlying high frequency model from mixed frequency observations. 

In this paper we investigate parameter-identifiability in the \citet{Johansen1995} vector error correction model for mixed frequency data.
% By applying the projection approach recently developed in \citet{GersingAndDeistler2021}, we obtain for the blocked process a canonical state-space representation in prediction error form, which we call canonical projection form. 
We prove that from the second moments of the blocked process after taking differences at lag $N$ ($N$ is the slow sampling rate), the parameters of the high frequency system are generically identified. We treat the stock and the flow case as well as deterministic terms.%	
\end{abstract}
{\em Keywords:}{\em Keywords:} Mixed Frequency, REMIS, VAR, Cointegration, Vector Error Correction Model, Identifiability

\noindent {\em MSC:} 62M10, 62P20

%
%\keywords{Mixed Frequency, REMUS, VAR, Cointegration, Vector %Error Correction Model, Identifiability}
%
\newpage
\section{Introduction} %, version 3.12.2021}
\label{sect:intro}
%
% Motivation: was wird identifiable und warum isses wichtig
%
%
% was bedeuted identifiable 

Econometric analysis is often encountered with multivariate time series data sampled at mixed frequencies. Examples for treating this are \cite{Zadrozny1988}, \citet{GhyselsEtAl2007}[MIDAS-regression],
\cite{anderson2012identifiability}, \cite{Schorfheide2015Real-Time},  \citet{ghysels2016macroeconomics}, \cite{AndersonEtAl2016a} and \cite{chambers2020frequency}. Identifiability is a prerequisite for consistent estimation \citep[see, e.g.,][]{DeistlerSeifert1978, PoetscherPrucha1997} and often is needed for economic interpretation of effects related to particular model parameters. This article investigates {\em identifiability} of the model parameters in a \citet{Johansen1995} vector error correction model.\\
The general question is whether the internal characteristics, i.e. the model parameters $\theta$, can be retrieved from the external characteristics -- in our case observable second moments. Identifiability means that the mapping from the parameters to these second moments is injective. Often injectivity of this mapping can only be achieved for a certain subset of the parameterspace. Here, we prove that identifiability can be obtained for a generic subset of the parameterspace \citep[see][]{AndersonEtAl2016a}.\\
%
%Often and also in our case we are not able to obtain %the parameters from second moments for {\em all} %$\theta \in \Theta$. For example, the first order %vector-auto-regressive example provided in %\citet{AndersonDeisterletal2012,AndersonDeisterletal2%016}[Section~3] can also be adapted to the %cointegrated case (see also %Appendix~\ref{sect:case1}). In this example %auto-regressive parameters exist where the model is %not identified. However on the complement of this set %the parameters are identified. This set is %semi-algebraic \citep[see, %e.g.,][page~24]{Bochnaketal1998} and therefore open %and dense. Hence, the weaker concept of  {\em generic %identifiability} (g-identifiability in the following) %will be considered in the following. It is defined in %the following way %\citep[see][]{AndersonDeisterletal2012,AndersonDeiste%rletal2016}: ``A subset of the parameterspace %$\Theta$ is called generic if it contains a subset %that is open and dense in $\Theta$. We use the term %g-identifiability if identifiability can be ensured %on a generic subset of the parameterspace %[${\Theta}$].'' G-identifiability in stable mixed %frequency VAR models was investigated in %\citet{AndersonDeisterletal2012} and %\citet{AndersonDeisterletal2016}. 
%
% Bezug unserer Idee auf Literatur 
%
%
%
As opposed to MIDAS-regression, where the observations at high frequency are considered as additional information, we consider mixed frequency as either a ``missing-values'' or a ``dis-aggregation''-problem, by which we mean the following: 
%
% observational scheme 
We commence from an underlying \textit{high frequency system} (e.g., a VECM) parameterised by $\theta$ for a multivariate process 
\begin{align*}
(y_t)_{t \in \mathbb Z} = \left( \begin{pmatrix}
 y_t^f \\
 y_t^s
\end{pmatrix} \right)_{t \in \mathbb Z}  \ ,   
\end{align*}
with dimensions $n$, $n_f$ and $n_s$ for $y_t$, $y_t^f$ and $y_t^s$ respectively. Our aim is to identify and estimate the high frequency system from the observed (mixed frequency) data. 
The \textit{observational scheme} is as follows: 
%In the case of mixed frequency data, $y_t$ is not fully observed and we face the following \textit{observational scheme}:
While the fast variables $y_t^f$ are observed at $t \in \mathbb{Z}$, for the slow variables 
$y_t^s$ we consider: \\
\noindent 1. \textit{Stock-Case:}  $y_t^s$ is observed only at $t \in N \mathbb Z$ for some sampling rate $N \geq 2$, hence we have a \textit{missing-value problem}. \\
\noindent
2. \textit{Affine aggregation:} we observe an affine transformation 
\begin{align}
w_t := c_w + c_0 y_{t}^s + \cdots + c_{p_c} y_{t-p_c}^s \ , \label{eq: def w_t} 
\end{align}
where $c_i$ are known constant matrices for $i \geq 0$, $c_w$ is a known vector and $w_t$ is observed at $t \in N \mathbb Z$. A special case of affine aggregation are flow variables: For example suppose $y_t^s = GDP_t$, the monthly gross domestic product of a country. The quarterly GDP, $w_t$, is the sum of three monthly GDPs. We call $y_t^s$ \textit{latent} whenever it is not directly observed. Hence, our aim is to retrieve the underlying high frequency parameters $\theta$ from data observed according to the observational schemes described above.%
\footnote{In this example we assume that the variable considered, 
$y_t$, is integrated of order one. If by contrast $(\log y_t)$ is integrated of order one, the affine approximation of \citet{Aadland2000} in combination with the methodology developed in this article can be applied.}
\\  
With the procedure described above, we are able to model all kinds of linear dynamic relationships between latent and observed variables, whereas the MIDAS \citep[see, e.g., ][]{ghysels2016macroeconomics} approach only covers relationships between observed variables. After identifying the parameters one may interpolate missing values or dis-aggregate observations in a model based way by using the retrieved parameters of the underlying high frequency system.\\ 
%
%
%
%
%As opposed to \citet{ghysels2016macroeconomics} we retrieve the %``fully parameterised'' underlying high frequency system, thus %allowing for any feedback between the variables considered. In %\citet{ghysels2016macroeconomics} the ``stacked skip-sampled %process'' [blocked process in the terminology used in this %article] is assumed to follow a $p$-th order vector %auto-regressive process. %  %\citet{ghysels2016macroeconomics}[Proposition~5.1] shows that the %maximum likelihood estimator is consistent, such that %identifiability follows from \citet{DeistlerSeifert1978}.
% By contrast, in \citet{AndersonEtAl2016a} -- and also in this article -- the starting point is the high frequency system. As a consequence, we can allow for all kinds of linear feedback between latent/missing and observed variables, e.g. also feedback from unobserved slow variables. In this case it has to be shown that the parameters of the underlying high frequency system can be identified from observed moments based on mixed frequency data.\\
%
%
%Note that parameter estimation for mixed frequency cointegrated models was for e.g. investigated in \citet{miller2010cointegrating}, or \citet{seong2013estimation}, who proposed an estimation routine based on the EM-algorithm. Since identifiability is often a prerequisite for maximum likelihood estimation \citep[for details, see, e.g.,][]{PoetscherPrucha1997}, this article complements \citet{miller2010cointegrating}, \citet{seong2013estimation} and \citet{chambers2016estimation}. 
%
% XXX connect second moments to ML XXX %% --- 
%
Estimation of continuous time models from mixed frequency data are investigated in \citet{chambers2003asymptotic, chambers2016estimation, chambers2020frequency}. In particular, \citet{chambers2003asymptotic, chambers2020frequency} consider co-integrating regressions and show that the scaled estimators proposed, converge in distribution to functionals of Brownian motion and to stochastic integrals. Hence, the estimators are (weakly) consistent. Then, by
\citet{GABRIELSEN1978} -- and for the case of strong consistency by
\citet{DeistlerSeifert1978} -- the model parameters are identified.\\  
% XXX 19.12.2022
 %XXX check these two sentences above, work for P XXX
% \citet{chambers2016estimation} also consider continuous time models and deduce from them discrete time VAR specifications. Then maximum likelihood estimation is suggested for the stable and the cointegrated case respectively. \\
%
%
%As in \cite{AndersonEtAl2016a} we introduce a blocked process %containing all observations.
%
%
For the stable vector auto-regessive model \citet{anderson2012identifiability} and \citet{AndersonEtAl2016a} either used the {\em blocking approach} \citep[see also][]{Filler2010, ghysels2016macroeconomics} or the {\em extended Yule-Walker equations} \citep[see][]{ChenZadrozny1998, AndersonEtAl2016a} to show g-identifiability. For the same model class \citet{GersingAndDeistler2021} present an alternative proof for identifiability using the so-called canonical projection form. This idea is also applied in this paper. On the other hand, \citet{DeistlerEtAl2017} show that the parameters need not be identified in the auto-regressive-moving average (VARMA) case, if the order of the MA polynomial exceeds the order of the AR polynomial.\\ 
This article is organised as follows: Section~\ref{sect: notation and model class} starts with the vector error correction model developed in \citet{Johansen1995} as the underlying high frequency model. 
%The data is observed at mixed frequencies. 
%
%After introducing the VAR representation and its state-space form, %we show that identifiability holds on a generic set of the %parameterspace $\Theta$. 
%
%
%
In Section~\ref{subsect: mixed frequency data and generic ident} we describe the observational schemes considered in detail. In particular, we introduce a stationary blocked process containing all observed variables. Section~\ref{sec: topological props} introduces conditions, which are later shown to be sufficient for identifiability. We prove that these conditions hold generically in the underlying high frequency parameterspace. Section~\ref{sec: super idea} extends the REMIS approach to the non-stationary case: Here, we use the result from \cite{chambers2020frequency} that the cointegrating vectors can be identified from mixed frequency data. First, we derive a state-space representation of the blocked process that we call Canonical Projection Form (CPF). In this representation, the system matrices are simple transformations of the parameters of the underlying high-frequency model. After that we start from the unique factor of the spectrum of the blocked process \citep[see, e.g.,][chapter 6.2 and 7.3]{ScherrerDeistler2018book} to get an arbitrary minimal realisation for this factor and relate this to the CPF. From there we can retrieve the parameters of the underlying high frequency system using the structural properties of the CPF. Section~\ref{sect:deter} adds deterministic terms. Finally, Section~\ref{sect:conc} concludes.

\section{Notation and Model Class}\label{sect: notation and model class}
%
%
%
%We denote by $\M_{m,n}(\mathbb R)$ the set of real $m \times n$ %matrices, or briefly by $\M_n(\mathbb R)$ the set of real square %matrices respectively. 
%
%
%
\subsection{Representations and Parameterspace of the Underlying High Frequency System}
\label{sect:model}
In the first step, we introduce the class of underlying high frequency systems: We commence from a process which is integrated of order one and allows for cointegration. Suppose $(y_t)_{t \in \mathbb Z}$ is $n \times 1$ and a solution on $\mathbb Z$ of the vector error correction system: 
\begin{align}
	\Delta {y}_{t}
	=  \Pi {y}_{t-1}   
	+ \sum_{j=1}^{p-1} {\Phi}_{j}  \Delta 
	{y}_{t-j}
	+ { \nu}_{t},  \quad \ \ \nu_t \sim WN \left( \Sigma_{\nu} \right) \ , 	\label{eq: VECM hf}
\end{align}  
%
% \textcolor{red}{$r>0$ or $r \geq 0$, in Johansen, page 72 $r=0$ VAR for first differences, currently I think that that all our steps also work with $r=0$. For $r=0$ I think that also the mixed case works}
where $(\nu_t)_{t \in \mathbb Z}$ is white noise and $\Pi$ is of rank $r > 0$ in the case of cointegrating relationships, but we also allow the case $r = 0$. Such solutions always exist and can be constructed as described in detail in \cite{bauerwagner2012}. We obtain a unique factorisation of $\Pi = \alpha \beta'$ with $\alpha, \beta \in \mathbb{R}^{n \times r}$ applying the singular value decomposition to $\Pi$ in the following way: %
\begin{align*}
\Pi &=  \underbrace{U}_\text{$n \times n$}
\underbrace{\diag(d_1, ..., d_r, 0, ..., 0)}_\text{$D$}
\underbrace{V'}_\text{$n \times n$}
 = \underbrace{U_1}_\text{$n \times r$} \underbrace{\diag(d_1, ..., d_r)}_\text{$\tilde D$} \underbrace{V_1'}_\text{$r \times n$} \\
 &= U_1 \tilde D V_1'  = \underbrace{U_1  Q^{-1}}_\text{$\alpha$} \underbrace{Q \tilde D V_1'}_\text{$\beta'$},  
\end{align*} 
where $Q$ is a non-singular matrix of elementary row operations that transforms $\tilde D V_1'$ into its reduced echelon form, such that $Q \tilde D V_1' =  \begin{pmatrix} I_r & \beta_{n-r}' \end{pmatrix}$. We stack the parameters $\alpha, \beta, \Phi_1, ..., \Phi_{p-1}$ to a vector $\theta_{VECM} \in \mathbb{R}^{d}$, where $d = nr + (n - r)r +(p-1)n^2$.\\ 
We also have a VAR$(p)$ representation for $(y_t)$ of the form, 
\begin{align}
y_t = \mathcal{A}_1 y_{t-1} + \cdots + \mathcal{A}_p y_{t-p} + \nu_t \ .  \label{eq: hf AR rep} 
\end{align}
Throughout this article, we assume that $r$ and $p$ are known a priori. 
We obtain the representation in (\ref{eq: hf AR rep}) by the mapping $\psi$: 
\begin{align*}
\psi: \ \theta_{VECM}  & \mapsto \theta_{AR}, \quad  \mbox{defined as}\\[0.5em]
\mathcal{A}_1 &= I_n + \Pi + \Phi_1, \quad 
\mathcal{A}_j = \Phi_j - \Phi_{j-1} \ \mbox{  for  } 1 < j < p, \quad 
%&\vdots \\
\mathcal{A}_p = -\Phi_{p-1},  	
\end{align*}
with $\theta_{AR} = \vect \begin{pmatrix} \mathcal A_1 & \cdots & \mathcal A_p\end{pmatrix}$. On the other hand for a $\theta_{AR}$ which has a corresponding VECM representation, we compute $\theta_{VECM}$ as follows: 
\begin{align*}
\psi^{-1}&: \theta_{AR} \mapsto \theta_{VECM} \\[0.5em]
\Pi &= - I_n + \sum_{j = 1}^p	\mathcal{A}_j , \
\Phi_1 = - I_n + \mathcal{A}_1  + \Pi, \
\Phi_2 = \Phi_1 + \mathcal{A}_2, \ \cdots, \ 
\Phi_{p-1} = - \mathcal{A}_p.
\end{align*}
Now, define the polynomial matrix $a(z) = I_n - \mathcal{A}_1 z - \cdots - \mathcal{A}_p z^p$ where $z$ is a complex variable or the lag operator on $\mathbb Z$ depending on the context. For $\check{c} = \begin{pmatrix}
	I_r  \\ 0 
\end{pmatrix} \in \mathbb{R}^{n \times r}$ and 
$\check{c}_\bot = \begin{pmatrix}
 	0 \\ I_{n-r}
 \end{pmatrix} \in  \mathbb{R}^{n \times (n-r)}
$, $\beta_\bot := \big(I_n - \check{c} (\beta' \check{c})^{-1} \beta' \big) \check{c}_\bot$, and 
$\alpha_\bot$ defined analogously to $\beta_\bot$. We impose the following assumptions \citep[][chapter 4]{Johansen1995}:
\begin{assumption}[Cointegrated VAR-System]\label{ass: Theta}\ \\[- 2.2em] 
\begin{itemize}
\item[\textbf{(C1)}] $\rank \alpha \beta ' = r < n$. \label{as: C1}
\item[\textbf{(C2)}] $\det (\alpha_\bot ' (I_n - \sum_{j = 1}^{p-1} \Phi_j) \beta_\bot) \neq 0$. \label{as: C2} 
\item[\textbf{(C3)}] $\det a(z) = 0 \Rightarrow z = 1$ or $|z| > 1$. \label{as: C3} 
\item[\textbf{(C4)}] $\Sigma_\nu = \E \nu_t \nu_t' > 0$. \label{as: C4}
\end{itemize}
\end{assumption}
We define the parameterspace as follows:\footnote{We write $\mathbb R^d \Big|_{C1, C2}$ to denote the set of real vectors in $\mathbb R^d$ for which $C1$ and $C2$ hold.} 
\begin{align*}
 \Theta_{VECM, 1} &:=  \psi^{-1}\left(\psi \left(\mathbb R^d \Big|_{C1, C2}\right) \Bigg|_{C3} \right), \qquad 
 \Theta_{1}:= \psi\left(\Theta_{VECM, 1}\right) \\[.5em]
 &\mbox{with } \quad \Theta_{1} \overset{\psi}{\leftrightarrow} \Theta_{VECM, 1} 
\end{align*}
Note that under these assumptions $\psi$ is a homeomorphism.  % parameterspace, whenever we don't need to distinguish between them. 
%Now, we denote the set of $\theta_{AR}$ that fulfils the conditions \textbf{(C1)}, \textbf{(C2)} and \textbf{(C3)} from Assumption \ref{ass: Theta} as $\Theta_1$. 
The set of $\vech{\Sigma_\nu}$ with $\Sigma_\nu \in \mathbb{R}^{n \times n}$, $\Sigma_\nu = \Sigma_\nu'$ and $\Sigma_\nu > 0$ (condition (C4) in Assumption \ref{ass: Theta}) is denoted by $\Theta_2$. The overall parameterspace for the VAR$(p)$ representation is
\begin{align*}
 \Theta = \Theta_1 \times \Theta_2.   
\end{align*}
We will also need the state-space representation of $(y_t)_{t \in \mathbb Z}$, which follows from (\ref{eq: hf AR rep}): 
\begin{align}
  \underbrace{\begin{pmatrix}
      y_t \\
      \vdots \\
      y_{t-p+1}
  \end{pmatrix}}_\text{$X_{t+1}$} &= \underbrace{\begin{pmatrix}
    	\mathcal A_1 &  \mathcal  A_2 & \cdots & \mathcal A_{p}   \\
	I_n &  & & 0 \\
	& \ddots & &  \vdots \\
	& &  I_n & 0  
\end{pmatrix} }_\text{$\mathcal  A$}
\underbrace{\begin{pmatrix}
    y_{t-1} \\
    \vdots  \\
    y_{t-p}
\end{pmatrix}}_\text{$X_{t}$} + 
\underbrace{\begin{pmatrix}
    I_n \\
    0 \\
    \vdots \\
    0
\end{pmatrix}}_\text{$\mathcal B$}
\nu_t \label{eq: hf ss trans}\\
y_t &= \underbrace{\begin{pmatrix}
   \mathcal  A_1 & \cdots & \mathcal  A_p
\end{pmatrix}}_\text{$\mathcal  C$} X_{t} + \nu_t. \label{eq: hf ss obs}
\end{align}  
Note that (\ref{eq: hf ss trans}), (\ref{eq: hf ss obs}) is always controllable as $\Sigma_\nu$ and therefore $\Gamma (t) := \E \big(X_{t+1} X_{t+1}'\big) $ are of full rank. The system (\ref{eq: hf ss trans}), (\ref{eq: hf ss obs}) is also observable whenever $\mathcal A_p$ is of full rank. This follows since $\mathcal A_p$ is nonsingular (and therefore $\mathcal A$ is non-singular) from the BPH-test (see \cite{Kailath1980} 2.4.3). Hence under Assumption \ref{ass: Theta} and if $\mathcal A_p$ is nonsingular the system (\ref{eq: hf ss trans}), (\ref{eq: hf ss obs}) is minimal. For details on controllability and observability see e.g. \cite{ScherrerDeistler2018book}, chapter 7 or \cite{HannanAndDeistler2012}, chapter 2.   
\subsection{Mixed Frequency Data: Stock and Flow Variables}\label{subsect: mixed frequency data and generic ident}
A main challenge of the identifiability proof in the integrated case -- as opposed to the stationary case \citep{AndersonEtAl2016a} -- is that the second moments of an integrated process
{(that is, 
$\mathbb{E} y_s y_t$, $s,t \in \mathbb{Z}$)}
are time dependent and cannot be estimated directly. Instead, for the sake of practical relevance of identifiability considerations, we identify from observable second moments of stationary transformations of the level process 
{(that is, 
$\left( y_t \right)_{t \in \mathbb{Z}}$).} \\ %These \\
Suppose for the moment, that the matrix of cointegration vectors $\beta$ is known. Our proof commences from what we call the ``blocked  process'', where we distinguish between the Stock- and the Flow-case:\\
\textbf{1. Stock Variables:} In this case for $t \in N \mathbb{Z}$, we get the co-stationary vector $\tilde{y}_t$ of ``observed'' random variables. We will use $\tilde{n} := r + n + (N-1) n_f$ for the dimension of $\tilde y_t$ henceforth. Let $u_t^\mathcal{S} :=\beta^{\prime} y_t$, $\Delta_N y_t := y_t -y_{t-N} = \sum_{j=0}^{N-1} \Delta y_{t-j}$, {and }\\
\begin{align}
\tilde{y}_{t} & = 
\begin{pmatrix}%
\beta^{\prime} y_t \\
y_t - y_{t-N} \\
y^f_{t-1} - y^f_{t-N-1} \\
\vdots \\
y^f_{t-N+1} - y^f_{t-2N+1} 	
\end{pmatrix} 
=
\begin{pmatrix}%
u_t^\mathcal{S} \\
\Delta_N y_t  \\
\Delta_N y^f_{t-1}  \\
\vdots \\
\Delta_N y^f_{t-N+1} 
\end{pmatrix} 
\ . \label{eq:tildeyt_FDstock}
\end{align}  
The blocked process $(\tilde y_t)$ is similar to the blocked process in \cite{AndersonEtAl2016a} with the distinction that we added the variable $\beta'y_t = u_t^{\mathcal S}$ and take differences at lag $N$. Admittedly, the true $\beta$ is in fact not observed, however since $\beta$ can be estimated consistently, for the purpose of the analysis of identifiability we can assume $\beta'y_t$ to be observed.\\ 
%
%
%
%
%
%
%%%%%%% LOGARITHM Variables %%%%%%%%%%%%%%%%%%%%%%%%%%%
%
% For the case where $y_{t}^s$ are stock variables, we simply have $c(z) = I_{n_s}$, so $w_t = y_{t}^s$, $t \in N \mathbb N_0$. In economic applications also the logarithm of an aggregated variable is often considered. E.g. the gross domestic product (GDP) is observed quarterly, while other economic variables such as industrial production or financial variables are observed at higher frequencies. To work with the log of a slow flow coordinate of $y_t^s$, say $\zeta_t$ (e.g., $\ln  GDP$), we  follow \citet{Aadland2000} and use the approximation $ \ln \sum_{j=0}^{N-1} \zeta_{t-j}  \approx   \frac{1}{N} \sum_{j=0}^{N-1} \ln \zeta_{t-j}  + \ln N$, where $\ln \sum_{j=0}^{N-1} \zeta_{t-j}$ is actually observed at $N \mathbb N$.\\
%
%
\textbf{2. Flow Variables:}
In a similar way, we may consider the case where all slow variables are flow variables, in which case we are able to observe the temporal aggregate $w_t := \sum_{j=0}^{N-1} y_{t-j}^s$ at $t \in N \mathbb{Z}$. So 
\begin{align*}
   \Delta_N^\Sigma y_t := 
   \sum_{j = 0}^{N-1} y_{t-j} - \sum_{j = 0}^{N-1} y_{t-N-j} =
   \Delta_N \sum_{j = 0}^{N-1} \begin{pmatrix}
    y_t^f \\
    y_t^s
    \end{pmatrix}.
\end{align*}
% All fast variables are stock or flow variables. 
%
%
%
%
If all slow variables are flow variables, we can observe $\sum_{j=0}^{N-1} y_{t-j} 
= \left( w_t^{\prime} , \sum_{j=0}^{N-1} y_{t-j}^{f \prime} \right)^{\prime}
$, $t \in N \mathbb{Z}$. Since $\beta^{\prime} y_t$ is stationary, we have that
$\left( \beta^{\prime} y_t \right)_{t \in N \mathbb{Z}}$ and $u_t^\mathcal{F} := \beta^{\prime} \sum_{j=0}^{N-1} y_{t-j} \in \mathbb{R}^r$ are integrated of order zero. For the flow case we define the co-stationary vector process
%
%
%  tilde y flow case: 
%
\begin{align}
\tilde{y}_{t}  = 
\begin{pmatrix}%
u_t^\mathcal{F} \\
\Delta_N^\Sigma y_t  \\
\Delta_N y^f_{t-1}  \\
\vdots \\
\Delta_N y^f_{t-N+1} 
\end{pmatrix} 
\ . \label{eq:tildeyt_FD_flow}
\end{align}  
We call the autocovariance function of the (stationary) blocked process
\begin{align}
 \tilde \gamma: h \mapsto  \E \tilde y_{t+h} \tilde y_t'  \ ,\qquad \   \mbox{where  } h \in N \mathbb Z  \ , \label{eq: def tilde gamma} 
\end{align}
\textit{observed second moments}, which can be consistently estimated from the data (if $\beta$ is known) under standard assumptions.%
%
%
%\footnote{In a prior version of this paper we  }
\\  
The motivation to consider this blocked process for identifiability is the following:\\
1. We take differences at lag $N$ (as opposed to lag one) because these differences can be directly computed from the mixed frequency data and are stationary.\\
2. Note that the set of observable autocovariances given mixed frequency data is 
\begin{align*}
\gamma_{\Delta_N y}^{ff}(h) &:= \E \Delta_N y_{t+h}^f \Delta_N {y_t^f}'   \qquad h \in \mathbb Z \\
\gamma_{\Delta_N y}^{fs}(h) &:= \E \Delta_N y_{t+h}^f \Delta_N {y_t^s}'   \qquad h \in \mathbb Z \\
\gamma_{\Delta_N y}^{ss}(h) &:= \E \Delta_N y_{t+h}^s \Delta_N {y_t^s}'   \qquad h \in N \mathbb Z \\
\gamma_{\beta}^\cdot (h) &:= \E u_{t+h}^\cdot {u_t^\cdot}'   \qquad \qquad \quad h \in N \mathbb Z \ ,
\end{align*} 
where the superscript ``$\cdot$'' is shorthand for $\mathcal{S}$ or
$\mathcal{F}$.
Note that these are exactly the second moments of the autocovariance function $\tilde \gamma$ of the blocked process defined in equations 
(\ref{eq:tildeyt_FDstock}) for the stock case. In an obvious way this is treated accordingly in the flow case (\ref{eq:tildeyt_FD_flow}). So the blocked process ``contains the whole second moment information available'' from which we can identify. %
%as opposed to the approach with Extended Yule Walker equations (XYW) \citep[see][]{Zadrozny1988,AndersonEtAl2016a} where only part of the autocovariances above are used.
The same idea is also applied for the stationary case in \cite{AndersonEtAl2016a}.
%Insofar as identifiability procedures shall inspire estimation techniques,  it is in our interest to start from the full set of available second moments as opposed to the 
\\  
3. Our interest in the particular blocked process (\ref{eq:tildeyt_FDstock}), (\ref{eq:tildeyt_FD_flow}) having $u_t^\cdot$ in the first coordinates, originates in the fact that we can obtain a minimal representation for this process (see Section~\ref{sec: super idea}), where the parameters are fairly simple functions of the parameters of the underlying high frequency system. This will finally help us to retrieve the high frequency model parameters.\\
%
%
%
% 4. Again comparing to the procedure of XYW, with the blocking approach we provide \textit{explicit} conditions for identifiability, in other words we provide a formal description of the subset of $\Theta$ where identifiability holds.\\
%
%
% 5. Regarding flow variables in \citep[][Section 5, Theorem 4]{AndersonEtAl2016a} it is shown that (for the stationary case) the slow-fast-autocovariances can be obtained from the spectrum of the aggregated variables. Thus, using the XYW procedure, at least in the stationary case, the flow case can be reduced to the stock case. However, the slow-slow-autocovariances cannot be obtained. We would be forced to use the XYW procedure for identification, which is ``inferior'' for the reasons discussed above. 
% Therefore we generalise the blocking approach.
% \\
%
%
%
%
Next, we define the concept of generic identifiability. Here, identifiability is concerned with the problem whether the parameters of the underlying high frequency system (\ref{eq: hf ss trans}), (\ref{eq: hf ss obs}) or (\ref{eq: VECM hf}) are uniquely determined from the observable second moments (defined below in this section). To be more precise, a subset $\Theta_I \subset \Theta$ is called identifiable, {if the mapping attaching the observable second moments to the parameters $\theta \in \Theta_I$ is injective. 
}
In our setting identifiability for the whole set $\Theta$ cannot be obtained.
%
%
%\footnote{The example of a non-identifiable AR(1)-system considered in %\cite{AndersonEtAl2016a}[Section~3] holds analogously for the integrated %case.} %
%
%
To see this, we consider a simple example where $p=1$, $r=1$, and $n=2$, the first coordinate of $y_t$ is a fast variable, denoted $y_{t}^{f}$, while the second coordinate, $y_t^s$, is a slow stock variable. We assume that the cointegrating vector $\beta= \left(1,\beta_{s} \right)$ is known. Recall that the observed second moments are as described in equations (\ref{eq:tildeyt_FDstock}) and (\ref{eq: def tilde gamma}). Let $\sigma_{ff}$, $\sigma_{fs}^{}=
\sigma_{sf}$, and 
$\sigma_{ss}^{}$ denote the elements of the covariance matrix $\Sigma_\nu$. Appendix~\ref{sect:example:noidnet} shows that there exist two parameter vectors $\theta^I := 
\left( 
\alpha_f^{I},
\alpha_s^{I},
1,
\beta_{s}^{},
\sigma_{ff}^{I},
\sigma_{fs}^{I},
\sigma_{ss}^{I}
\right)^{\prime} \not=
\theta^{II} := 
\left( 
\alpha_f^{II},
\alpha_s^{II},
1,
\beta_{s}^{},
\sigma_{ff}^{II},
\sigma_{fs}^{II},
\sigma_{ss}^{II}
\right)^{\prime}
$ such that all observable second moments are the same; hence in this case the mapping from
the model parameters to
observable second moments cannot be injective and the model parameters are not identified from observed second moments. In this example $\alpha_f^{I} = \alpha_f^{II}=0$. This implies that the fast coordinate follows a random walk and does not provide any information on the 
parameter $\alpha_s$, that is on how $\beta^{\prime}y_t$ affects $\Delta y_{t}^{s}$, $t \in 2 \mathbb{Z}$.
However, in this paper we prove that identifiability holds for a so called generic subset of $\Theta$. Note that a set $\Theta_I \subset \Theta$ is called \textit{generic} in $\Theta$, if it contains a subset that is open and dense in $\Theta$.\\ 
Let $\Theta_{I} := \left(G \cap \Theta_1 \right) \times \Theta_2$, where $G\subset \mathbb R^{n^2 p}$ is defined in Assumption \ref{a: g-ident} below. In this paper we show firstly that $\Theta_I$ is generic in $\Theta$ (see Section \ref{sec: topological props}) and secondly that the set of high frequency systems corresponding to $\Theta_I$ is identifiable from the observable second moments (see Section \ref{sec: super idea}). Or formally, we show that 
\begin{align}
    \pi: \theta \mapsto \tilde \gamma \label{eq: def pi}
\end{align}
is injective on $\Theta_I \subset \Theta$. \\
Finally, in terms of identifiability, we may suppose without loss of generality that $\beta$ is known. For instance \citet{miller2016conditionally} or \citet{chambers2020frequency} propose estimators, accounting for stock and flow variables, respectively.
% \citet{chambers2020frequency}[Theorem~2] 
The estimators of $\beta$ scaled by $T$ weakly converge to 
a random variable bounded in probability. Hence, e.g. by \citet{white2001asymptotic}, the estimator is weakly consistent. 
By \citet{GABRIELSEN1978}  the matrix of cointegrating vectors $\beta \in \mathbb{R}^{n \times r}$ is identified from mixed frequency observations given the assumptions imposed in \citet{chambers2020frequency} or \citet{miller2016conditionally}.
These assumptions are only posed on the stochastic properties of the high frequency innovations $(\nu_t)_{t \in \mathbb Z}$ and therefore do not restrict our results on the genericity of the identifiability conditions from Section~\ref{sec: topological props}. If strong consistency could be established for some estimator of $\beta$, the results of 
\citet{DeistlerSeifert1978} apply and $\beta$ is identified. 
\subsection{Generic Identifiability and Topological Properties of the Parameterspace}\label{sec: topological props}
In this section we define the conditions that we need for identifiability and prove that these conditions result in a generic subset of the parameterspace. Define a set $G \subset \mathbb R^{n^2 p}$ by the following assumptions: 
\begin{assumption}[g-Identifiability Assumptions]\label{a: g-ident}\ \\[- 2.2em]  
\begin{itemize}
\item[\textbf{(I1)}] $\rank \mathcal{A}_p = n$. 
\item[\textbf{(I2)}] $\rank \Gamma(t) = np$ where $\Gamma(t) = \E( X_{t + 1} X_{t + 1}')$. 
%\textcolor{red}{XXX discuss}

\item[\textbf{(I3)}] The eigenvalues of $\mathcal{A}$ are of the form: $(1, ..., 1, \lambda_{n-r+1}, ..., \lambda_{np})$ where $|\lambda_{j}| < 1$ and $\lambda_i \neq \lambda_j$ for $i \neq j$ with $i,j = n-r+1, ..., np$.
\item[\textbf{(I4)}] For non-unit eigenvalues $\lambda_i \neq \lambda_j$ it follows that $\lambda_i^N \neq \lambda_j^N$. 
\item[\textbf{(I5)}] For all eigenvalues $\lambda$ of $\mathcal{A}$ smaller than one, it holds that $1 + \lambda + \cdots + \lambda^N \neq 0$ or $v_1$ consisting of the first $n$ elements of the eigenvector $v$ of $\mathcal{A}$ corresponding to $\lambda$, it holds that $\beta'v_1 \neq 0$.
\item[\textbf{(I6)}] The pair $(S_{n_f}^{(1)}, A)$ is observable, where $S_{n_f}^{(1)}$ is defined in equations (\ref{eq: hf ss obs_FD_2}), (\ref{eq: C_b structural matrix definition}) and $A$ is defined in equation (\ref{eq: hf ss trans_FD}). 
% \textcolor{red}{XXX discuss, do the dimensions fit?}
\end{itemize}	
\end{assumption}
%
%
%Note that \textbf{(I5)}, \textbf{(I6)} are listed here for the sake of %completeness but not used until Theorem \ref{thm: identifiability in steps}.\\
%
%\end{assumption}
%
%
Assumption 
(I2) already follows from $\Sigma>0$.
Recall that $\Theta_{I} = \left(G \cap \Theta_1 \right) \times \Theta_2$. 
%
%
%Compared to the stable case, considered in \cite{AndersonEtAl2016a}, \cite{AndersonEtAl2016b}, where the stability condition leaves an open subset of the embedding Euclidean space, condition \ref{ass: Theta} does not have this property. Thus we need the following Theorem: 
%
%
%
These assumptions are similar to the stationary case considered in \citet{Felsenstein2014, AndersonEtAl2016a, AndersonEtAl2016b}. 
%where we had the following setting: 
There, the stability condition defines an open set $\Theta' \subset \mathbb R^{n^2 p }$. 
We also have a corresponding set $G'$ defining the identifiability conditions for the stationary case, which is generic in $\mathbb R^{n^2 p }$.
%
%is open and the set defined by the identifiability conditions $G' \subset %\Theta'$ is generic in $\mathbb R^{n^2 p}$. 
Then, the intersection $\Theta' \cap G'$ is generic in $\Theta '$. 
However, in the integrated case, where unit roots occur, the situation is more intricate since neither $\Theta_1$ nor $G$ is open in $\mathbb R^{n^2 p}$. This follows from the fact that for a process with $n-r$ common trends, the $n-r$ eigenvalues of $\mathcal A$ in (\ref{eq: hf ss trans}) are equal to one [note that the eigenvalues of $\mathcal{A}$ are the reciprocals of the zeros of $a(z)$]. The following Theorem \ref{thm: topo thm} implies that the identifiability conditions are generically fulfilled in $\Theta$:
\begin{theorem}\label{thm: topo thm}
Let $\Theta_1$ be endowed with the Euclidean norm $d$. The set $\Theta_1 \cap G$ is open and dense in $\Theta_1$.
\end{theorem}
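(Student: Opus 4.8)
The plan is to transport the entire question into the VECM parameterisation, where the unit roots no longer move, and then run a genericity argument based on real-analytic functions. First I would use that, under Assumption~\ref{ass: Theta}, $\psi:\Theta_{VECM,1}\to\Theta_1$ is a homeomorphism; since openness and density are purely topological, it is enough to prove that $\psi^{-1}(G)\cap\Theta_{VECM,1}$ is open and dense in $\Theta_{VECM,1}$. The decisive gain is that $\Theta_{VECM,1}$ is \emph{open} in $\mathbb R^{d}$: conditions (C1)--(C3) are open (full column rank of $\alpha,\beta$, a nonvanishing determinant in (C2), strict stability of the remaining roots in (C3)), while the $n-r$ unit roots are pinned at $z=1$ by the reduced-rank restriction $\Pi=\alpha\beta'$ together with (C2), so they cannot be dislodged by small perturbations. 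This is precisely what fails in the $\mathcal A$-coordinates, where $\Theta_1$ sits in $\mathbb R^{n^2p}$ as the codimension-$(n-r)^2$ locus carrying exactly $n-r$ unit eigenvalues; pushing through $\psi$ turns this constrained set into a genuine open set, which is the whole purpose of the reduction. Since $\psi$ is polynomial (quadratic) in $(\alpha,\beta,\Phi_1,\dots,\Phi_{p-1})$, every polynomial or real-analytic description of (I1)--(I6) in the $\mathcal A_j$ pulls back to a real-analytic description on $\Theta_{VECM,1}$.

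Next I would verify openness conditionwise on $U:=\Theta_{VECM,1}$. Two conditions come for free: (I2) holds throughout $U$ because $\Sigma_\nu>0$ makes the system controllable and hence $\Gamma(t)$ of full rank, and (I5) holds automatically, since every root of $1+z+\cdots+z^{N}$ lies on the unit circle whereas the non-unit eigenvalues satisfy $|\lambda|<1$, so its first alternative can never fail. The rest are open: (I1) is $\det\mathcal A_p=\det(-\Phi_{p-1})\neq0$; (I3) requires the $np-n+r$ non-unit eigenvalues to be simple and of modulus below one, which is stable under perturbation because the characteristic polynomial factors as $(z-1)^{n-r}q(z)$ with $q(1)\neq0$ and roots depend continuously on the parameters; (I4) and the eigenvector content of (I6) are continuous on the open set where (I3) already holds, where eigenvectors may be chosen locally analytically; and observability in (I6) is open because full column rank of the observability matrix is an open condition.

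For density I would show that the complement $B\subset U$, on which some non-trivial condition (I1), (I3), (I4) or (I6) fails, is nowhere dense. Each failure set lies in the zero set of a real-analytic function on $U$: $\det\Phi_{p-1}$ for (I1); the discriminant $\mathrm{disc}(q)$ of the stable factor for (I3); a discriminant of the polynomial whose roots are the $\lambda_i^{N}$ for (I4); and, by the BPH test, a minor of the observability matrix of $(S_{n_f}^{(1)},\mathcal A)$ -- equivalently $S_{n_f}^{(1)}v\neq0$ for every eigenvector $v$ -- for (I6). A non-trivial real-analytic function has nowhere dense zero set and a finite union of such sets is again nowhere dense, so it remains only to check that none of these functions vanishes identically, which I would do by exhibiting explicit witnesses: a VECM with $\Phi_{p-1}$ nonsingular and distinct, well-separated stable eigenvalues whose $N$-th powers are again distinct and whose eigenvectors are not annihilated by $S_{n_f}^{(1)}$.

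The main obstacle is twofold. First, the identity-theorem step needs each analytic function to be non-vanishing on \emph{every} connected component of $U$, not merely somewhere; so one must either establish connectedness of $\Theta_{VECM,1}$ -- delicate, since (C2) excises a hypersurface -- or, more robustly, produce for every $\theta_0\in U$ a good point inside an arbitrarily small ball around it via a local perturbation of the $\Phi_j$. Second, and most substantively, (I6) is the genuinely hard condition: $S_{n_f}^{(1)}$ is a fixed, highly structured output matrix coupling the fast coordinates, the differencing at lag $N$ and the cointegration block $\beta'$, so proving that no eigenvector of $\mathcal A$ -- including the $n-r$ unit eigenvectors tied to $\beta$ -- lies in $\ker S_{n_f}^{(1)}$ for some admissible parameter is a structural computation rather than a soft argument, and it is here that the interplay with the fixed cointegrating directions must be controlled. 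Everything else reduces to the routine ``nonvanishing real-analytic function'' mechanism once the reduction through $\psi$ has rendered the parameter set open.
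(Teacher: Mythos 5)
Your overall architecture is a genuinely different route from the paper's. The paper works directly in the AR/companion coordinates and proves density by an explicit local construction: Jordan-decompose $\mathcal A=Q\Lambda Q^{-1}$, perturb only the $np-(n-r)$ non-unit eigenvalues by arbitrarily small (conjugate-symmetric, hence reality-preserving) amounts while pinning the $n-r$ unit eigenvalues, restore the companion structure by a similarity transformation $T$, and verify by continuity that (C1)--(C3) survive; openness is then the continuity of the eigenvalue map plus continuity of the (C2)-determinant. Your reduction through $\psi$ to $\Theta_{VECM,1}$, where the reduced-rank factorisation $\Pi=\alpha\beta'$ together with (C2) pins the $n-r$ unit roots so that the parameter set becomes open in $\mathbb R^{d}$, is sound and in fact makes the openness half cleaner than the paper's; your observation that the first alternative of (I5) is automatic for $|\lambda|<1$ (since the roots of $1+z+\cdots+z^{N}$ are the non-unit $(N{+}1)$-st roots of unity, all of modulus one) is correct and sharper than anything in the paper.

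The genuine gap is in the density half, and you have flagged it yourself without closing it. The ``nowhere dense zero set of a real-analytic function'' mechanism requires the relevant function to be non-identically-zero on \emph{every} connected component of $\Theta_{VECM,1}$ (or, equivalently, a witness in every small ball around every $\theta_0$), and you establish neither: connectedness of $\Theta_{VECM,1}$ is not addressed (and is doubtful, since (C2) removes a hypersurface and (C1), (C3) impose further constraints), and for (I6) you explicitly concede that showing no eigenvector of $\mathcal A$ lies in $\ker S_{n_f}^{(1)}$ for some nearby admissible parameter is an unresolved ``structural computation.'' A single global witness (a VECM with nonsingular $\Phi_{p-1}$, well-separated stable eigenvalues, etc.) does not suffice unless connectedness is proved. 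This is precisely the point at which the paper's proof does the work your proposal defers: its eigenvalue-perturbation map $f_{\theta_0}$ is continuous at $\theta_0$ with $f_{\theta_0}(0)=\theta_0$, so it manufactures a point of $G\cap\Theta_1$ inside \emph{every} $\varepsilon$-ball around \emph{every} non-identifiable $\theta_0$, with no appeal to analyticity or connectedness. Until you carry out your fallback (a local perturbation argument near an arbitrary $\theta_0$ that simultaneously repairs (I1), (I3), (I4) \emph{and} the eigenvector condition in (I6)), the density claim is not established.
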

Since genericity is a topological property, it also holds for the homeomorphic parameterspace corresponding the vector error correction representation in (\ref{eq: VECM hf}) defined by Assumption \ref{ass: Theta}. 
%
%
%
%
%
%
%
%
%
%
%
%
%
%%
%
%
%
%

%\section{Identifiability-Proof by Using Cointegration Vector}
\section{Generic Identifiability}
\label{sec: super idea}
%
%
% Maybe we should include \citet{Miller2016} and \citet{MillerWang2016}.
% Internal Remark:
% \citet{CHAMBERS2020140}: Assumption 1(a) corresponds to Beverige-Nelson decomposition -- see \citet{ScherrerDeistler2018book}[page~108], 
% Assumption 1(b) is an assumption on the noise term, mainly to apply a functional central limit theorem. Assumption~2 on the relationship between $m$, the number of Fourier frequencies is $2m+1$, and the time series dimension $T$.
In this section, we first define a canonical state-space representation for the blocked process running on $t \in N \mathbb Z$. We prove that this representation is minimal under our identifiability conditions. Then under an additional assumption on the lag order $p$, we show that the high frequency parameters are generically identifiabile. The proofs of minimality and identifiability make use of the canonical representation.\\
We follow \citet{HansenJohansen1999} and obtain from (\ref{eq: VECM hf}) the following state-space system for $\beta^{\prime} y_t$ and first differences of $y_t$, that is 
$\Delta y_t = y_t - y_{t-1}$. Then,
\begin{align}
\underbrace{\begin{pmatrix}
	\beta^{\prime} y_t \\
	\Delta y_t \\
	\vdots \\
	\Delta y_{t-p+2}
	\end{pmatrix}}_{\underline{x}_{t+1} \in \mathbb{R}^{r + n(p-1)}} &= 
\underbrace{\begin{pmatrix}
	\beta^{\prime} \alpha+ I_r &  \beta^{\prime} \Phi_1 & \cdots & \cdots & \beta^{\prime} \Phi_{p-1}   \\
	\alpha &   \Phi_1 & \cdots &\cdots &  \Phi_{p-1} \\
	0_{n \times r} & I_n &  & & 0_{n \times n} \\
	\vdots & & \ddots & &  \vdots \\
	& & & I_n & 0  
	\end{pmatrix} }_\text{$A \in \mathbb{R}^{r + n(p-1) \times r + n(p-1)} $}
\underbrace{\begin{pmatrix}
	\beta^{\prime} y_{t-1} \\
	\Delta y_{t-1} \\
	\vdots  \\
	\Delta y_{t-p+1}
	\end{pmatrix}}_{\underline{x}_t} + 
\underbrace{\begin{pmatrix}
	\beta^{\prime}  \\
	I_n \\
	0 \\
	\vdots \\
	0
	\end{pmatrix}}_\text{$B$}
\nu_t \label{eq: hf ss trans_FD} \\
%eq: hf ss obs_FD
%
%
%
\underbrace{\begin{pmatrix}
	\beta^{\prime} y_t \\
	\Delta y_{t} \\
	\end{pmatrix}}_{\in \mathbb{R}^{r+n}} 
&
= 
\underbrace{\begin{pmatrix}%
	\beta^{\prime} \alpha+ I_r &  \beta^{\prime} \Phi_1 & \cdots & \cdots & \beta^{\prime} \Phi_{p-1}   \\
	\alpha &   \Phi_1 & \cdots &\cdots &  \Phi_{p-1}   \\
	\end{pmatrix} }_{C \in \mathbb{R}^{r + n \times r + n(p-1)}}
\begin{pmatrix}
	\beta^{\prime} y_{t-1} \\
	\Delta y_{t-1} \\
	\vdots  \\
	\Delta y_{t-p+1}
	\end{pmatrix} + 
\underbrace{\begin{pmatrix}
	\beta^{\prime}  \\
	I_n \\
	\end{pmatrix}}_\text{$D \in \mathbb{R}^{r+n \times n}$} 
\nu_t \ . \label{eq: hf ss obs_FD}
\end{align}  
By $m := r + n(p-1)$, we denote the dimension of $\underline{x}_t$. As we will see later, given that our identifiability conditions hold $m$ is also the McMillan degree of $(\tilde y_t)_{t \in N\mathbb Z}$.
\\
According to the observational scheme, the slow variables $y_t^s$ are observed only every $N$-th period. We derive state-space representations for the processes (\ref{eq:tildeyt_FDstock}) and (\ref{eq:tildeyt_FD_flow}) running on $t \in N \mathbb Z$:\\
\textbf{1. Case: Stock Variables:} We define a new state vector $x_{t+1}$ in the following way, with the condition that $p \geq N + 2$: 
%
%XXX \textcolor{blue}{Claim: $p \geq N+2$ for the stock case XXX}
% see poldi 10.9.2021
%
{\footnotesize
	\begin{align}
	\underbrace{\begin{pmatrix}
		u_t^\mathcal{S} \\
		\Delta_N y_t \\
		\vdots \\
		\Delta_N y_{t - N + 1} \\
		\Delta y_{t - N} \\
		\vdots \\
		\Delta y_{t-p+2}
		\end{pmatrix}}_{x_{t+1} \in \mathbb{R}^{r + n(p-1)}  } &= 
	\underbrace{\begin{pmatrix}
		I_r & 0 & 0 & \dots & &  \\
		0   & I_n &I_n & \dots &I_n & 0 & \dots & \\ 
		0   & 0 &I_n & \dots & \dots & I_n & 0 &\dots \\
		\vdots & &  & \ddots & &  & \ddots & &  \\
		\vdots   &  & &  & I_n & \dots & \dots & I_n & 0 \dots \\	
		\vdots   &  & &  \dots & 0 & I_{n} & 0 & \dots \\ 
		\vdots   &  & &   &  & & \ddots &  &  \\
		\end{pmatrix} }_\text{$c \in \mathbb{R}^{r + n(p-1) \times r + n(p-1)} $}
\begin{pmatrix}
		u_t^\mathcal{S} \\
		\Delta y_t \\
		\vdots  \\
		\Delta y_{t-p+2}
		\end{pmatrix}  \ . \label{eq:cmatrix_FD}
	\end{align}  }
By iterating the system (\ref{eq: hf ss trans_FD}), (\ref{eq: hf ss obs_FD}), we get the non-miniphase system (in the sense that the transfer-function is not causally invertible as the input dimension exceeds ($Nn$) the output dimension ($\tilde n$), noting that $\Sigma_\nu > 0$):
{\footnotesize 
\begin{align}
%
% transition equation nonminiphase system 
x_{t+1} & = 
\underbrace{c\underbrace{ A^N }_{:= A_b} c^{-1}}_{A_{b, c}} x_{t-N+1} + 
\underbrace{c B_{b}}_{B_{b, c}}  \nu_t^b  \label{eq: hf ss trans_FD_2}\\
%
%
%
%
% observation equation 
%
\tilde y_{t} & = 
\underbrace{\underbrace{
S_{\zeta}
	A^N}_{:= C_b} c^{-1}}_{:= C_{b, c}}
x_{t-N+1} + D_b
\nu_t^b \ , \label{eq: hf ss obs_FD_2} \\
& \mbox{where} \nonumber \\
& S_{\zeta} := 
\begin{matrix}
 \quad  \ \ \ (r \times m) \{ \\
S_{n_f}^{(1)}   \ (n_f \times m) \{ \\
S_{n_s}^{(1)}   \ (n_s \times m) \{ \\
S_{n_f}^{(2)}  \  (n_f \times m) \{ \\
   \vdots \\
 S_{n_f}^{(N)}   \ (n_f \times m) \{
\end{matrix}
\begin{pmatrix}%
	I_r & 0 & \cdots &   &  &  & 0  \ \\
	0 & \left( I_{n_f} , 0 \right) & \cdots & \left( I_{n_f} , 0 \right)  & 0  & \cdots  & 0\\
	0 & \left(  0 , I_{n_s} \right) & \cdots & \left( 0 , I_{n_s} \right)  & 0  & \cdots  & 0\\
	0 & 0 & \left( I_{n_f} , 0  \right) & \cdots  &  \left( I_{n_f} , 0  \right) & & \\
	& & & \ddots & & \ddots & \vdots \\
	&  &  & \left( I_{n_f} , 0  \right) & \cdots &  \left( I_{n_f} , 0  \right)  & 0 \\
	\end{pmatrix}  \ , \nonumber \\[1.5em]
& C_b = \begin{pmatrix}
\begin{pmatrix}I_r & 0 & \cdots &  0 \end{pmatrix} A^N\\
  S_{n_f}^{(1)}      A^N\\
 S_{n_s}^{(1)}       A^N\\
 S_{n_f}^{(2)}  A^N\\
   S_{n_f}^{(3)} A^N\\
\vdots  \\
S_{n_f}^{(N)}  A^N \\
\end{pmatrix} 
= 
\begin{pmatrix}
\begin{pmatrix}I_r & 0 & \cdots & 0 \end{pmatrix} A^N\\
  S_{n_f}^{(1)}      A^N\\
 S_{n_s}^{(1)}       A^N\\
 S_{n_f}^{(1)}  A^{N - 1}\\
   S_{n_f}^{(1)} A^{N - 2}\\
\vdots  \\
S_{n_f}^{(1)}  A  \\
\end{pmatrix} \ , \ \
\nu_t^b :=
\begin{pmatrix}
\nu_{t} \\
\vdots \\ 
\vdots \\
\nu_{t-N+1} 
\end{pmatrix}
\in \mathbb{R}^{Nn} .
\label{eq: C_b structural matrix definition}
\end{align}
% \label{eq: hf ss obs_FD_2} \label{eq: C_b structural matrix definition}
%
}
The matrices $B_{b, c} \in \mathbb{R}^{r+n(p-1) \times Nn} $ and
$D_b \in \mathbb{R}^{r+n \times Nn}$ are obtained from $B$ and $A$.\\
\textbf{2. Case: Flow Variables:}
Next, we obtain the state vector $x_{t+1}$ for the flow case. Note that $y_{t-j} = y_t - \sum_{\ell=1}^{j} \Delta y_{t-\ell}$, such that
$\sum_{j=0}^{N-1} y_{t-j} = 
\sum_{j=0}^{N-1} \left( y_t - \sum_{\ell=1}^{j} \Delta y_{t-\ell} \right)
= N y_t - (N-1) \Delta y_{t-1} - \cdots - \Delta y_{t-N+1}
$. Analogously to equation (\ref{eq:cmatrix_FD}), this yields for $p \geq 2N+1$ that
%
% XXX \textcolor{blue}{Claim: $p \geq 2N + (N-1) +1 = 3N$ for the flow case XXX}
% see Poldi's calculations on 10.9.2021
%
%
%
{\scriptsize
	\begin{align}
	\underbrace{\begin{pmatrix}
		\beta^{\prime} \sum_{j=0}^{N-1} y_{t-j} \\
		\Delta_N^\Sigma y_t \\
		\Delta_N y_{t - 1} \\
		\vdots \\
		\Delta_N y_{t - N + 1} \\
		\Delta y_{t - N} \\
		\vdots \\
		\Delta y_{t-p+2}
		\end{pmatrix}}_{x_{t+1} \in \mathbb{R}^{r + n(p-1)} } &=
	\underbrace{\begin{pmatrix}
		N I_r & -(N-1) \beta^{\prime}  & -(N-2) \beta^{\prime}  & \cdots & 
		-\beta^{\prime} & 0 & \cdots &  \\
		0      & I_n & \cdots  & I_n & - I_n & \cdots & -I_n & 0  \\ 
		0      & 0   & I_n     & \cdots & I_n & 0 & \cdots  & \\
		\vdots &     &         & \ddots & &  \ddots & &   \\
		\vdots &     & &       & I_n & \cdots &  I_n & 0  \\	
		\vdots &     & &\cdots & 0 & I_{n} & 0 & \cdots \\ 
		\vdots &     & &  & &    & \ddots &   \\
		\end{pmatrix} }_\text{$c \in \mathbb{R}^{r + n(p-1) \times r + n(p-1)} $}
       \begin{pmatrix}
		u_{t}^\mathcal{S} \\
		\Delta y_{t} \\
		\vdots  \\
		\Delta y_{t-p+2}
		\end{pmatrix} \ . \label{eq:cmatrix_FD_flow}
	\end{align}  }
%
%
%
%
%
%Note that the matrix $c$ contains the following three blocks: (i) The submatrix $I_r$ %in the north-west. (ii) For the row-blocks starting with row $r+1$ and ending with row %$r+nN$ we have a ``rhomb of length $Nn$'' containing the identity matrix $I_n$. (iii) %A block in the south-east equal to the identity matrix $I_{(p-1-N)n}$. The matrix $c$ %is well-defined for $p-1-N \geq 0$. Hence, $p-1 \geq N$ is assumed in the following. 
%
%
\noindent We use the same notation for $\tilde y_t$, $x_t$, $c$ for both cases. With this notation, we obtain the following state-space representation for blocked process in the flow case:
{\footnotesize
\begin{align}
x_{t+1}
& = \underbrace{c A_b c^{-1} }_{A_{b,c}}  x_{t-N+1}
+ \underbrace{c B_b}_{B_{b,c}}  
\nu^b_t \label{eq: mixedf ss trans_FD_6} \\
%
%
% observation equation 
%
\tilde{y}_t  
& = 
\underbrace{\underbrace{
S_{\zeta}	A^N
}_{C_{b} \in \mathbb{R}^{\tilde{n} \times m } } c^{-1} }_{C_{b, c}
\in \mathbb{R}^{\tilde{n} \times m }
}
x_{t-N+1}  +  D_{b,c}  
\nu_t^b \ , \label{eq: mixedf ss obs_FD7} \\
& \text{ where } \nonumber \\
& S_{\zeta} =
\begin{pmatrix}
		N I_r & -(N-1) \beta^{\prime}  & -(N-2) \beta^{\prime}  & \cdots & 
		-\beta^{\prime} & 0 & \cdots & & \\
		0   & I_n & \cdots & I_n & - I_n & \cdots & -I_n & 0 &  \cdots  \\ 
		0   & 0 & (I_{n_f}, 0) & \cdots & (I_{n_f}, 0) & 0 & \cdots  &  \\ 
		&&&& \ddots  &&&& \\
		0 &   & 0 & (I_{n_f}, 0) & \cdots & (I_{n_f}, 0) & 0 & \cdots  
		\end{pmatrix} . \nonumber 
\end{align} 
}
The matrix $D_{b,c} \in \mathbb{R}^{\tilde{n} \times Nn} $ follows from $D_b$, the matrix $c$ and the selection of the corresponding rows resulting in $\tilde{y}_t$. \\
%
%\begin{eqnarray}
%u_t :=\begin{cases}
%u_t^\mathcal{S} &, \text{ for slow stock variables} , \\
%u_t^\mathcal{F} &, \text{ for slow flow variables}.
%\end{cases}
% \end{eqnarray}
%
%
%
\textbf{3. Case: Mixed Case}: Consider the case where we have slow stock as well as slow flow variables:
%\begin{align*}
%    \begin{pmatrix}
    %      y_t^s \\
 %         c_1 y_t^s + \cdots + c_N y_{t-N+1}^s 
 %   \end{pmatrix} \ .
%\end{align*}
For example, if $\left( y_t \right)$ is a three-dimensional process, where 
$n_f=1$, $n_s=2$, $N=2$, $c_1=I_2$, and $c_2 = \begin{pmatrix}
      0 & 0 \\
      0 & 1
\end{pmatrix}$ in equation~(\ref{eq: def w_t}).
Then $\beta^{\prime} 
\left(y_t^{f \prime},w_t^{\prime} \right)^{\prime}
$ is (in general) not stationary. 
%Now the problem is that $\beta' w_t$ does in general not need to be% %stationary:
%
However, in special cases, such as separate cointegrating relationships among the slow flow variables only, or among the slow stock and fast variables only, etc. we can proceed similarly to the flow case. In the following we only consider the stock or the flow case. \\
%
%
%For these three cases the matrix $c$ has full rank, its inverse matrix is $c^{-1}$, where we also apply the following abbreviation%
%
%
%{
%	\begin{align}
%	c^{-1} =
%	\begin{pmatrix}
%	c_{\beta \beta}^{-1} & c_{\beta 1}^{-1}  & c_{\beta 2}^{-1}  & \dots & 
%	c_{\beta N-1}^{-1} & 0 & \dots
%	&  \\
%	0   & c_{11}^{-1} & c_{12}^{-1} & \dots & \dots &c_{1N}^{-1} & c_{1, N+1}^{-1} & \dots & \\ 
%	\vdots & &  & \ddots & &  & & \ddots & &  \\
%	\end{pmatrix}  \ .	\end{align}  }
%
%Note that here, we write e.g. $c_{\beta \beta}^{-1}$ as the respective submatrix of $c^{-1}$ (which does not to be the inverse of $c_{\beta\beta}$). 
%
%
%
%
The problem with the systems considered above is that the inputs $\nu^b_t$ are not the innovations of $\tilde{y}_t$. However, from the stable miniphase spectral factorisation, we only obtain transfer functions corresponding to systems in innovation form \citep[see, e.g.,][Chapter~7]{ScherrerDeistler2018book}. The following Theorem \ref{thm: th1aggregated} is the first step for obtaining a canonical state-space representation for the blocked process. A minimal state-space representation is called ``canonical'' if its parameters are uniquely determined from the transfer function. We introduce the following notation for specific subspaces of $L^2(\Omega, \mathcal A, P)$, the space of square integrable random variables on the underlying probability space $(\Omega, \mathcal A, P)$:
\begin{align*}
\mathbb{H}(y) &:= \overline{\spargel}(y_{it} \mid t \in \mathbb{Z}, i = 1,...,n) \\
\mathbb{H}_t(y) &:= \overline{\spargel}(y_{is} \mid s \leq t, i = 1, ..., n) \\
{}_{N}\mathbb{H}(y) &:= \overline{\spargel}(y_{it} \mid t \in N\mathbb{Z}, i = 1, ..., n) \\
{}_{N}\mathbb{H}_t(y) &:= \overline{\spargel}(y_{is} \mid s \leq t \ \mbox{and} \ s \in N\mathbb{Z}, i = 1, ..., n) \ , 
\end{align*}
where $\overline{\spargel}(\cdot)$ denotes the closed span 
and $\proj(v \mid U)$ the projection of $v$ on a closed subspace $U$ of $L^2$.
%
%Thus, from a realisation of a factor of the spectrum of 
%$\tilde{y}_t$ we do not obtain a system which is equivalent to systems (\ref{eq:cmatrix_FD}) %and (\ref{eq:cmatrix_FD_flow}). The following theorem is on the existence of a system in prediction error form: 
%
%
% XXX check structure of $c^{-1}$. XXX
%
%
%
%
%
%
%
%
\begin{theorem}\label{thm: th1aggregated}
Suppose that \textcolor{red}{Assumption \ref{ass: Theta}} holds. Consider the blocked process $(\tilde y_t)_{t \in N \mathbb Z }$ 
%, $\tilde y_t \in \mathbb{R}^{\tilde{n}}$,
and set
\begin{align*}
	  s_{t-N+1} &:= \proj(x_{t-N+1} \mid {}_N \mathbb{H}_{t-N}(\tilde y)) \\
	  \tilde \nu_t  &:= \tilde y_t - \proj(\tilde y_t \mid {}_N \mathbb{H}_{t-N}(\tilde y)) \ .
\end{align*}
	Then there exists $\tilde B_c \in \mathbb R^{np \times \tilde n}$ such that 
	\begin{align}
	    s_{t+1} &= A_{b,c}s_{t-N+1} + \tilde B_c \tilde \nu_t  \label{eq: beta stock inno trans}\\ 
    	\tilde y_t &=  C_{b,c} s_{t-N+1} + \tilde \nu_t \label{eq: beta stock inno obs}
	\end{align}
	is a miniphase and stable state-space representation of $( \tilde{y}_t )_{t \in N\mathbb Z}$, i.e. it is in innovation form.
\end{theorem}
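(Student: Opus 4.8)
\section*{Proof proposal}

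The plan is to recognise the system (\ref{eq: beta stock inno trans})--(\ref{eq: beta stock inno obs}) as the steady-state Kalman filter of the non-innovation block system (\ref{eq: hf ss trans_FD_2})--(\ref{eq: hf ss obs_FD_2}), and to verify its two defining equations by elementary Hilbert-space projection arguments. Throughout I would rely on two structural facts that follow from \textcolor{red}{Assumption \ref{ass: Theta}} together with [Assumptions 1,2] of \citet{chambers2020frequency}. First, the matrix $A$ of (\ref{eq: hf ss trans_FD}) is stable, since it is the companion matrix generating the \emph{stationary} state $\underline x_t$ in the variables $\beta'y_t$ and $\Delta y_t, \dots, \Delta y_{t-p+2}$ (Johansen's representation theorem, \citet{Johansen1995}); consequently $A_b = A^N$ and $A_{b,c} = cA_bc^{-1}$ are stable as well. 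Second, each stationary transformation entering $\tilde y_t$ admits a causal MA$(\infty)$ representation in $(\nu_t)$, so that $\tilde y_s \in \overline{\spargel}(\nu_\tau \mid \tau \le s)$ for every block time $s$.

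First I would establish the observation equation. Projecting $\tilde y_t = C_{b,c}x_{t-N+1} + D_b\nu^b_t$ onto ${}_N\mathbb H_{t-N}(\tilde y)$ and noting that the block innovation $\nu^b_t = (\nu_t',\dots,\nu_{t-N+1}')'$ collects only innovations with index strictly larger than $t-N$, whiteness of $(\nu_t)$ gives $\proj(\nu^b_t \mid {}_N\mathbb H_{t-N}(\tilde y)) = 0$. Hence $\proj(\tilde y_t \mid {}_N\mathbb H_{t-N}(\tilde y)) = C_{b,c}s_{t-N+1}$, which is exactly (\ref{eq: beta stock inno obs}) with residual $\tilde\nu_t$. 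By construction $(\tilde\nu_t)_{t \in N\mathbb Z}$ is orthogonal to the strict block past and is therefore white noise, and since $\Sigma_\nu > 0$ renders the block process regular and full rank, $\tilde\nu_t$ is genuinely its innovation sequence. For the transition equation I would use the orthogonal decomposition ${}_N\mathbb H_t(\tilde y) = {}_N\mathbb H_{t-N}(\tilde y) \oplus \overline{\spargel}(\tilde\nu_t)$. Projecting $x_{t+1} = A_{b,c}x_{t-N+1} + B_{b,c}\nu^b_t$ onto ${}_N\mathbb H_{t-N}(\tilde y)$ again annihilates the noise term and returns $A_{b,c}s_{t-N+1}$, while the component along $\overline{\spargel}(\tilde\nu_t)$ equals $\tilde B_c\tilde\nu_t$ with $\tilde B_c := \E(x_{t+1}\tilde\nu_t')[\E(\tilde\nu_t\tilde\nu_t')]^{\dagger}$, the Moore--Penrose inverse accommodating a possibly singular innovation variance. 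Adding the two components yields $s_{t+1} = A_{b,c}s_{t-N+1} + \tilde B_c\tilde\nu_t$, which is (\ref{eq: beta stock inno trans}) and proves existence of $\tilde B_c$.

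It remains to show that the representation is stable and miniphase, i.e.\ in innovation form. Stability is immediate, as $A_{b,c}$ is similar to $A^N$ and $A$ is stable. For miniphase I would form the whitening filter by substituting $\tilde\nu_t = \tilde y_t - C_{b,c}s_{t-N+1}$ into the transition equation, obtaining the recursion with state matrix $A_{b,c} - \tilde B_c C_{b,c}$; minimum phase is equivalent to this matrix being stable. I expect this to be the main obstacle, since it cannot be read off the projection identities alone. I would resolve it by identifying $s_{t-N+1}$ with the one-step block-ahead Kalman predictor of the block system (with process and measurement noise correlated through the common factor $\nu^b_t$) and invoking the standard result that the steady-state filter has a stable closed-loop matrix: because $A_{b,c}$ is already stable, the pair $(C_{b,c}, A_{b,c})$ is detectable, the associated algebraic Riccati equation possesses a stabilising solution, and hence $A_{b,c} - \tilde B_c C_{b,c}$ is stable (see \citet{HannanAndDeistler2012}, ch.\ 2, and the spectral-factorisation account in \citet{ScherrerDeistler2018book}, ch.\ 6--7). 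Regularity and full rank then guarantee that the minimum-phase spectral factor is unique and coincides with the transfer function of (\ref{eq: beta stock inno trans})--(\ref{eq: beta stock inno obs}), establishing the innovation form. The flow case is handled identically, only the matrices $c$, $S_\zeta$, $C_{b,c}$ and $D_{b,c}$ changing, so the same argument applies verbatim.
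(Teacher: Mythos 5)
Your construction is correct and, at its core, coincides with the paper's (very terse) proof: the paper simply states that the result ``follows from transforming a state-space system into prediction error form'' (citing Hannan--Deistler, ch.~1, and Gersing--Deistler 2021) together with stability of $A$ from Johansen's Theorem~4.2, and your projection arguments for the observation and transition equations are exactly the derivation hiding behind that citation. Where you genuinely diverge is the miniphase step. The paper's argument is shorter and more direct: since $A_{b,c}$ is stable and the inputs $\tilde\nu_t$ are, by construction, the innovations of $(\tilde y_t)_{t\in N\mathbb Z}$, the causal power-series expansion of the transfer function of (\ref{eq: beta stock inno trans})--(\ref{eq: beta stock inno obs}) \emph{is} the Wold representation, and the Wold transfer function is miniphase by uniqueness --- no Riccati equation, detectability argument, or stabilising solution is needed. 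Your steady-state-Kalman/ARE route reaches the same conclusion but is heavier than necessary; the obstacle you anticipate (that miniphase ``cannot be read off the projection identities alone'') dissolves once one notes that those identities identify $\tilde\nu_t$ as the innovation of a stable system. One small caveat on your side: the blocked process is \emph{not} full rank --- in the stock case $\beta'y_t = \beta'y_{t-N} + \beta'\Delta_N y_t$ with $\beta'y_{t-N}\in {}_N\mathbb H_{t-N}(\tilde y)$, so the first $r$ rows of $\tilde\nu_t$ equal $\beta'$ times the next $n$ rows and $\E\tilde\nu_t\tilde\nu_t'$ is singular. Your Moore--Penrose inverse for $\tilde B_c$ is therefore genuinely needed, but the appeal to ``regularity and full rank'' for uniqueness of the minimum-phase factor should be replaced by uniqueness of the Wold representation itself.
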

We call the representation in (\ref{eq: beta stock inno trans}), (\ref{eq: beta stock inno obs}) \textit{canonical projection form} (CPF) of $\tilde{y}_t$. Note that the CPF provides an algorithm for computing the transfer function $\tilde k(\tilde z)$ of $(\tilde y_t)_{t \in N \mathbb Z}$ which corresponds to the Wold representation, where $\tilde z := z^N$.\\ 
Next we show that the system (\ref{eq: beta stock inno trans}) and (\ref{eq: beta stock inno obs}) is observable and controllable and therefore minimal \citep[see, e.g.,][Theorem~2.3.3]{HannanAndDeistler2012} for all $\theta \in \Theta_I$. 
\begin{theorem}\label{thm: Minimal}
For $\theta \in \Theta_I$, the system (\ref{eq: beta stock inno trans}) and (\ref{eq: beta stock inno obs}) is minimal. 
\end{theorem}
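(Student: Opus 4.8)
The plan is to invoke the standard equivalence \citep[][Theorem~2.3.3]{HannanAndDeistler2012} that a state-space system is minimal if and only if it is both observable and controllable, and to verify each property for the pairs $(C_{b,c}, A_{b,c})$ and $(A_{b,c}, \tilde B_c)$ respectively. Since $A_{b,c} = c A^N c^{-1}$ and $C_{b,c} = C_b c^{-1} = S_\zeta A^N c^{-1}$ with $c$ invertible, observability of $(C_{b,c}, A_{b,c})$ is equivalent to observability of $(C_b, A_b) = (S_\zeta A^N, A^N)$, so throughout I would work with the latter pair and with the $m$-dimensional transition matrix $A$. Recall that, by the construction of the VECM state-space following \citet{HansenJohansen1999}, the $m$ eigenvalues of $A$ are exactly the non-unit (stable) eigenvalues of $\mathcal A$.

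For observability I would use the Popov--Belevitch--Hautus eigenvector test. By (I1) $\mathcal A_p$ is nonsingular, hence $\mathcal A$ and $A$ are nonsingular and every eigenvalue of $A$ is nonzero; by (I3) these $m$ eigenvalues are pairwise distinct and lie strictly inside the unit disc, and by (I4) $\lambda_i^N \neq \lambda_j^N$ for $i \neq j$, so $A^N$ has $m$ distinct nonzero eigenvalues and is diagonalisable. It therefore suffices to check, for each eigenvector $v$ of $A$ (equivalently of $A^N$, with eigenvalue $\lambda^N$), that $C_b v = S_\zeta A^N v = \lambda^N S_\zeta v \neq 0$; since $\lambda \neq 0$ this reduces to $S_\zeta v \neq 0$. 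I would establish this through the block structure of $S_\zeta$: the aggregation blocks extracting $\Delta_N y^f$ contribute a scalar factor $1 + \lambda + \cdots + \lambda^{N}$ multiplying a selection of the coordinates of $v$, while the leading block extracts the $\beta'$-coordinate. When this aggregation factor is nonzero, observability through the fast coordinates is delivered by (I6), the observability of $(S_{n_f}^{(1)}, \mathcal A)$, transported to $A$ via the correspondence of stable eigenvectors between the companion and VECM representations; when the factor $1 + \lambda + \cdots + \lambda^{N}$ vanishes, (I5) forces $\beta' v_1 \neq 0$, so that the leading block of $S_\zeta$ catches the eigenvector. In either case $S_\zeta v \neq 0$, which yields observability.

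For controllability I would pass to the reachability Gramian. Because every eigenvalue of $A$ lies strictly inside the unit disc, $A_{b,c}$ is stable, and the state covariance $\Sigma_s := \E s_{t} s_{t}'$ of the innovation form solves the discrete Lyapunov equation $\Sigma_s = A_{b,c}\,\Sigma_s\,A_{b,c}' + \tilde B_c\,\E(\tilde\nu_t\tilde\nu_t')\,\tilde B_c'$. For a stable transition matrix the reachability Gramian equals $\Sigma_s$, so $(A_{b,c}, \tilde B_c)$ is controllable if and only if $\Sigma_s > 0$. To obtain $\Sigma_s > 0$ I would argue that no direction of the state is orthogonal to the $N$-spaced past: the underlying VECM state has positive-definite covariance, a consequence of $\Sigma_\nu > 0$ together with (I1), while (I2), $\rank \Gamma(t) = np$, prevents any rank collapse when this full-rank state is projected onto ${}_N\mathbb{H}_{t-N}(\tilde y)$, and the no-aliasing condition (I4) ensures that blocking at lag $N$ does not merge modes and hence preserves reachability. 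Consequently $\Sigma_s$ has full rank $m$ and the pair is controllable.

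The main obstacle is the observability step, specifically the verification that $S_\zeta v \neq 0$ for every eigenvector $v$. The difficulty is twofold. First, the hypotheses (I5) and (I6) are phrased in terms of the $np$-dimensional companion matrix $\mathcal A$ and its eigenvectors, whereas the eigenvector test must be executed for the $m$-dimensional matrix $A$, so a careful translation through the correspondence of stable eigenvectors between the two representations is required. Second, the temporal aggregation inherent in the blocked process can annihilate a mode precisely when $1 + \lambda + \cdots + \lambda^{N}$ vanishes, and it is exactly this degenerate case that motivates the disjunctive form of (I5). Handling this case split cleanly---showing that at least one block of $S_\zeta$ remains active on every eigenvector---is where the real content of the proof lies.
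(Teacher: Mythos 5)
Your overall strategy coincides with the paper's: minimality via \citet{HannanAndDeistler2012}[Theorem~2.3.3], observability of $(C_b,A_b)$ by the PBH eigenvector test with exactly the case split encoded in \textbf{(I5)} (the aggregation polynomial in $\lambda$ versus $\beta'v_1\neq 0$, the latter transported between the eigenvectors of $\mathcal A$ and of $A$ as in Lemma~\ref{lemma: q_beta and q_1 and v_1}), and controllability via positive definiteness of the projected-state covariance. One small difference in the observability step: where the aggregation factor is nonzero you appeal to \textbf{(I6)} to activate the fast rows, whereas the paper gets by with the full second row block of $S_\zeta$, using only that $q_1\neq 0$ for every eigenvector (which follows from the companion structure and nonsingularity of $A$); the paper reserves \textbf{(I6)} for the later parameter-retrieval step. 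Your version is not wrong, just slightly stronger than needed there.

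The genuine soft spot is the controllability argument. Identifying controllability of the innovation form with $\Sigma_s=\E s_ts_t'>0$ is fine, but your justification that ``(I2) prevents any rank collapse when this full-rank state is projected onto ${}_N\mathbb{H}_{t-N}(\tilde y)$'' and that ``(I4) ensures blocking does not merge modes'' asserts precisely the claim that has to be proved: a full-rank unprojected state can in principle have directions orthogonal to the $N$-spaced observed past. The paper closes this by an explicit computation, showing
\begin{align*}
\E\, x_{t-N+1}\begin{pmatrix}\Delta_N y_{t-kN}' & \cdots & \Delta_N y_{t-(k+p-1)N}'\end{pmatrix}
= A_{b,c}^{k-1}\begin{pmatrix}\Gamma_{rp,c}S_{\Delta_N y}' & A_{b,c}\Gamma_{rp,c}S_{\Delta_N y}' & \cdots & A_{b,c}^{p-1}\Gamma_{rp,c}S_{\Delta_N y}'\end{pmatrix},
\end{align*}
which reduces the question to $\Gamma_{rp}>0$ together with a rank condition on an observability-type matrix (handled as in the proof of Theorem~7 of \cite{AndersonEtAl2016a}). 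You would need to supply this, or an equivalent verification, for the argument to be complete.
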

By Theorem \ref{thm: Minimal}, we know that the McMillan degree of the transfer function of the blocked process $(\tilde y_t)_{t \in N \mathbb Z}$ corresponding to an underyling high-frequency VECM is $m = r + n (p - 1)$. This will be used in the proof of the subsequent Theorem \ref{thm: identifiability in steps}, where we can relate an arbitrary minimal realisation $(\bar A_{b,c}, \bar B_{b, c}, \bar C_{b, c})$ of the transfer function $\tilde k(\tilde z)
= \left(\bar{C}_{b,c} \left( I_{m} \tilde{z}^{-1}
- \bar{A}_{b,c}
\right) 
\bar{B}_{b,c} + I_{\tilde{n}}
\right)
$ (where $\tilde z := z^N$) to the CPF $(A_{b, c}, \tilde B_{c}, C_{b, c})$. The minimal realisation $(\bar A_{b,c}, \bar B_{b, c}, \bar C_{b, c})$ can be either obtained by the spectral factorisation and e.g. the echelon realisation from the Hankel matrix of the transfer function \citep[see e.g.][Theorem 2.6.2]{HannanAndDeistler2012} or directly from the Hankel matrix of the observed second moments \citep[see, e.g.][Proof of Theorem 8]{AndersonEtAl2016a}. In the next step we relate the CPF to the underlying VECM/VAR -- exploiting the fact that the parameters $\theta$ of the underyling VECM reappear in the CPF.\\
Finally, we show that the parameters of the high frequency system are generically identifiable from the observed second moments, i.e. from $\tilde \gamma$.  
\begin{theorem}[Generic-Identifiability: Flow or Stock Case]\label{thm: identifiability in steps}
Let
%\textcolor{red}{Assumptions \ref{ass: Theta}} and \ref{a: g-ident} hold,
$p \geq N+2$ for stock case or $p \geq 2N+1$ for the flow case. Then, 
\begin{enumerate}
    \item The mapping, $\pi$ in equation (\ref{eq: def pi}) which attaches the second moments of $(\tilde y_t)_{t \in N \mathbb Z}$ to the high frequency parameters $\theta$ is injective on $\Theta_I$.
    \item Its inverse, $\pi^{-1}$, is continuous on $\pi(\Theta_I)$.
\end{enumerate}
\end{theorem}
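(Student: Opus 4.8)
The plan is to factor the claim through the transfer function of the blocked process and then exploit the rigid structure of the CPF. Since $\tilde\gamma$ is the autocovariance function of the stationary rational process $(\tilde y_t)_{t\in N\mathbb Z}$, it determines the spectral density, and by the stable miniphase spectral factorisation (as in the references cited after Theorem~\ref{thm: th1aggregated}) it determines a \emph{unique} transfer function $\tilde k(\tilde z)$ in innovation form together with the innovation variance. Thus $\pi$ factors as $\theta \mapsto \tilde k \mapsto \tilde\gamma$ with the second arrow injective, and it suffices to prove that $\theta \mapsto \tilde k$ is injective on $\Theta_I$.

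So suppose $\theta,\theta^\ast\in\Theta_I$ yield the same $\tilde k$. By Theorem~\ref{thm: th1aggregated} and Theorem~\ref{thm: Minimal} the corresponding CPFs $(A_{b,c},\tilde B_c,C_{b,c})$ and $(A_{b,c}^\ast,\tilde B_c^\ast,C_{b,c}^\ast)$ are both minimal realisations of $\tilde k$ of McMillan degree $m=r+n(p-1)$ (conditions (I1)--(I2) underpinning this minimality). Two minimal realisations of one transfer function are related by a unique state isomorphism $T\in\GL(m)$, so that $A_{b,c}^\ast=TA_{b,c}T^{-1}$, $C_{b,c}^\ast=C_{b,c}T^{-1}$ and $\tilde B_c^\ast=T\tilde B_c$. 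The whole problem is thereby reduced to showing that the structural constraints built into the CPF force $T$ to act trivially on the parameter-bearing blocks, whence $\theta=\theta^\ast$.

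The core of the argument is the reconstruction of the high-frequency dynamics from any identified minimal realisation, which represents both $\theta$ and $\theta^\ast$ up to $T$. Here I would diagonalise $A_{b,c}=cA^Nc^{-1}$: the eigenvalues of $A$ are the $m$ stable eigenvalues $\lambda_j$ of $\mathcal A$, distinct by (I3), and by (I4) their $N$-th powers $\lambda_j^N$ --- the eigenvalues of $A_{b,c}$ --- remain distinct, so the eigenvalues and (up to scaling) the eigenvectors of $A_{b,c}$ are read off the data. The decisive point is to recover the $\lambda_j$ themselves, not merely $\lambda_j^N$: because the fast coordinates are sampled every period, the block rows of $C_b=S_\zeta A^N$ contain $S_{n_f}^{(1)}A^{N-1},\dots,S_{n_f}^{(1)}A$ (see (\ref{eq: C_b structural matrix definition})). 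Expressed in the eigenbasis of $A$, the $j$-th eigendirection contributes the geometric sequence $\lambda_j^{N-1},\dots,\lambda_j$ across these blocks, so consecutive ratios return $\lambda_j$ and resolve the $N$-th-root ambiguity. Condition (I6) (observability through the fast channel) together with (I5) (guaranteeing every mode is visible either through the fast differences or through $\beta'y$) ensures that no eigendirection drops out, so the procedure recovers $A$, and with it, via the companion structure of (\ref{eq: hf ss trans_FD}), the blocks $\alpha$ and $\Phi_1,\dots,\Phi_{p-1}$ ($\beta$ being known). Applying $\psi$ then returns $\theta$. The main obstacle is exactly this step: proving that the fast-channel block structure, under (I5)--(I6), pins the transformation $T$ down rather than only fixing it in the eigenbasis up to a diagonal scaling and permutation; the distinctness conditions (I3)--(I4) dispose of the permutation and scaling, and the observability conditions handle the remainder.

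For the second assertion I would verify that each link in this reconstruction is continuous on $\pi(\Theta_I)$. The map from $\tilde\gamma$ to the spectral density and the miniphase spectral factorisation depend continuously on the data for rational spectra of fixed degree; the passage from $\tilde k$ to a minimal realisation via the Hankel matrix is continuous where the McMillan degree is constant, which by Theorem~\ref{thm: Minimal} holds throughout $\Theta_I$; and the eigenvalue and eigenprojection extraction, the $N$-th-root selection, and the read-off of $\alpha,\Phi_j$ are continuous under (I3)--(I4), since simple eigenvalues and their spectral projections vary continuously with the matrix. Composing these continuous steps yields continuity of $\pi^{-1}$.
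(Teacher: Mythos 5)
Your overall route coincides with the paper's: pass from $\tilde\gamma$ to the unique stable miniphase factor $\tilde k(\tilde z)$, use Theorems \ref{thm: th1aggregated} and \ref{thm: Minimal} to relate an arbitrary minimal realisation to the CPF by a state isomorphism $T$, recover the eigenvalues $\lambda_j$ (not merely $\lambda_j^N$) from the ratios of the fast-channel blocks $S_{n_f}^{(1)}A^{N-1},\dots,S_{n_f}^{(1)}A$ of $C_b$ using \textbf{(I4)} and \textbf{(I6)}, and then reconstruct the VECM parameters. This is exactly the paper's Steps 1--2 and the skeleton of Steps 3--4. However, two genuine gaps remain.

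First, you never recover $\Sigma_\nu$. The parameter is $\theta\in\Theta_1\times\Theta_2$, so injectivity of $\pi$ on $\Theta_I$ requires pinning down $\vech\Sigma_\nu$ as well. This is not a formality: the high-frequency Yule--Walker equation (the paper's (\ref{eq:7.18a_FD})) needs the autocovariances $\gamma_{\Delta y}(h)$ and $\gamma_{\beta y,\Delta y}(h)$ at \emph{all} integer lags, whereas the observable second moments only give lags in $N\mathbb Z$ for the slow components. The paper's Step 5 bridges this by showing that the full state covariance $\Gamma_{rp,c}$ can be reconstructed from the first two column blocks via the Toeplitz/Lyapunov structure $\Gamma_{rp,c}(2+h)=A_c^h\Gamma_{rp,c}^{(2)}$, which in turn rests on the full column rank of $\mathcal O$ (Lemma \ref{lemma:CandO_FD}). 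This entire layer is absent from your argument.

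Second, the step you yourself flag as ``the main obstacle'' --- showing that the structural constraints force $T$ to act trivially on the parameter-bearing blocks --- is asserted rather than proved. Distinctness of eigenvalues fixes eigenvectors only up to scale, and the eigenbasis argument by itself only recovers $\Lambda$; it does not deliver $\alpha$ and $\Phi_1,\dots,\Phi_{p-1}$. The paper's actual mechanism (Steps 3--4) is different in character: it reads $T_\beta$ and $T_1$ off the first row blocks of $\bar C_{b,c}\bar A_{b,c}^{-1}$ (which equal $[c]_{(1:n+r,1:m)}c^{-1}T$ by the structure of $S_\zeta A^N A^{-N}$), exploits the companion-induced recursion $R_{i+1}=R_i\bar A_c^{-1}$ for $R=c^{-1}T$, and solves the block system $T=cR$ for all row blocks. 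It is precisely here that the hypotheses $p\geq N+2$ (stock) and $p\geq 2N+1$ (flow) are used --- they guarantee enough trailing identity blocks in $c$ so that the system is solvable. Your proposal never invokes these lag-order conditions, which is a symptom that the decisive reconstruction is missing. The continuity argument is fine in outline but inherits the same gaps, since it must cover the omitted steps as well.
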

%
%	XXX adapt to our needs XXX
%
%
%
%
Since by Theorem \ref{thm: topo thm}, $\Theta_I$ is a generic subset of $\Theta$, we say that $\theta$ is generically identifiable from the observed autocovariance function $\tilde \gamma$. Theorems \ref{thm: Minimal} and \ref{thm: identifiability in steps} imply that the representation (\ref{eq: beta stock inno trans}), (\ref{eq: beta stock inno obs}) is indeed canonical on $\pi(\Theta_I)$. Since the second moments of $(\tilde y_t)$ can be consistently estimated from the data under mild conditions, by the continuity of $\pi^{-1}$ it follows that we have a consistent estimator for $\theta$. The mapping $\pi^{-1}$ is also called \textit{realisation procedure}, since we realise the system parameters from the external characteristics of the data, i.e. the second moments, the spectrum or the transfer function respectively.\\ 
%
%\clearpage
%
%
Finally, we consider the question whether  $\pi^{-1}(\pi (\Theta_I)) = \Theta_I$. This is important to ensure that 
outside that the identified parameter set 
$\Theta_I$ there are no elements, say
$\theta_{\neg I}$, which result in the same observable second moments as some 
$\theta_{} \in \Theta_I$:
\begin{theorem}\label{thm: Manfreds Question}
For all $\theta_{\neg I} \in \Theta \setminus \Theta_I$ there exists no $\theta \in \Theta_I$ such that $\pi(\theta_{\neg I}) = \pi(\theta) = \tilde \gamma$.
\end{theorem}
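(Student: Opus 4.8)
The plan is to argue by contradiction at the level of the transfer function, deriving the contradiction from the pole structure of $\tilde k$. Since $\Theta\setminus\Theta_I=(\Theta_1\setminus G)\times\Theta_2$, any $\theta_{\neg I}\in\Theta\setminus\Theta_I$ has an autoregressive part violating at least one of the conditions (I1)--(I6) in Assumption \ref{a: g-ident}. Suppose, contrary to the claim, that $\pi(\theta_{\neg I})=\pi(\theta)=\tilde\gamma$ for some $\theta\in\Theta_I$. Equal autocovariance functions give equal spectral densities, and by the uniqueness of the stable miniphase spectral factor already invoked in Theorem \ref{thm: th1aggregated}, the two blocked processes then share the same transfer function $\tilde k(\tilde z)$. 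It therefore suffices to show that the high-frequency system attached to $\theta_{\neg I}$ cannot generate the same $\tilde k$ as the one attached to $\theta\in\Theta_I$, which I would do by comparing the McMillan degrees together with the location and multiplicity of the poles.

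First I would pin down the ``fingerprint'' of a $\Theta_I$-transfer function. By Theorem \ref{thm: Minimal} the canonical projection form of $\theta$ is minimal, so $\tilde k$ has McMillan degree exactly $m=r+n(p-1)$. Moreover, in the CPF the transition matrix is $A_{b,c}=cA^Nc^{-1}$, so the poles of $\tilde k$ are the eigenvalues of $A^N$; under (I3) and (I4) these are the $m$ pairwise distinct numbers $\lambda_{n-r+1}^N,\dots,\lambda_{np}^N$, and under (I1) none of them is zero. Hence for $\theta\in\Theta_I$ the transfer function $\tilde k$ has precisely $m$ distinct, nonzero, simple poles. This is the pattern I aim to contradict.

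Next I would exploit that the CPF of Theorem \ref{thm: th1aggregated} is a valid innovation-form realization for every $\theta\in\Theta$ --- its construction uses only Assumption \ref{ass: Theta} and the hypotheses borrowed from \cite{chambers2020frequency}, not the identifiability conditions --- and has state dimension $m$. Consequently $\tilde k_{\theta_{\neg I}}$ has McMillan degree at most $m$, with poles among the eigenvalues of the corresponding $A^N$. I would then trace how each violated condition deforms this pole set away from the fingerprint. Condition (I2) holds automatically under (C4), as noted after (\ref{eq: hf ss obs}), so it is never the binding violation. If (I5) or (I6) fails, the PBH test applied to $(A^N,C_b)$ with $C_b=S_\zeta A^N$ shows that some eigenvector $v$ satisfies $S_\zeta v=0$, so the associated mode is unobservable in the blocked output and the corresponding pole is absent, forcing the degree below $m$. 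If (I1) fails then $\mathcal A_p$ is singular, so $\mathcal A$ and hence $A$ acquire a zero eigenvalue; since the eigenvalues of $A^N$ are exactly the poles of $\tilde k$ in a minimal realization, either the mode is suppressed (degree $<m$) or $\tilde k_{\theta_{\neg I}}$ has a pole at a location absent for $\theta\in\Theta_I$, whose $A^N$ has only nonzero eigenvalues. Finally, if (I3) or (I4) fails then two eigenvalues of $A^N$ coincide, so $\tilde k_{\theta_{\neg I}}$ has strictly fewer than $m$ distinct poles, contradicting the ``$m$ distinct simple poles'' fingerprint even when the McMillan degree happens to remain $m$. In every case $\tilde k_{\theta_{\neg I}}\neq\tilde k_\theta$, contradicting $\pi(\theta_{\neg I})=\pi(\theta)$, and the theorem follows.

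The main obstacle is the observability bookkeeping underlying the (I5)/(I6) case, namely the equivalence between observability of the CPF and conditions (I3)--(I6): one reduces observability of $(A^N,C_b)$, via the PBH test and the distinctness supplied by (I3)--(I4), to the requirement $S_\zeta v\neq 0$ on each eigenvector $v$, and then identifies these nonvanishing conditions with (I5) --- which encodes both the geometric-sum factor $1+\lambda+\cdots+\lambda^N$ produced by the $\Delta_N$ blocking and the alternative $\beta'v_1\neq 0$ --- and with (I6), the observability of $(S_{n_f}^{(1)},\mathcal A)$ governing the fast coordinates. A second delicate point is excluding accidental cancellations in the (I3)/(I4) cases, i.e. verifying that coalescing eigenvalues genuinely reduce the number of distinct poles rather than being masked by the residue structure. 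The stock and flow cases require carrying out the $S_\zeta$-computation separately for the two block patterns in (\ref{eq: hf ss obs_FD_2}) and (\ref{eq: mixedf ss obs_FD7}), but the argument is identical in form.
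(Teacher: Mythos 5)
Your overall strategy --- pass to the common stable miniphase factor $\tilde k(\tilde z)$, record the ``fingerprint'' of a $\Theta_I$-transfer function (McMillan degree $m$ together with $m$ distinct, nonzero, simple poles $\lambda_i^N$), and show case by case that each violated condition destroys this fingerprint --- is essentially the paper's own proof, and your treatment of (I1)--(I5) matches it. The gap is in the (I6) case. You claim that failure of (I6) makes a mode of the CPF unobservable because ``the PBH test applied to $(A^N,C_b)$ with $C_b=S_\zeta A^N$ shows that some eigenvector $v$ satisfies $S_\zeta v=0$''. That implication is false: (I6) concerns only the pair $(S_{n_f}^{(1)},\mathcal A)$, i.e.\ it fails when $S_{n_f}^{(1)}q_i=0$ for some eigenvector $q_i$, whereas $S_\zeta$ contains additional row blocks --- the block $(I_r,0,\dots,0)$ producing $\lambda_i^N q_{\beta,i}$ and the slow block $S_{n_s}^{(1)}$ --- which need not annihilate $q_i$. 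Indeed $S_{n_f}^{(1)}q_i=0$ can occur simply because the fast coordinates of $q_{1,i}$ vanish while $q_{\beta,i}\neq 0$ or the slow sum is nonzero. So when (I6) alone fails, the CPF can still be minimal of degree $m$ with $m$ distinct nonzero simple poles; your fingerprint then does not separate such a $\theta_{\neg I}$ from the members of $\Theta_I$, and the intended contradiction never materialises.

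The paper closes this case with a different, finer invariant: by Step 2 of the retrieval algorithm the matrix $C_bQ=\bar C_{b,c}\,(T^{-1}cQ)$ is computable from $\tilde\gamma$ (the eigenvectors of $\bar A_{b,c}$ being unique up to scale once (I3)--(I4) supply distinct eigenvalues), and its row blocks $S_{n_f}^{(1)}Aq_i=\lambda_iS_{n_f}^{(1)}q_i$ reveal, in a scale-invariant way, whether $S_{n_f}^{(1)}q_i=0$. This quantity is nonzero for every $\theta\in\Theta_I$ but zero for the offending $\theta_{\neg I}$, so the two cannot share the same $\tilde\gamma$. You need this (or an equivalent observable invariant beyond McMillan degree and pole location) to complete the (I6) case. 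A secondary, minor point: you dismiss (I2) as automatic under (C4); the paper does not lean on that and simply treats (I1)/(I2) jointly via the drop in McMillan degree, which is the safer route.
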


%% new version 6.9.2022
\section{Deterministic Terms}
\label{sect:deter}
This section investigates the VECM  
\begin{align}
\label{eq:vecm:det1}
\Delta y_t  = \mu_0 + \mu_1 t + \Pi y_{t-1} + \sum_{j = 1}^{p-1} \Phi_j \Delta y_{t-j} + \nu_t \ ,   \qquad \nu_t \sim WN(\Sigma_\nu)  \  ,
\end{align}
containing the deterministic terms $\mu_0$ and $\mu_1 t$. The five cases following from (\ref{eq:vecm:det1}), namely  
``$H_2(r)$, $H_1(r)$, $H_1^*(r)$,
$H(r)$, and $H^*(r)$'', are
obtained and defined in \citet{Johansen1995}[page 81 and 
our Appendix~\ref{appa:deterministic}]. Recall that the cointegrating vectors $\beta$ can be identified from mixed frequency data \citep[see][]{chambers2020frequency}. For high frequency data $(\Delta y_t)_{t \in \mathbb Z}$ and $(\beta'y_t)_{t \in \mathbb Z}$ we can compute the expectations
$\E \Delta y_t$ and $\E \beta^{\prime} y_t$, while for mixed frequency case  we get $\E \beta^{\prime} y_t$ and $\E \Delta_N y_t = \mathbb{E} \left(y_t - y_{t-N} \right)$
for the stock case and 
$\E \beta^{\prime} w_t = 
\E \sum_{j=0}^{N-1} \beta^{\prime} y_{t-j}$
and $\E \Delta_N^{\Sigma} y_t = 
\E \sum_{j=0}^{N-1} \Delta_N y_{t-j}$ for the flow case, respectively.
To identify the deterministic terms in 
(\ref{eq:vecm:det1}) we can proceed as follows:
\begin{enumerate}
\item Remove deterministic trends from 
$\left( \beta^{\prime} y_t \right)_{ t \in N \mathbb{Z} } $ and $\left( \Delta_N y_t  \right)_{ t \in N \mathbb{Z} } $
in the stock case or 
$\left( \beta^{\prime} w_t \right)_{ t \in N \mathbb{Z} }$
and $\left( \Delta_N^{\Sigma} y_t \right)_{ t \in N \mathbb{Z} } $ for the flow case
[given $\beta$], such that
$\E \beta^{\prime} y_t =0$ and $\E \Delta_N y_t = 0$, 
$\E \beta^{\prime} w_t = 0 $, and $\E \Delta_N^{\Sigma} y_t = 0$.

\item
Apply Theorem~\ref{thm: identifiability in steps} to obtain the parameters $\theta$.
\item Given the parameters $\theta$, 
Appendix~\ref{appa:deterministic} shows that the parameters
$\mu_0$ and $\mu_1$ can be identified from moments following from data observed.
\end{enumerate}
This results in:
\begin{theorem}
\label{thm:identify:mu0mu1}
Under the assumptions of Theorem~\ref{thm: identifiability in steps} 
the parameters $\mu_0$ and $\mu_1$ can be identified generically from observable first and second moments.
\end{theorem}
\begin{proof}
See Appendix~\ref{appa:deterministic}.
\end{proof}
\section{Conclusion}
\label{sect:conc}
%
%
%
%
% Identifiability is concerned with the question whether we can obtain the parameters of an underlying model (internal characteristics) from external characteristics of the data generated by the model. In the stationary case, the external characteristics are the second moments or the spectrum of the process which are immediately and consistently obtained from large samples. As opposed to the stationary high frequency VAR case (Yule Walker equations), identifiability turns out to be non-trivial for the VAR and VARMA case with stationary mixed frequency data \citep[see][]{AndersonEtAl2016a, AndersonEtAl2016b}. Compared to the MIDAS approach \citep[see e.g.][]{ghysels2016macroeconomics} in which an extra model is designed for mixed frequency data, the approach in \citep{AndersonEtAl2016a, AndersonEtAl2016b} that we label REMUS, is focused on retrieving a model that underlies the only partly observed high frequency process. This can be used for example, for obtaining structural parameters or for model consistent interpolation. As opposed to \cite{ghysels2016macroeconomics} it also allows all kinds of feedbacks, in particular feedbacks from the unobserved high frequency slow variables.\\
In this paper, we generalise the results on identifiability from mixed frequency data in \cite{AndersonEtAl2016a, AndersonEtAl2016b} obtained for stationary VAR-systems to the case of unit-roots and cointegrating relationships. As is well known these systems have also a {\em vector error correction representation}. The corresponding parameterspaces are homeomorphic.\\
We commence from a solution of the (unstable) VAR system on the integers $\mathbb Z$ \citep[see][for the existence of such a solution]{bauerwagner2012}. Then we take differences at lag $N$ (which is the sampling rate of the slow/aggregated process) and stack these to what we call the ``blocked process''. In addition, the blocked process also contains the stationary process $\beta'y_t$, where $\beta$ is the matrix of cointegrating relationships. This matrix is identified from mixed frequency data as already shown in \cite{chambers2020frequency}. This blocked process is stationary and contains all relevant differences of the observations. \\
The contribution of this paper can be seen as an extension of the results in \cite{chambers2020frequency}, by proving that also the remaining parameters of the vector error correction model (i.e. besides $\beta$) are (generically) identified from mixed frequency observations.\\ 
%in order to get a process containing all observations. We take differences at lag $N$ (which is the sampling rate of the slow process) resulting in a unique stationary solution of a stable system. %
%
The identifiability proof consists of two steps: In the first step, we derive a state-space representation of the blocked process (``the canonical projection form'') which is minimal, in innovation form (both, for the stock and the flow case) and unique. In the second step, we derive an algorithm, that retrieves the parameters of the underlying high frequency system from the parameters of the canonical projection form.\\ %
We show that the conditions (Assumption \ref{a: g-ident}) which are sufficient for identifiability are generic in the parameterspace. This is more intricate than in the stationary case, since the parameterspace is not an open subspace of the Euclidean space, due to the fact that we allow for unit roots. Since the VECM and the VAR parameterspaces are homeomorphic, the genericity result holds for both.\\
Finally, we show that all common cases of deterministic terms in the VECM can be reduced to the case of non-deterministic terms.
%
% Hence even though the conditions in Assumption \ref{a: g-ident} define an open and dense set in the euclidean space, they need not to do so in the non-open subspace generated by the unit-root parameterspace. Therefore we had to show explicitly that the identifiability conditions are open and dense (therefore generic) in the parameterspace.
%
%
%
%
%
%This article considers mixed sampling frequency data sampled from a vector error correction model explored in \citet{Johansen1995}. Hence, the output procees is integrated of order one and $r$ cointegrating relationships show up.By means of the  REMUS approach developed in \citet{anderson2012identifiability} and \citet{AndersonEtAl2016a}, we show that the parameters of the underlying high frequency  vector error correction model can be retrieved from observable second moments. That is, second moments which follow from  mixed frequency observations.  We derive for the blocked process a canonical state-space representation inprediction error form and show that the corresponding system is mimimal. Finally, we show that the parameters of the high frequency system are generically identified. The stock and the flow case as well as deterministic terms are investigated.
%
%
%
%
%
%
%
\bibliographystyle{apalike} %%model1-num-names}
%\bibliography{master}
%
\bibliography{references.bib}
\appendix% format the equation environment
\renewcommand{\theequation}{A.\arabic{equation}}

% reset the counter
\setcounter{equation}{0}

% appendix section. * is used to suppress the section numbering
%\section*{Appendix A. <some heading here>}

\section*{Conflict of Interest Statement }
On behalf of all authors, the corresponding author states that there is no conflict of interest. 

\section*{Acknowledgements}

{{The authors would like to thank 
the participants of the Economics Research Seminar of the University of Regensburg, the CFE 2020 conference, the
3rd Italian Workshop of Econometrics and Empirical Economics (IWEEE 2022), and the 10th Italian Congress of Econometrics and Statistics (ICEEE 2023), as well as Christoph Rust for helpful comments that lead to improvement of the paper. 
The authors gratefully acknowledge financial support from the Austrian Central Bank under Anniversary Grant No. 18287 and the DOC-Fellowship of the Austrian Academy of Sciences (ÖAW).
%%The third author gratefully acknowledges support by %the %Austrian Science Foundation FWF, project P 30690-%N35. 
Manfred Deistler and Leopold Sögner acknowledge support by the Cost Action HiTEc - CA21163.
} }

\clearpage
\section{Moments Observed}
\label{sect:momentsobserved}
In {Section}~\ref{subsect: mixed frequency data and generic ident}, we call the moments $\E \tilde y_{t+h} \tilde y_t'$ in equation (\ref{eq: def tilde gamma}) also {\em observed second moments} since they can be consistently estimated from the data as we will argue in this section. Those moments in $\E \tilde y_{t+h} \tilde y_t'$ which involve only $N$-differences from the data do not require any further discussion. We are left with considering moments that involve $\beta$.\\ 
%consistently from the data under the assumption that some weak law of large number holds, given that $\beta$ is known. 
%With a strong law we would get strong consistency.
%
% In the following we show that when replacing $\beta$ denoted $\widehat{\beta}_{T_s}$, converges in probability to the true $\beta$ [see \citet{miller2016conditionally} and \citet{chambers2020frequency} for the mixed frequency case], the high-frequency moments $\mathbb{E} \beta^{\prime} y_t$, $\mathbb{E} \beta^{\prime} y_t \Delta y_s^{\prime}$, and $\mathbb{E} \beta^{\prime} y_t y_s^{\prime} \beta^{}$ can be considered as observable. For the mixed frequency moments this works in an equivalent way. Hence, all the moments contained in $\E \tilde y_{t+h} \tilde y_t'$ can be considered as observable moments. 
%
In the following, if $N$ is the sampling rate of the slow variables and $T$ is the number of time observations of high-frequency data, we denote by $T_s$ the number of time observations available for the slow/ aggregated coordinates which is approximately $T/N$ and by $\mathbb I_s$ the index set of slow/ aggregated time observations, e.g. for $N=3$ we could have $\mathbb I_s = \{3, 6, 9, ...,T\}$. 

In \citet{miller2016conditionally, chambers2020frequency} consistent estimators $\hat \beta_{T_s}$ that converge in probability to $\beta$ are provided for the mixed frequency case. In both articles, the corresponding estimator $\hat \beta_{T_s}$ converges weakly to a product of the inverse of a functional of Brownian motions and a stochastic integral. 
%To simplify notation, consider the case %$r=1$ in which case $\beta = (1,\beta_{n-%1}^{\prime})^{\prime}$. Let $\left(B(\tau)\right)_{\in \mathbb{R}_{\geq 0} }$ denote an $n$-dimensional Brownian motion, where $B(\tau) = \left(B_u(\tau), B_v(\tau)\right)$, $B_u \in \mathbb{R}$ and $B_v \in \mathbb{R}^{n-1}$. Then, $T_s \left( \widehat{\beta}_{n-1,T_s} - \beta_{n-1} \right)$ converges weakly to $\left( \int_0^1 B_v(\tau) B_v(\tau)^{\prime} ds \right)^{-1} \int_0^1 B_v(\tau) d B_u (\tau)$. 
By applying \citep[][{Lemmata 4.5 and 4.6}]{white2001asymptotic}, we observe that 
%$\widehat{\beta}_{n-1,T}$ converges in %probability to $\beta_{n-1}$, thus in the %following we will use
\begin{align*}
(\widehat{\beta}_{T} - \beta) = O_p(T_s^{-1}) \ .  
\end{align*}
Next we consider the high-frequency moments $\E \beta^{\prime} y_t$, $\E  \beta^{\prime} y_t \Delta y_s^{\prime}$, and $\E  \beta^{\prime} y_t y_s^{\prime} \beta^{}$, $s,t \in \mathbb{Z}$. The following calculations work in the same way for the mixed frequency counterparts. We can estimate $\E \beta^{\prime} y_t$ by $\frac{1}{T_s}\sum_{t \in \mathbb I_s} \widehat{\beta}_{T_s}^{\prime} y_t$. We impose the standard assumptions that a weak law of large number and a functional central limit theorem can be applied to properly scaled terms. 
For a ``Low-Frequency Invariance Principle'' see, e.g., 
\citet{miller2016conditionally}[Section~2.3].
Let $B(\tau)$ denote some $n$-dimensional Brownian motion. 
Define $\widehat u_t^{\mathcal S} := \widehat \beta_{T_s}' y_t$ and $\widehat u_t^{\mathcal F} := \widehat \beta_{T_s} \sum_{j = 0}^{N-1}y_{t-j}$. 
%\textcolor{blue}{To keep the notation simple we consider the stock case and $N=1$ (this corresponds to the high-frequency observations). For $N \geq 2$, the following calculations are straightforward.}
This results in 
%#
\begin{align}
\label{eq:a1}
T_s^{-1} \sum_{t \in \mathbb I_s} \widehat u_t^{\mathcal S} &= T_s^{-1} \sum_{t \in \mathbb I_s} \widehat{\beta}_{T_s}^{\prime} y_t 
= T_s^{-1} \sum_{t \in \mathbb I_s} {\beta}^{\prime} y_t + T_s^{-1} \sum_{t \in \mathbb I_s} \left( \widehat{\beta}_{T_s} - \beta \right)^{\prime} y_t \nonumber \\
& = \E {\beta}^{\prime} y_t + o_p(1) +  T_s^{1/2}(\widehat{\beta}_{T_s} - \beta)'\underbrace{T_s^{-3/2}\sum_{t \in \mathbb I_s} y_t}_{\Rightarrow \int_0^1 B(\tau) d \tau}\nonumber \\
&= \E \beta'y_t + o_p(1) + O_p(T^{-1/2}) O_p(1) \ . 
\end{align}
Next, we look at the second moments for the stock case appearing in $\tilde \gamma$, i.e. second moments of the form $\E u_t^{\mathcal S} \Delta_N y_s$. Observe that we have
\begin{align}
\label{eq:a2}
T_s^{-1} \sum_{t \in \mathbb I_s} \widehat u_t^{\mathcal S}  \Delta y_s^{\prime}  & = T_s^{-1} \sum_{t \in \mathbb I_s} \widehat{\beta}_{T_s}^{\prime} y_t \Delta y_s^{\prime}  \nonumber \\
& = T_s^{-1} \sum_{t \in \mathbb I_s} \beta' y_t \Delta y_s' + \left( \widehat{\beta}_{T_s} - \beta \right)' \underbrace{T_s^{-1} \sum_{t \in \mathbb I_s}  y_t \Delta y_s^{\prime}}_{\Rightarrow \int_0^1 B(\tau) dB(\tau)'}  \nonumber \\
& = \E u_t^{\mathcal S} \Delta y_s' + o_p(1) +  O_p(T_s^{-1}) O_p(1) \ . 
%
%&=\mathbb{E} {\beta}^{\prime} y_t \Delta y_s^{\prime} + \lim_{T \rightarrow \infty} O_p(1) T_s^{-1} T_s^{-1} \sum_{t \in \mathbb I_s}  y_t \Delta y_s^{\prime} \nonumber \\
%
%&= \mathbb{E} {\beta}^{\prime} y_t\Delta y_s^{\prime}+\underbrace{ \lim_{T \rightarrow \infty} O_p(1) T^{-1 } \underbrace{ T_s^{-1}\sum_{t \in \mathbb I_s}  y_t \Delta y_s^{\prime}}_{ \Rightarrow \int_0^1 B(\tau) dB(\tau)^{\prime}}}_{o_p (1)} \nonumber \\
%
% & = \mathbb{E} {\beta}^{\prime} y_t + o_p (1) \ , \text{ and }
% & = \mathbb{E} {\beta}^{\prime} y_t \Delta y_s^{\prime} + o_p(1) + 
\end{align}
It follows immediately that also 
\begin{align*}
    T_s^{-1} \sum_{t\in \mathbb I_s} \hat u_t^{\mathcal S} \Delta_N y_s \xrightarrow{P} \E u_t^{\mathcal S} \Delta_N y_s^{\cdot} \ , 
\end{align*}
where ``$\xrightarrow{P}$'' denotes convergence in probability and ``$\cdot$'' is shorthand for $s, f$ (``fast'' and ``slow'' coordinates). Finally we look the estimation of $\E u_t^{\mathcal S} {u_s^{\mathcal S}}'$ for $t - s \in N \mathbb Z$: 
\begin{align}
\label{eq:a3}
T_s^{-1} \sum_{t \in \mathbb I_s} \widehat u_t^{\mathcal S} (\widehat u_s^{\mathcal S})' & =  T_s^{-1} \sum_{t \in \mathbb I_s} \beta' y_t y_s' \beta  + T_s^{1/2}(\widehat \beta_{T_s} - \beta)' \underbrace{ T_s^{-2} \left(\sum_{t \in \mathbb I_s} y_t y_s'\right)}_{\Rightarrow \int_0^1 B(\tau) B(\tau)' d \tau} T_s^{1/2}(\widehat \beta_{T_s} - \beta) \nonumber \\
&= \E u_t^{\mathcal S} (u_s^{\mathcal S})' + o_{p}(1) +  O_p(T_s^{-1/2}) O_p(1) O_p(T_s^{-1/2}) \ .
%\widehat{\beta}_{T_s}^{\prime} y_t y_s^{\prime} \widehat{\beta}_{T_s} & = \lim_{T \rightarrow \infty} \frac{1}{T} \sum_{t=1}^{T {\beta}^{\prime} y_t  y_s^{\prime} {\beta} + \lim_{T \rightarrow \infty} \frac{1}{T} \sum_{t \in \mathbb I_s} \left( \widehat{\beta}_{T_s} - \beta \right)^{\prime} y_t y_s^{\prime} \left( \widehat{\beta}_{T_s} - \beta \right)^{} \nonumber \\
%\end{align}
%
%& = \mathbb{E} {\beta}^{\prime} y_t y_s^{\prime} {\beta} + \lim_{T \rightarrow \infty} \frac{1}{T^3} \sum_{t \in \mathbb I_s} T \left( \widehat{\beta}_{T_s} - \beta \right)^{\prime} y_t  y_s^{\prime}  T \left \widehat{\beta}_{T_s} - \beta \right)^{} \nonumber \\
%\end{align}
%
%& = \mathbb{E} {\beta}^{\prime} y_t y_s^{\prime} {\beta} + \lim_{T \rightarrow \infty} O_p(1)^2 T_s^{-1} T^{-2} \sum_{t \in \mathbb I_s}  y_t y_s^{\prime} \nonumber \\ 
%
%
%& = \mathbb{E} {\beta}^{\prime} y_t y_s^{\prime} {\beta} + \underbrace{ \lim_{T \rightarrow \infty} O_p(1) T^{-1 } \underbrace{T^{-2} \sum_{t \in \mathbb I_s}  y_t y_s^{\prime}}_{ \Rightarrow \int_0^1 B(\tau) B(\tau)^{\prime} d \tau }}_{o_p (1)} \nonumber \\
%
%
% & = \mathbb{E} {\beta}^{\prime} y_t y_s^{\prime} {\beta} + o_p (1) \ .
\end{align}
We obtain analogous results for moments involving $u_t^{\mathcal F}$, by noting that $u_t^{\mathcal F} = \sum_{j = 0}^{N-1} \beta' y_{t-j}$. Hence, equations (\ref{eq:a1}) to (\ref{eq:a3}) demonstrate why the above moments can be considered as moments observed although $\beta$ has to be estimated.  
\clearpage
\section{Example for Non-Identifiability}
\label{sect:example:noidnet}

We consider a VECM where $n=2$, $r=1$, and $p=1$. The cointegrating vector is
$\beta = (1,\beta_{s})^{\prime}$, $\alpha = (\alpha_f, \alpha_s)^{\prime}$, 
and $\Pi = \alpha \beta^{\prime}$.
That is 
\begin{align}
	\Delta {y}_{t}
	&=  \Pi {y}_{t-1}   
	+ { \nu}_{t},  \quad  \nu_t \sim WN \left( \Sigma_{\nu} \right) \ ,  \  \text{such that} \ 
 \Sigma_{\nu} = \left( 
 \begin{array}{cc}
     \sigma_{ff} &  \sigma_{fs} \\
     \sigma_{sf} &  \sigma_{ss} 
 \end{array}
 \right) \ , \  \text{ and }
 \nonumber \\ 	
 y_t & = \underbrace{ 
 \left( I_n + \Pi \right)}_{
\mathcal{A}_1}
 y_{t-1} + \nu_t = 
 \left( 
\begin{array}{cc}
 \underbrace{ 
 1+ \alpha_f
 }_{a_{ff}} &
 \underbrace{ 
 \alpha_f \beta_{s}
 }_{a_{fs}} \\
 \underbrace{ 
 \alpha_s
 }_{a_{sf}} &
\underbrace{ 
 1+ \alpha_s \beta_{s}
 }_{a_{ff}}
 \end{array}
 \right) y_{t-1}
+ \nu_{t} \label{eq: VECM hf:p1} \ . % \\
%
%
%& \underbrace{\left(
%\begin{array}{c}
%y_{t}^f  \\
%y_{t}^s  \\
%y_{t-1}^f  \
%\end{array}
%\right) }_{\breve{y}_t}
% =
%\underbrace{ \left( 
%\begin{array}{cc}
% \mathcal{A}_1^2 
% & 0 \\
% a_{ff} \ \ a_{ss} & 0
% \end{array}
% \right) 
% }_{\breve{\mathcal{A}}}
% \breve{y}_{t-2}
%+ 
%\left(
%\begin{array}{c}
%\mathcal{A}_1
%\nu_{t-1} 
% +
%\nu_{t} 
% \\
%\nu_{t-1}
%\end{array}
%\right)
%\ . \label{eq: VECM hf:p1:3}
\end{align}  
By Theorem~1 of \cite{AndersonEtAl2016a} the model parameters are not identified if and only if 
(i) $a_{fs}=0$, (ii) $a_{sf} 
+ \frac{\sigma_{sf}}{\sigma_{ff}} \left( a_{ss} - a_{ff} \right)=0$, 
and (iii) $a_{ss} \not=0$. 
$a_{fs}=0$ implies $\alpha_f=0$ [with $\beta_{s}=0$ we do not have a cointegrating relationship]. From $a_{sf} 
+ \frac{\sigma_{sf}}{\sigma_{ff}} \left( a_{ss} - a_{ff} \right)
= \alpha_s \left( 1
+ \frac{\sigma_{sf}}{\sigma_{ff}} \beta_{s} \right)
=0$, we conclude that $\alpha_s$ must be non-zero, otherwise $\left( y_t \right)$ is a white noise process. Hence, we get
$0= \left( 1
+ \frac{\sigma_{fs}}{\sigma_{ff}} \beta_{s} \right)$. The third constraint results in $0=1+ \alpha_s \beta_{s}$. Hence, by considering the second moments in levels and the arguments provided in the proof Theorem~1 of \cite{AndersonEtAl2016a}, we get identifyability on a generic set.
%Note that in the above paragraph we used the moments $\mathbb{E} \breve{y}_t \breve{y}_{t-2\ell}^{\prime}$, $\ell \in \mathbb{Z}$, to identify $\breve{\mathcal{A}}$. 
Unfortunately these moments are not ``observed moments'' since they cannot be estimated from data observed as in the stationary case.\\
By contrast $\mathbb{E} \Delta_N \tilde{y}_t \Delta_N \tilde{y}_{t-N  \ell}^{\prime}$, $\ell \in \mathbb{Z}$, are observed moments, where we consider stock variables only and let $\Delta_N \tilde{y}_t = y_t - y_{t-N}$. In addition, when $\beta$ is known [or can be consistently estimated as shown in \citet{miller2016conditionally,chambers2020frequency}] also the moments $\mathbb{E} \beta \tilde{y}_t \Delta_N \tilde{y}_{t- \ell}^{\prime}$, $t,t-\ell \in 2 \mathbb{Z}$, can be observed. Applying (\ref{eq: hf ss trans_FD}) and (\ref{eq: hf ss obs_FD}) to the current example yields
\begin{align}
\begin{pmatrix}
	\beta^{\prime} y_t \\
	\Delta y_t \\
	\Delta y_{t-1}
	\end{pmatrix}
&= 
\underbrace{\begin{pmatrix}
	\beta^{\prime} \alpha+ 1 &  0_{ 1 \times 2} & 0_{ 1 \times 2} \\
	\alpha &   0_{ 2 \times 2}  &  0_{ 2 \times 2} \\
    0_{ 2 \times 1} &  I_{ 2} & 0_{ 2 \times 2} 
 	\end{pmatrix} }_\text{$A \in \mathbb{R}^{r + 2n \times r + 2n} $}
\begin{pmatrix}
	\beta^{\prime} y_{t-1} \\
	\Delta y_{t-1} \\
    \Delta y_{t-2}
	\end{pmatrix} + 
\begin{pmatrix}
	\beta^{\prime}  \\
	I_n \\
	0
 \end{pmatrix}
\nu_t \ , \ \  t \in \mathbb{Z} \ . \label{eq: hf ss trans_FD:p1} 
%eq: hf ss obs_FD
\end{align}  
%
%By applying for example the rule of Sarrus 
%We immediately observe that $A$ is not of full rank and the system %arsing from  
% (\ref{eq: hf ss trans_FD:p1}) cannot be minimal. 
%In particular, $A$ has rank $1$. XXX I think %that this is not relevant XXX
%
%
Next, by direct calculations we get
\begin{align}
	\Delta_2 {y}_{t}
	&=  \Delta y_t + \Delta y_{t-1} = \Pi {y}_{t-1}   
	+ { \nu}_{t} +  \Pi {y}_{t-2}   
	+ { \nu}_{t-1} \nonumber \\
  & = 
 \Pi \left( {y}_{t-2} + \Delta y_{t-1} \right)  
	+ { \nu}_{t}
+  \Pi {y}_{t-2}   
	+ { \nu}_{t-1} \nonumber \\
 & = 
2 \alpha \beta^{\prime} {y}_{t-2} + 
\alpha \beta^{\prime} 
\alpha \beta^{\prime} {y}_{t-2}   
	+ \Pi { \nu}_{t-1}
+ \nu_{t-1} + { \nu}_{t}
 \nonumber \\
 & =
 \underbrace{
\left( 
\begin{array}{c}
2 \alpha_f + \alpha_f^2 + \alpha_f \beta_{s} \alpha_s \\
2 \alpha_s + \alpha_f \alpha_s + \alpha^2_2 \beta_{s}
\end{array}
 \right) }_{ =: M_{\alpha,\beta}}
 \beta {y}_{t-2}   
	+ \Pi { \nu}_{t-1} + \nu_{t-1}
 + { \nu}_{t}  \ .
 \label{eq: VECM hf:p5}
\end{align}  
In addition, 
$\beta^{\prime} y_t 
=
\beta^{\prime} \left( y_{t-2} + \Delta_2 y_t \right)$ and 
$ \Delta y_{t-1}^f := y_{1t-1}  
= \alpha_f \beta y_{t-2} + (1,0)^{\prime} \nu_{t-1} $. 
This results in the state-space system
{\footnotesize
\begin{align}
\underbrace{
\begin{pmatrix}
	\beta^{\prime} y_t \\
	\Delta_2 y_{t} \\
	\Delta y_{t-1}^f
	\end{pmatrix} 
 }_{x_t }
&= 
\underbrace{\begin{pmatrix}
	1 + \beta^{\prime} M_{\alpha,\beta} & 0_{ 1 \times 2}
 & 0_{ 1 \times 2}
 \\
	M_{\alpha,\beta} &   0_{ 2 \times 2}  &  0_{ 2 \times 2} \\
 \alpha_f &    0_{ 1 \times 2} &   0_{ 1 \times 2} 
 	\end{pmatrix} }_\text{$\tilde{A} \in \mathbb{R}^{r + n +1 \times r + n +1} $}
\begin{pmatrix}
	\beta^{\prime} y_{t-2} \\
	\Delta_2 y_{t-2} \\
    \Delta y_{t-3}
	\end{pmatrix} + 
\underbrace{
\begin{pmatrix}
	\beta^{\prime} I_2 & \beta^{\prime} (I_2 + \Pi) \\
	 I_2 & I_2 + \Pi \\
	0 & (1,0)^{\prime}
 \end{pmatrix} }_{ {\tilde{B}} }
\begin{pmatrix}
\nu_t \\
\nu_{t-1}
\end{pmatrix} \nonumber \\
\tilde{y}_t &= I_{\tilde n} x_t  \  \ , \ \  t \in 2 \mathbb{Z} \ . 
\label{eq: hf ss trans_FD:p7} 
%eq: hf ss obs_FD
\end{align}   }
\noindent
Finally, we investigate whether there are 
(different) parameter vectors such that the autocovariance function of $(\tilde y_t)_{t \in 2 \mathbb Z}$, i.e. $\tilde \gamma$, is the same. (Recall that $\theta^I := 
\left( 
\alpha_f^{I},
\alpha_s^{I},
1,
\beta_{s}^{},
\sigma_{ff}^{I},
\sigma_{fs}^{I},
\sigma_{ss}^{I}
\right)
$ and
$\theta^{II} := 
\left( 
\alpha_f^{II},
\alpha_s^{II},
1,
\beta_{s}^{},
\sigma_{ff}^{II},
\sigma_{fs}^{II},
\sigma_{ss}^{II}
\right)
$ as defined in Section~\ref{subsect: mixed frequency data and generic ident}.)
This is the case if $\tilde{A}$, and $\tilde{B}\Sigma \tilde B'$ are equal though $\theta^{I} \not= \theta^{II}$.  
Consider $\tilde{A}$, for $\alpha_f=0$, from the definition of $M_[\alpha, \beta]$ in equation (\ref{eq: VECM hf:p5}), we can choose some $\alpha_s^{I} \neq \alpha_s^{II}$ solving the quadratic equation $2 \alpha_s + \beta_{s} \alpha_s^2 - [M_{\alpha,\beta}]_{2,1}=0$, in which case $\tilde{A}$ remains the same.\\
Next we consider $\tilde{B} \Sigma \tilde{B}^{\prime}$. To get equality with some $\theta^I$, $\theta^{II}$, the covariances of the lagged fast variable have to be equal, this demands for $\sigma_{ff}^{I} =\sigma_{ff}^{II}$. Second, to get equality of the covariances of the cointegrating terms it is sufficient to look at the 
covariances of $\Delta_N y_t$.
Direct calculations show that (with $\alpha_f=0$, 
$\sigma_{ff}^{I} = \sigma_{ff}^{II}$)
{\scriptsize
\begin{align}
&\left[ \tilde{B} \Sigma \tilde{B}^{\prime} 
\right]_{(2:3,2:3)} \nonumber \\
=&
\begin{pmatrix}
\sigma_{ff} & \alpha_s \sigma_{ff}
+ (1 + \alpha_s \beta_{s}) \sigma_{sf}
\\
\alpha_s \sigma_{ff}
+ (1 + \alpha_s \beta_{s}) \sigma_{sf}
& 
\alpha_s
\left( \alpha_s \sigma_{ff}
+ (1 + \alpha_s \beta_{s}) \sigma_{sf} \right) +
\left( \alpha_s \sigma_{fs}
+ (1 + \alpha_s \beta_{s}) \sigma_{ss} \right)
(1 + \alpha_s \beta_{s}) 
	\end{pmatrix} \ . 
 \label{eq: VECM hf:p11}
\end{align}  }
This yields $\sigma_{sf}^{II} = \sigma_{fs}^{II}
= \frac{\alpha_s^{I} \sigma_{ff} 
- \alpha_s^{II} +
(1 + \alpha_s^{I} \beta_{s}) \sigma_{sf}^{I}
}{  (1 + \alpha_s^{II} \beta_{s}) }
$, where
$(1+ \alpha_s^{II} \beta_{s}) \not=0$. Finally, choose $\sigma_{ss}^{II}$ 
such that the (2,2)-elements of 
(\ref{eq: VECM hf:p11}) for parameters $\theta^{I}$ and $\theta^{II}$ become equal, this requires
$(1+ \alpha_s^{I} \beta_{s}) \not=0$. Hence, we have obtained a pair $\theta^I$, $\theta^{II}$, $\theta^I \not= \theta^{II}$, where $\tilde{A}$ and $\tilde{B} \Sigma \tilde{B}^{\prime}$ are the same. This implies by Lyapunov equations that the model parameters cannot be identified from $\tilde \gamma$. 
%Note that in this example the %parameters would be identified if all %variables would be high frequency %variables.   
In this example we observed the following: If $\alpha_f =0$, $\left( y_{t}^f \right)_{ t \in \mathbb{Z} }$ is a random walk not affected by $\left( y_{t}^{s}
\right)_{t \in \mathbb{Z}}
$. Therefore, the high frequency variable $y_{t}^{f}$ does not provide information about the parameter $\alpha_s$. By contrast, if $\alpha_f \not=0$ the fast variable 
$y_{t}^{f}$ provides sufficient information to identify $\alpha_s$. To see this, if $\alpha_f \neq 0$, $\alpha_f$ follows from the (4,1)-element of $\tilde{\mathcal{A}}$, while $\alpha_s$ can be uniquely retrieved from the first coordinate of 
$M_{\alpha,\beta}$, that is from the equality
$2 \alpha_f + \alpha_f^2 + \alpha_f \beta_{s} \alpha_s
= \left[ M_{\alpha,\beta} \right]_{(1,1)}$.

\section{Proof of Theorem \ref{thm: topo thm}}
\begin{proof}
%To prove the result, we use the fact that the subspace topology 
%\begin{align*}
%\underline{\Theta_1} = \{\Theta_1 \cap O \mid O \in \tau\}	
%\end{align*}
%of a metricable topological space is again metricable and induced by the metric: 
%
%
%
%
% 
%\begin{align*}
%d_{\Theta_1}(\theta_1, \theta_2): \Theta \rightarrow \mathbb R_{\geq 0}: (\theta_1, \theta_2) \mapsto d_\Theta(\theta_1, \theta_2) = \norm{\theta_1 - \theta_2}.
%\end{align*}
1. ($G\cap \Theta_1$ is dense.)\\
Suppose that $\theta_{0} \in \Theta_1$ does not satisfy at least one of the identifiability conditions. Let $\varepsilon >0$, we show that there exists $\theta \in G \cap \Theta_1$ such that $\norm{\theta - \theta_0} < \varepsilon$ by perturbing the eigenvalues / eigenvectors of the companion matrix $A$ corresponding to $\theta_0$.\\
For this we define a mapping $f_{\theta_0}$ that maps $\mathcal{A}$ to a companion matrix $\mathcal{A}^*$ with perturbed eigenvalues and eigenvectors such that $\theta = \vect \begin{pmatrix}
 \mathcal{A}_1^*  & \cdots & \mathcal{A}_p^*	
 \end{pmatrix}
$ is in $G\cap \Theta_1$: 
\begin{enumerate}
	\item Compute the Jordan decomposition of $\mathcal{A} = Q \Lambda Q^{-1}$.
	\item Perturb the eigenvalues: 
\begin{align}
	\bar{\mathcal{A}}^* = Q \underbrace{ \big( \Lambda +\diag(\underbrace{0, ..., 0}_\text{$n-r$ -times}, \xi_{1}, ..., \xi_{np-(n-r)})\big)}_\text{$\tilde \Lambda$} Q^{-1}. \label{eq: perturbed eigenvals}
\end{align}

%Perturb the eigenvectors: ... we can also perturb the eigenvectors contained in the matrix $Q$. Let $\tilde{Q}$ denote the corresponding $np \times np$ perturbation matrix, such that 
%$\bar{Q}^{*} = Q + \tilde{Q}$ 
% and 
%$\bar{A}^{*} = \bar{Q}^{*} \tilde{\Lambda} \bar{Q}^{* \; -1}$ [note that for a sufficiently small $\tilde{Q}$ the inverse matrix of $\bar{Q}^{*}$ exists]. 
%With $\tilde{A} = \left( \tilde{Q}  \Lambda^{*} - A \tilde{Q} \right) 
%\left( Q + \tilde{Q}\right)^{-1}$, we get 
%$A Q = Q \tilde{\Lambda}$  and 
%$\left( A + \tilde{A} \right) \left( Q + \tilde{Q}\right) = \left( Q + \tilde{Q}\right) \tilde{\Lambda}^{}$. That is, we can perturb the eigenvectors $Q$ such that the eigenvalues $\tilde{\Lambda}^{}$ remain the same. 
%
%
%
\item We transform $\bar{\mathcal{A}}^*$ to a similar matrix $\mathcal{A}^*$ that has the companion structure by using the procedure from \citet{AndersonEtAl2016a}:
\begin{align*}
\mathcal{A}^* = T \bar{\mathcal{A}}^* T^{-1},	\qquad \mbox{hence   } \quad 
\mathcal{A}^*  T = \begin{pmatrix}
\mathcal{A}_1^* &\cdots  & & \mathcal{A}_p^* \\
I_n   &        & &  0     \\ 
      & \ddots & & \vdots \\
      &        & I_n &	0
\end{pmatrix} 
\begin{pmatrix}
	T_1 \\
    T_2 \\
	\vdots \\
	T_p
\end{pmatrix}
 = T \bar{\mathcal{A}}^* \ , 
\end{align*}
where $T_j$ for $j = 1, ...,p$ are the $n \times np$ rowblocks of $T$. Now we set $T_1 = \begin{bmatrix} I_n & 0 & \cdots & 0 \end{bmatrix}$ and solve the equation above: 
\begin{align*}
\mathcal{A}^*_1 T_1 + ... + \mathcal{A}_p^* T_p = T_1 
\bar{\mathcal{A}}^*, \qquad
T_1 = T_2 \bar{\mathcal{A}}^*, \qquad 
    \dots, \qquad 
T_{p-1} = T_p \bar{\mathcal{A}}^*,
\end{align*}
which yields 
\begin{align*}
T_{j} = T_{j-1} \bar{\mathcal{A}}^{* -1} \quad \mbox{for } j = 2, ..., p.
\end{align*}  
\end{enumerate}
Clearly, the mapping $f_{\theta_0}: \xi \mapsto \mathcal{A}^* \mapsto \theta$ for $\xi = (\xi_{n-r+1}, ..., \xi_{np})' \in \mathbb R^{np - (n-r)}$ is continuous at  $\theta_0$ and $f_{\theta_0}(0) = \theta_0$ (as in this case $T = I_{np}$). So for the $\varepsilon$-neighborhood around $\theta_0$ denoted by $U_\varepsilon(\theta_0)$ there exists a $\delta > 0$, such that for all $\xi \in U_\delta (0)$ we have $f_{\theta_0}(\xi) \in U_\varepsilon(\theta_0)$, where $U_\delta(0)$ is the open $\delta$-neighborhood around $0$ in $\mathbb R^{np-(n-r)}$.\\
Now, $\lambda^*:= (1,...,1, \lambda_{n-r+1} + \xi_{1},..., \lambda_{np} + \xi_{np-n-r})$ are the eigenvalues of $\mathcal{A}^*$ because they are the zeros of the characteristic polynomial of $\tilde \Lambda$ in equation (\ref{eq: perturbed eigenvals}) from which we obtain $\mathcal A^*$ by similarity transformation with $TQ$. For any $\delta > 0$, we can find a $\xi \in U_\delta(0)$ such that the corresponding eigenvalues $\lambda^*$ of $ \mathcal{A}^*$ satisfy the conditions \textbf{(I1)}, \textbf{(I3)}, \textbf{(I4)} and \textbf{(I5)}.
Analogously to equation (\ref{eq: perturbed eigenvals}), we can perturb the eigenvalues and eigenvectors of $A$ to ensure conditions \textbf{(I5)} (second part) and \textbf{(I6)}.\\
We have to ensure that the image $f_{\theta_0}(\xi)$ is real valued: Since $\mathcal{A}$ is real valued, for any complex eigenvalue $z = a + ib \in \mathbb C  \setminus \mathbb R$, the conjugate $\bar z = a - ib$ is also an eigenvalue of $\mathcal{A}$. If the algebraic multiplicity of $z$ is larger than $1$, $z$ has to be perturbed. As is easily shown, if we add to $z$ and $\bar z$ the same small real number, the resulting $\bar{\mathcal{A}}^*$ (and therefore also $\mathcal{A}^*$) is again real valued.\\
Thus, we found $\theta \in G$ close to $\theta_0$ and are left with checking whether $\theta$ is also in $\Theta_1$. \textbf{(C3)} is trivial.\\
For \textbf{(C1)}, note that, still $n-r$ eigenvalues of $\mathcal{A}^*$ equal unity which ensures that $\rank \Pi = r$ \citep[see][]{bauerwagner2012}. Applying the procedures described above, we obtain the vector error correction representation corresponding to $f_{\theta_0}(\xi) = \theta$, say  $(\alpha(\xi), \beta(\xi), \Phi_1(\xi),..., \Phi_{p-1}(\xi))$, and see that
 \begin{align*}
 	g:\theta \mapsto \det \alpha_\bot(\xi) ' \big(I_n - \sum_{j = 1}^{p-1} \Phi_j(\xi)\big)\beta_\bot(\xi)
 \end{align*}
is continuous at $\theta = \theta_0$. We know that $g(\theta_0) \neq 0$ since $\theta_0 \in\Theta_1$. So there exists $\varepsilon_3 > 0$ such that the neighbourhood $U_{\varepsilon_3}\big( g(\theta_0)\big)$ is bounded away from zero. By continuity there exists $\varepsilon_2 > 0$ such that for all $\theta \in U_{\varepsilon_3}(\theta_0)$, we have $g(\theta) \in U_{\varepsilon_2}\big(g(\theta_0)\big)$. For the same reasons as above we can find suitable $\xi$ such that $f_{\theta_0}(\xi) = \theta \in U_{\varepsilon}(\theta_0) \cap U_{\varepsilon_3}(\theta_0)$. Hence $\Theta \cap G$ is dense in $\Theta$. \\
2. ($G\cap \Theta_1$ is open in $(\Theta_1, d)$), \\
where $d$ denotes the Euclidean metric.
Suppose now for $\theta^* \in G \cap \Theta_1$, we have to show that there exists $\varepsilon > 0$ such that $U_\varepsilon(\theta_0) \subset G \cap \Theta_1$. 
The eigenvalues are the zeros of the characteristic polynomial of $A$ and therefore continuous functions at $\theta^*$ (since as is well known, the zeros of any polynomial are continuous function of its coefficients).
% https://math.stackexchange.com/questions/480619/eigenvalues-of-matrix-with-entries-that-are-continuous-functions/480888
So the mapping 
\begin{align*}
e: \theta \mapsto \mathcal{A} \mapsto \begin{pmatrix}
 \lambda_{n-r+1} & \cdots & \lambda_{np} 	
 \end{pmatrix} = \lambda	
\end{align*}
is continuous in $\theta^*$. Clearly there is an open neigbourhood $U\subset \mathbb C^{np - (n- r)}$ of $\lambda^* = e(\theta^*)$ such that for all $\lambda \in U$ the corresponding spectrum $\begin{pmatrix}
	1 & \cdots & 1 & \lambda'
\end{pmatrix}'$ satisfies the identifiability conditions. The pre-image $e^{-1}(U) \subset G$ is an open neighborhood of $\theta_0$. Analogously to the arguments applied above, we can establish \textbf{(C2)}.\\
\textbf{(I2)} follows from \textbf{(C4)} and \textbf{(I1)} which completes the proof.
\end{proof}
\section{Proof of Theorem \ref{thm: th1aggregated}}
\begin{proof}
    This follows from transforming a state-space system into prediction error form. See \cite{HannanAndDeistler2012}[chapter 1] and \cite{GersingAndDeistler2021}. From \cite{Johansen1995}[Proof of Theorem 4.2] it follows that the largest eigenvalue of $A$ is in modulus smaller than one. Hence the system is stable. The linear expansion of the transfer function for a stable system is already the Wold representation as the inputs 
    $\tilde \nu_t$ are the innovations. Hence, the system is also miniphase
    \citep[see, e.g.,][Chapters~2 and 7.3]{ScherrerDeistler2018book}.
    % We know that the input process is White Noise. Suppose that the transferfunction is not stable, then it follows from iterating the state-space System forward that the transferfunction is not absolutely summable. Hence the variance of the output process is infinity and thus the output is not stationary which is a contradiction. 
\end{proof}
\section{Proof of Theorem \ref{thm: Minimal}}
\begin{proof}
By \cite{Johansen1995}[Proof of Theorem 4.2] it follows that the eigenvalues of modulus smaller than $1$ are the same, for $\mathcal{A}$ and 
${A}$.\\
\textit{1.1 Observability for the Stock Case:} We use the PBH-Test 
\citep[see, e.g.,][Section~2.4.3]{Kailath1980}
to prove that the pair $(A_b, C_b)$ is generically observable (note that the observability of 
$(A_b,  C_b)$ also implies the 
observability of $(A_{b,c}, C_{b,c})$ since $c$ is non-singular). For this, note that the eigenvectors of $A_b$ are the same as the eigenvectors of $A$. Let $\lambda$ be an eigenvalue of $A$ and $q = \begin{pmatrix}
q_\beta' & q_1' &  \cdots & q_{p-1}'
\end{pmatrix}'$ the corresponding eigenvector. We write 
\begin{align*}
 Aq=   \begin{bmatrix}
    \beta'\alpha + I_r & \beta' \Phi_1 & \cdots & & \beta'\Phi_{p - 1} \\
    \alpha & \Phi_1 & \cdots & & \Phi_{p - 1} \\
    0     & I_n     &     &  & 0    \\
    \vdots      &         & \ddots & &    \\
     0     &         &    & I_n & 0
    \end{bmatrix} 
    \begin{pmatrix}
    q_{\beta} \\
    q_1 \\
    \vdots \\
    \\
    q_{p - 1}
    \end{pmatrix}
    = \lambda \begin{pmatrix}
    q_{\beta} \\
    q_1 \\
    \vdots \\
    \\
    q_{p - 1}
    \end{pmatrix}, 
\end{align*}
where $q_\beta$ is $r \times 1$ and $q_i$ is $n \times 1$ for $i = 1, ... , p-1$. From this, we obtain the relations 
\begin{align}
    (\beta'\alpha + I_r) q_\beta + \sum_{i = 1}^{p - 1} \beta' \Phi_i q_i &= \lambda q_\beta  \label{eq: eigenvectors tilde A first rowblock}\\
    \alpha q_\beta + \sum_{i = 1}^{p - 1} \Phi_i q_i &= \lambda q_1 \label{eq: eigenvectors tilde A second rowblock} \\
    q_i &= \lambda q_{i + 1}  \ ,  \qquad i = 1, ..., p - 2.
\end{align}
Since $A$ is of full rank, $\lambda \neq 0$ and $q_1 = 0$ imply $q = 0$, which is a contradiction (noting that $\alpha$ has rank $r$). Now we look at 
\begin{align}
    C_b q = \begin{pmatrix}%
	I_r & 0 & \cdots &   &  &  & 0  \ \\
	0 & I_n & \cdots & I_n  & 0  & \cdots  & 0\\
	0 & 0 & \left( I_{n_f} , 0  \right) & \cdots  &  \left( I_{n_f} , 0  \right) & & \\
	& & & \ddots & & \ddots & \vdots \\
	&  &  & \left( I_{n_f} , 0  \right) & \cdots &  \left( I_{n_f} , 0  \right)  & 0 \\
	\end{pmatrix}  %
	\begin{pmatrix}
	\lambda^N q_\beta \\
	\lambda^N q_1 \\
	\vdots \\
	\lambda^N q_{p - 1}
	\end{pmatrix} \label{eq: tilde C_b times q} \ , 
\end{align}
which is not equal to zero. If, for example, 
\begin{align*}
    &\lambda^N q_1 + \cdots + \lambda^N q_{N}  = \lambda^N q_1 +  \lambda^{N - 1} q_1 + \cdots +  q_1  =(1 + \lambda + \cdots + \lambda^N)q_1 \neq 0 \\
    & \hspace{4cm} \Leftrightarrow (1 + \lambda + \cdots + \lambda^N) \neq 0,
\end{align*} 
which is generically the case (see Assumption~\ref{a: g-ident}).
Recall that by $q$ we denote eigenvectors of $A$ and by $v$ eigenvectors of $\mathcal{A}$, where both correspond to the same eigenvalue $|\lambda| < 1$. In Lemma \ref{lemma: q_beta and q_1 and v_1}, we show that
\begin{align}
q_\beta = \frac{\lambda}{\lambda - 1}\beta' q_1 = \beta'v_1, \label{eq: q_beta and q_1 and v_1}
\end{align}
so if we suppose that $v_1$ is not in the right kernel of $\beta'$, we also get $C_b q \neq 0$.\\
\\
\textit{1.2 Observability for the Flow Case:} 
The first part of the proof is analogous to the stock case. It remains to show that there exists no eigenvector that is in the right kernel of $C_b$, where $C_b$ is now defined in (\ref{eq: mixedf ss obs_FD7}). %XXX
%eq: mixedf ss obs_FD7
Now, analogously to the procedure in (\ref{eq: tilde C_b times q}) we obtain, that an eigenvector of $A_b$ is not in the rightkernel of $C_b$ if e.g. 
\begin{align*}
    & \lambda^N q_1 + \cdots + \lambda^N q_{N} - \lambda^N q_{N + 1} - \cdots - \lambda^{N}q_{2N}\\ &= \lambda^N q_1 +  \lambda^{N - 1} q_1 + \cdots +  \lambda q_1 - q_1  - \cdots - \lambda^{- N +1} q_1 \\
    &=\lambda^{N - 1}(- 1 - \lambda -  \cdots - \lambda^{N - 1} + \lambda^{N} + \cdots + \lambda^{2N - 1})q_1 \neq 0 \\
    &\Leftrightarrow (- 1 - \lambda -  \cdots - \lambda^{N - 1} + \lambda^{N} + \cdots + \lambda^{2N - 1})\neq 0,  
\end{align*}
Also the second part is similar to the stock case: By Lemma \ref{lemma: q_beta and q_1 and v_1}, 
$q_\beta = \frac{\lambda}{\lambda - 1}\beta' q_1 = \beta'v_1$. Assume that $v_1$ is not in the right kernel of $\beta'$ (as already done in the stock case). In addition, by considering the first $r$ rows of the matrix $c$ for the flow case, provided in (\ref{eq:cmatrix_FD_flow}), we get
{\small
\begin{align*}
    & 
N I_r q_{\beta} -(N-1) \beta^{\prime} q_1  -(N-2) \beta^{\prime} q_2 - \cdots -2 \beta^{\prime} q_{N-2}  -\beta^{\prime} q_{N-1} \\
=& 
N I_r \frac{\lambda}{\lambda - 1}\beta' q_1 - \frac{(N-1)}{\lambda^0} \beta^{\prime} q_1  - \frac{(N-2)}{\lambda} \beta^{\prime} q_1 - \cdots -
- \frac{2}{\lambda^{N-2}} \beta^{\prime} q_{1}
- \frac{1}{\lambda^{N-1}} \beta^{\prime} q_{1} \nonumber \\
=& 
\left( N \frac{\lambda}{\lambda - 1}  - \frac{(N-1)}{\lambda^0}   - \frac{(N-2)}{\lambda}  - \cdots 
- \frac{2}{\lambda^{N-2}}
- \frac{1}{\lambda^{N-1}} \right)
\beta^{\prime} q_1
\nonumber \\
=& 
\frac{1}{\lambda^{N-1}}
\left( N \frac{\lambda^N}{\lambda-1}  - (N-1) \lambda^{N-1}   - (N-2) \lambda^{N-2}  - \cdots - 2 \lambda - 1 \right)
\beta^{\prime} q_1 \ . 
\end{align*} }
Note that $\lambda \not=1$ and $\lambda \not=0$ by the model assumptions (recall that by \cite{Johansen1995}[Proof of Theorem 4.2] it follows that the eigenvalues of modolus smaller than $1$ are the same, for $\mathcal{A}$ and 
${A}$). Hence, if 
$v_1$ is not in the right kernel of $\beta'$ and
$N \frac{\lambda^N}{\lambda-1}  - (N-1) \lambda^{N-1}   - (N-2) \lambda^{N-2}  - \cdots - 2 \lambda - 1 \not=0$ we also get  that $\tilde C_b q \neq 0$ for the flow case.\\
\textit{2. Controllability:} 
It is enough to show that the matrix $\E x_{t + 1} \begin{pmatrix}
\tilde y_t' & \tilde y_{t - N}' & \cdots
\end{pmatrix}'$ has full rank. For $k$ sufficiently large, we have
\begin{align*}
    x_{t + N - 1} & = A_{b, c}^{k - 1} x_{t - kN + 1} + \sum_{j  = 0}^{k - 1} A_{b, c}^j B_{b, c} \nu^b_{t - N - j N} \\
    \Delta_N y_{t - k N} & = \underbrace{\begin{bmatrix}
    0_{n \times r} & I_n & 0 & \cdots & 0
    \end{bmatrix}}_{S_{\Delta_N y}} x_{t - kN + 1} \\
    \E \Delta_N y_{t - kN} x_{t - N + 1}' &= \E \bigg \{ S_{\Delta_N y} x_{t - kN + 1} x_{t - kN + 1}' {A_{b, c}^{k - 1}}' + S_{\Delta_N y} x_{t - kN + 1} \bigg( \sum_{j = 0}^{k - 2} A_{b, c}^{j} B_{b, c} \nu^b_{t - N - jN} \bigg)' \bigg\} \\
    &= S_{\Delta_N y} \underbrace{c \Gamma_{rp} c'}_{\Gamma_{rp, c}} {A_{b, c}^{k - 1}}'. 
\end{align*}
Therefore 
\begin{align*}
    \E x_{t - N + 1} & \begin{pmatrix}
    \Delta_N y_{t - kN} ' & 
    \Delta_N y_{t - (k + 1)N} ' &
    \cdots &   
    \Delta_N y_{t - (k + p - 1)N} '
    \end{pmatrix}  \\
    &=  A_{b, c}^{k - 1} \begin{bmatrix}
      \Gamma_{rp, c} S_{\Delta_N y}' & 
    A_{b, c} \Gamma_{rp, c} S_{\Delta_N y} ' &
    \cdots &
     A_{b,c} ^{p - 1}  \Gamma_{rp, c} S_{\Delta_N y}'
    \end{bmatrix}, 
\end{align*} 
which has full rank if $\Gamma_{rp} > 0$ as follows from the proof of Theorem 7 in \cite{AndersonEtAl2016a}.
%(see equation F.1 in the corresponding %article). 
%
By \citet{HannanAndDeistler2012}[Theorem~2.3.3] controllability and observability imply that the system is minimal.
\end{proof}
\begin{lemma}\label{lemma: q_beta and q_1 and v_1}
Suppose the \textcolor{red}{Assumption \ref{ass: Theta}} and \ref{a: g-ident} hold. Then equation (\ref{eq: q_beta and q_1 and v_1}) holds.
\end{lemma}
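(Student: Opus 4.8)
The plan is to prove the two equalities in (\ref{eq: q_beta and q_1 and v_1}) separately. The first, $q_\beta = \frac{\lambda}{\lambda-1}\beta' q_1$, is homogeneous of degree one in the eigenvector $q$ and hence scale-free, so I obtain it purely from the eigenvector equations of $A$ already displayed above. The second, $\frac{\lambda}{\lambda-1}\beta' q_1 = \beta' v_1$, couples a component of the $A$-eigenvector $q$ to a component of the $\mathcal A$-eigenvector $v$; for this I must tie the two eigenvectors together through the linear map carrying the full companion state $X_t$ of (\ref{eq: hf ss trans}) into the VECM state $\underline{x}_t$ of (\ref{eq: hf ss trans_FD}).

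First I would establish the first equality. Multiplying (\ref{eq: eigenvectors tilde A second rowblock}) on the left by $\beta'$ gives $\beta'\alpha q_\beta + \sum_{i=1}^{p-1}\beta'\Phi_i q_i = \lambda \beta' q_1$. Expanding $(\beta'\alpha + I_r)q_\beta = \beta'\alpha q_\beta + q_\beta$ in (\ref{eq: eigenvectors tilde A first rowblock}) and substituting the previous identity for the common sum yields $\lambda\beta'q_1 + q_\beta = \lambda q_\beta$, hence $(1-\lambda)q_\beta = -\lambda\beta' q_1$ and $q_\beta = \frac{\lambda}{\lambda-1}\beta' q_1$. This is licit because $\lambda\neq 1$: by \cite{Johansen1995} (Proof of Theorem 4.2) the eigenvalues of $A$ of modulus below one coincide with those of $\mathcal A$, and here $|\lambda|<1$.

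For the second equality I would introduce the matrix $M$ with $\underline{x}_t = M X_t$, whose first block row is $\beta'$ and whose remaining block rows form the differencing stencil $(I_n,-I_n)$ turning the $y$-lags into the differences $\Delta y_{t-j}$. Since $\underline{x}_t = MX_t$ holds pathwise while $X_t$ and $\underline{x}_t$ evolve under $\mathcal A$ and $A$ driven by the same $\nu_t$, and since by (I2) and (C4) the vectors $X_t$ and $\nu_t$ have full-rank, mutually uncorrelated second moments, the deterministic parts force the intertwining identity $M\mathcal A = AM$ (one may equally verify this by a direct block computation; the accompanying $M\mathcal B = B$ is then only a consistency check). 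Consequently, if $\mathcal A v=\lambda v$ then $A(Mv)=M\mathcal A v=\lambda(Mv)$, so $Mv$ is an eigenvector of $A$ for $\lambda$. As the non-unit eigenvalues are simple by (I3), the corresponding eigenspace of $A$ is one-dimensional, and since $v_1\neq0$ (otherwise the shift relations $v_i=\lambda v_{i+1}$ force $v=0$) and $\lambda\neq0$ ($A$ nonsingular) we have $Mv\neq0$; fixing the common normalization $q=Mv$, the block structure of $M$ reads off $q_\beta=\beta'v_1$ and $q_1=v_1-v_2$. Using $v_2=\lambda^{-1}v_1$ gives $q_1=\frac{\lambda-1}{\lambda}v_1$, i.e. $\beta'v_1=\frac{\lambda}{\lambda-1}\beta'q_1$, which together with the first equality establishes (\ref{eq: q_beta and q_1 and v_1}).

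The main obstacle is the eigenvector correspondence $q=Mv$: verifying the intertwining relation $M\mathcal A=AM$ and, more delicately, justifying that $q$ and $Mv$ are not merely both eigenvectors for $\lambda$ but genuinely proportional, which rests on the simplicity of the non-unit eigenvalues from (I3). The rest is bookkeeping — the first equality is scale-invariant whereas the link to $v_1$ silently fixes a shared scaling of $q$ and $v$, and one must keep the boundary cases $\lambda\notin\{0,1\}$ excluded throughout.
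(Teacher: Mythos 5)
Your proof is correct and follows essentially the same route as the paper's: the first equality comes from the identical elimination of the common sum between the two leading block rows of $Aq=\lambda q$, and the second from the observation that the levels-to-$(\beta'y,\ \text{differences})$ map carries an $\mathcal A$-eigenvector into an $A$-eigenvector --- which the paper verifies by direct substitution of the VECM/VAR parameter relations into $\mathcal A v=\lambda v$ rather than via your intertwining identity $M\mathcal A=AM$. Your explicit appeal to the simplicity of the non-unit eigenvalues in \textbf{(I3)} to get $q\propto Mv$, and your remark that the identity presupposes a common normalisation of $q$ and $v$, make precise a point the paper's proof leaves implicit.
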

\begin{proof}
Substracting $\beta'$ times (\ref{eq: eigenvectors tilde A second rowblock}) from (\ref{eq: eigenvectors tilde A first rowblock}), we obtain 
\begin{align*}
q_{\beta}  = \lambda  q_\beta - \lambda \beta^{\prime} q_{1}    \ \text{such that} \ 
 q_{\beta}  =  \frac{\lambda}{\lambda-1} \beta^{\prime} q_{1}.
\end{align*}
Next, we consider the eigenvector $v = \begin{pmatrix}
v_1' & \cdots & v_p'
\end{pmatrix}'$ of $\mathcal{A}$ corresponding to $\lambda$ (recall that eigenvalues in modulus smaller that one of $A$ and $\mathcal{A}$ are the same). By using the relations of the parameters between the VECM and VAR representation, we get 
\begin{align*}
\lambda v_1 &= ( I_n + \alpha \beta^{\prime} ) v_1
 + \Phi_1 (v_1- v_2) + \Phi_2 (v_2 -  v_3) + 
\dots
+ \Phi_{p-1} (v_{p-1} - v_p) \\
	& \alpha \beta^{\prime} v_1
	+  \Phi_1 \frac{\lambda-1}{\lambda} v_1 + \Phi_2 \frac{\lambda-1}{\lambda^2} v_1  + 
	\dots
	 \frac{\lambda-1}{\lambda^{p-1}} v_1 = (\lambda-1) v_1  \ ,
\end{align*}
where the last relation follows from $v_i = \lambda v_{i + 1}$ for $i = 1, ..., p - 1$, which results by the companion structure of $A$. Now, we see that $q_1 = ((\lambda - 1) / \lambda) v_1$ solves (\ref{eq: eigenvectors tilde A first rowblock}) and (\ref{eq: eigenvectors tilde A second rowblock}).  
\end{proof}
\section{Proof of Theorem \ref{thm: identifiability in steps}}
\begin{proof}
Consider the stable, miniphase spectral factor $\tilde k(\tilde z)$, $\tilde z := z^N$, corresponding to the Wold representation of $(\tilde y_t)_{t \in N \mathbb Z}$.\\
\textbf{Step 1:} We obtain an arbitrary minimal realisation $(\bar A_{b, c}, \bar B_{b, c}, \bar C_{b, c})$ of $\tilde k(\tilde z)$, e.g. by taking the echelon form, see \cite{HannanAndDeistler2012}[Thm 2.5.2.].\\
{\bf Step 2:} (Obtain eigenvalues $\Lambda
	= diag \left(\lambda_1,\dots,\lambda_{r+n(p-1)} \right)
	$ and a linear combination of the eigenvectors of ${{A}}_{}$, denoted $q_i$,
	from $\bar{{A}}_{b,c}$). \\
By, e.g., \citet{HannanAndDeistler2012}[Theorem~2.3.4] the parameter matrices of  minimal systems relate via $\bar{A}_{b,c} = T^{-1} A_{b,c} T^{} $, $\bar{C}_{b,c} = C_{b,c} T^{} $ and $\bar{B}_{b,c} = T^{-1} \tilde{B}_{c}$, where $T$ is a non-singular matrix.\\ 
Since ${\mathcal A}$ 
(see equation (\ref{eq: hf ss trans}))
is assumed to be diagonalizable (Assumption \ref{a: g-ident}), the matrix ${A}$  (see 
equation~(\ref{eq: hf ss trans_FD}); 
recall that by \cite{Johansen1995}[Proof of Theorem 4.2] the  of modulus smaller than $1$ are the same, for $\mathcal{A}$ and 
${A}$) can be expressed by means of ${A} = Q \Lambda Q^{-1}$, where $\Lambda = diag(\lambda_1,\dots,\lambda_{r+n(p-1)})$ is the diagonal matrix of eigenvalues of ${A}$ and $Q = \left( q_1,\dots,q_{r+n(p-1)} \right)$ contains the eigenvectors. ${A}_{b} = A^N$, $C_{b,c} =  C_b c^{-1}$ and  ${A}_{b,c} = c {A}_b c^{-1}$, such that $\bar{A}_{b,c} = T^{-1} {A}_{b,c} T^{} = T^{-1} 
	c A_{b} c^{-1} T^{}
= \left(T^{-1} c Q \right) \Lambda^N \left(T^{-1} c Q \right)^{-1}$, $\bar{C}_{b,c} =  {C}_{b,c} T^{} ={C}_{b} c^{-1} T^{} $. 
By the eigen-decomposition of $\bar{A}_{b,c}$, we obtain $\left(T^{-1} c Q \right)$ and $\Lambda^N$. 
In addition, $\left({0}_{n \times r} , {0}_{n \times n} , {I}_n \ {0} \dots {0} \right) {A}^2
	= \left({0}_{n \times r} ,{I}_n \ {0} \dots {0} \right) {A}$ by the companion structure of ${A}$. Hence by (\ref{eq: C_b structural matrix definition}), we have  
{\footnotesize
\begin{align}
%	\label{eq:7.10appa_FD}
\bar{{C}}_{b,c} T^{-1} c Q =  C_b c^{-1}T T^{-1}c Q = C_b Q =
	\begin{pmatrix}
\begin{pmatrix}I_r & 0 & \cdots & 0 \end{pmatrix} A^N\\
  S_{n_f}^{(1)}      A^N\\
 S_{n_s}^{(1)}       A^N\\
 S_{n_f}^{(1)}  A^{N - 1}\\
   S_{n_f}^{(1)} A^{N - 2}\\
\vdots  \\
S_{n_f}^{(1)}  A  \\
\end{pmatrix}Q.
	\end{align} }
Now we look at the last two rowblocks of $C_b$ with the eigenvectors $q_i$, $1 \leq i \leq m$. From assumption \ref{a: g-ident} \textbf{(I4)}, it follows that the eigenvectors of $A$ are the same as the eigenvectors of $A^2$ (also as $A^N$) \citep[see ][Lemma 3.2.1]{Felsenstein2014}, therefore we have  
\begin{align}
    S_{n_f}^{(1)} A^2 q_i = S_{n_f}^{(1)} \lambda_i^2 q_i \nonumber \\
    S_{n_f}^{(1)} A q_i = S_{n_f}^{(1)}  \lambda_i q_i, \label{eq: S_nf A q undsoweiter}
\end{align}
and we can compute all eigenvalues not equal to one since $S_{n_f}^{(1)} q_i \neq 0$ by assumption \ref{a: g-ident} \textbf{(I6)}. The flow-case is analogous. 
Summing up, from 
$\bar{{A}}_{b,c}$ we are able to obtain $T^{-1} c Q $, $\Lambda^N = diag(\lambda_1^N,\dots,\lambda_{r+n(p-1)}^N)$ and $\Lambda = diag(\lambda_1,\dots,\lambda_{r+ n(p-1)})$.\\
\\
{\bf Step 3:} (relate $c^{-1} T$ to $T$) \\
To jointly treat the stock and the flow case, we write 
{
	\begin{align*}
c = \begin{pmatrix}
	c_{\beta \beta} & c_{\beta 1}  & c_{\beta 2}  & \cdots & 
	c_{\beta N-1} & 0 & \cdots
	&  \\
	0   & c_{11} & c_{12} & \cdots & \cdots &c_{1 N} & c_{1, N+1}  & \cdots & \\ 
	\vdots & &  & \ddots & &  & & \ddots & &  \\
	\end{pmatrix}  \ ,	\end{align*}  }
where $c_{\beta 1},\dots,c_{\beta N-1}$ and
$c_{N-1+j}$, $j \geq 1$
are zero for the stock case (see equation (\ref{eq:cmatrix_FD})). For the flow case $c_{11},\dots,c_{N-1,1} = I_n$
and
$c_{1N},\dots,c_{2N-1,1} = -I_n$ (see equation (\ref{eq:cmatrix_FD_flow})). For the case of stock and flow variables 
the corresponding coordinates of 
$c_{1N},\dots,c_{2N-1,1}$ are zero for stock variables.\\[0.5em]
	Let  
	{
	\begin{align}
		\label{eq:RandD}
		A =	\left(  
		\begin{array}{c} 
		A_{\beta} \\
		A_{1} \\
		\vdots \\
		A_{p-1} \\
		\end{array}
		\right) \ , \ \
		T =	\left(  
		\begin{array}{c} 
		T_{\beta} \\
		T_{1} \\
		\vdots \\
		T_{p-1} \\
		\end{array}
		\right) \ , \text{ and } \
		R &:= c^{-1} T  =
		\left(  
		\begin{array}{c} 
		R_{\beta} \\
		R_{1} \\
		\vdots \\
		R_{p-1} \\
		\end{array}
		\right)
		\ .
		\end{align} 
	}
	\noindent Observe that for the stock case (the flowcase is treated analogously) 
	{\scriptsize
		\begin{eqnarray}
		\bar{{C}}_{b,c} \ \bar{{A}}_{b,c}^{-1} &=& 
		\left( 
		\left(  
		\begin{array}{ccccccc} 
		{I}_{r} & c_{\beta 1} & \dots &  c_{\beta N-1} & 0 & \dots & \\
		{0}  & I_n  & I_n  & \dots &I_n & 0 & \dots   \\
		& & & \sharp &  & & 
		\end{array}
		\right) A^N \right) c^{-1} T
		\underbrace{T^{-1} c A^{-N} c^{-1} T}_{\bar{A}_{b,c}^{-1} }
		\nonumber  \\
		&=& 
		\left(  
		\begin{array}{ccccccc} 
		{I}_{r} & c_{\beta 1} & \dots &  c_{\beta N-1} & 0 & \dots & \\
		{0}  & I_n  & I_n  & \dots &I_n & 0 & \dots   \\
		& & & \sharp &  & & 
		\end{array}
		\right)
		c^{-1} T 
		= 
		\left( 
		\begin{array}{c} 
		\left[ c \right]_{(1:n+r,1:m)} \\
		\sharp 
		\end{array}
		\right) c^{-1} T 
		=
		\left( 
		\begin{array}{c}
		T_{\beta} \\ 
		T_1 \\
		\sharp
		\end{array}
		\right) \ \ \ \ \ \ \ \  
		\label{eq: appa_FD T_beta T_1}
		\end{eqnarray} }
	where ``$\sharp$'' denotes some matrix entries which are not important here.
	Note that $\bar{A}_{b,c} = T^{-1} c A_b c^{-1} T^{}$. From Steps 1 and 2, we obtain $\bar{A}_c := T^{-1} c A c^{-1} T^{} = T^{-1} c Q \Lambda Q^{-1}c^{-1} T$. $A c^{-1} T =  c^{-1} T \bar{A}_c$ and
	{\scriptsize
		\begin{align}
		& AR =  A \begin{pmatrix}
		  R_\beta \\
		  R_1 \\
		  \vdots \\
		  R_{p-1}
		\end{pmatrix}  = 
		\left(
		\begin{array}{c}
		(I_r + \beta' \alpha  ) R_{\beta} + \beta' \Phi_1 R_1 + \dots +\beta' \Phi_{p-1} R_{p-1}   \\
		\alpha  R_{\beta} +  \Phi_{ 1} R_1 + \dots + \Phi_{p-1} R_{p-1}   \\
		R_1 \\
		\vdots \\
		R_{p-2}    \\
		\end{array}
		\right) = 
		 \begin{pmatrix}
		  R_\beta \bar A_c\\
		  R_1 \bar A_c\\
		  R_2 \bar A_c \\
		  \vdots \\
		  R_{p-1} \bar A_c
		\end{pmatrix}  = R \bar A_c\\[0.5em]
		\label{eq:7.15appa2_FD}
%		A c^{-1} T &=
%		\left(
%		\begin{array}{ccccc}
%		I_r + \beta' \alpha  &  \beta' \Phi_i & \cdots & \cdots & \beta' \Phi_{p-1}   \\
%		\alpha& \Phi_1 & {0} & & \Phi_{p-1} \\
%		0 & {I}_n & {0} & & \\
%		&  & \ddots & &  \\
%		& &  & {I}_n & {0}   \\
%		\end{array}
%		\right) c^{-1} T 
%        = c^{-1} T \bar{A}_c = R \bar{A}_{c} 
%		= 
%		\left(
%		\begin{array}{c}
%		R_{\beta}  \bar{A}_{c}  \\	
%		R_1  \bar{A}_{c}  \\
%		R_2  \bar{A}_{c}  \\
%		\vdots \\ 
%		R_p  \bar{A}_{c}  \\
%		\end{array} 
%		\right) \nonumber \\
%		%
%		%
		%
		%
		& AR = R \bar A_c
		= c^{-1}
		\left(
		\begin{array}{c}
		T_{\beta}  \bar{A}_{c}  \\
		T_1  \bar{A}_{c}  \\
		T_2  \bar{A}_{c}  \\
		\vdots \\ 
		T_p  \bar{A}_{c}  \\
		\end{array} 
		\right)  
		=
		c^{-1}\left(
		\begin{array}{c} 
		T_{\beta}  T^{-1} c A c^{-1} T  \\
		T_1  T^{-1} c A c^{-1} T  \\
		T_2  T^{-1} c A c^{-1} T  \\
		\vdots \\ 
		T_p  T^{-1} c A c^{-1} T   \\
		\end{array} 
		\right) 
		\; . 
		\end{align} 
	}
\noindent Now, $R_{\beta} = {c}_{\beta}^{-1} T$ and $R_1 = {c}_1^{-1} T$, where
${c}_{\beta}^{-1}  := \left[ c^{-1} \right]_{(1:r,1:m)}$ and ${c}_1^{-1}  := \left[ c^{-1} \right]_{(r+1:r+n,1:m)}$. Therefore, we receive $R_i$ for $i = 2, ..., p-1$, given $R_1 = {c}_1^{-1} T_1$ from the recursion $R_{i+1} = R_i \bar{A}^{-1}_c$, for $i = 1, ..., p-2$.
%Note that (\ref{eq:7.15appa2_FD}) and the recursion $R_{i+1} = R_i \bar{A}_c^{-1}$ allow to obtain  while from (\ref{eq: appa_FD T_beta T_1}) we get $T_{\beta}$ and $T_1$ but not $T_i$, $i>1$. In the case of slow stock variables, the matrix $c_{\beta}$ is $(I_r,0,\dots)$ such that $R_{\beta} = T_{\beta}$. In addition, $T_{N+j} = R_{N+j}$ for $j=1,\dots,p-N$ by the construction of $c$.\\ 
%
%
\\
{\bf Step 4:} (obtain $R=c^{-1} T$, $T$ and $\beta, \Phi_1 , ..., \Phi_{p-1}$ ) \\	
To retrieve $T$ and $R$ we proceed as follows: By means of (\ref{eq: appa_FD T_beta T_1}) and (\ref{eq:7.15appa2_FD}), and the assumption $p \geq 2 N$ we derive
{\scriptsize
\begin{eqnarray}
\label{eq:agg3_FD}
&&  T = c R 
=
\left( 
\begin{array}{l}
c_{\beta \beta} R_{\beta} + c_{\beta 1} R_1 +  c_{\beta 2} R_2 + \dots  + c_{\beta N-1} R_{N-1} \\	
0  R_{\beta} + c_{11} R_1 +  c_{12} R_2 + \dots  + c_{1N} R_{N}
+
c_{1,N+1} R_{N+1}  + \dots  + c_{1,2N} R_{2N}
\\
	0 R_{\beta} + 0 R_{1} + R_2 +  R_3 + \dots + R_{N+1} \\
	\vdots \\
	R_N +  R_{N+1} + \dots  + R_{2N-1}  \\
	\hline 
	I_n R_{N+1} \\
	\vdots \\
	I_n R_{p-1} 
	\end{array} \right) \; .   \ \ \
	\end{eqnarray} }
	% 	 
	%	[The terms below the line in (\ref{eq:7.15appa2_FD}) and (\ref{eq:agg1appa_FD}) %vanish if only stock variables are considered.]
%Claim 9.11.2021:
%$T_1$ is given; \\
%R_{i+1} = R_i \bar{A}_c^{-1}, \\
%Hence $c_1 R_1 + ... c_N R_N + c_{N+1} R_{N+1} + \dots  + c_{2N} R_{2N} = \sum_{i=1}^{2N} c_i R_1 A^{-1 i} = T_1$   
%
%
Recall that for stock case $c_{1j}= I_n$, $j=1,\dots,N$, $c_{1j}= 0$, $j>N$, $c_{\beta j}=0$, for $j \geq 1$, while for the flow case
$c_{1j}= I_n$, $j=1,\dots,N$, $c_{1j}= -I_n $, $j=N+1,\dots,2N$, and $c_{\beta j} 
= - (N-j) \beta^{\prime}$, for $j = 1,\dots,N-1$.

From the above considerations $T_1$ can be obtained from (\ref{eq: appa_FD T_beta T_1}). Since $R_{i+1} = R_{i} \bar{A}_c^{-1}$, equation (\ref{eq:agg3_FD}) yields
{\footnotesize
\begin{eqnarray}
	T_1 &=&
	\begin{cases}
	R_1 + R_2 + \dots + R_{N} &, \text{ for the stock case}, \\
	R_1 + R_2 + \dots + R_{N} 
	-
	R_{N+1}  - \dots - R_{2N} 
	&, \text{ for the flow case}.
	\end{cases}
\end{eqnarray} }
In the above Step 3, we obtained $R_{i+1} = R_i \bar{A}^{-1}_c$, which results in
{\footnotesize
\begin{eqnarray}
\label{eq:agg3A_FD}
	T_1 &=&
	\begin{cases}
	R_1 + R_2 + \dots + R_{N} &, \text{ for the stock case}, \\
	R_1 + R_2 + \dots + R_{N} 
	-
	\left( R_{1}  + \dots + R_{N} \right) \bar{A}^{-N}_c 
	&, \text{ for the flow case},
	\end{cases} 
\end{eqnarray} }
such that 
$R_{1}  + \dots + R_{N}	= T_1$ for the stock and
$R_{1}  + \dots + R_{N}	= T_1 \left( I_m - \bar{A}^{-N}_c \right)^{-1} $ for the flow case. 
As already obtained above,  $R_{i+1} = R_i \bar{A}^{-1}_c$. This yields 
$R_{1}  + \dots + R_{N}
= R_1 \sum_{j=1}^N \bar{A}^{-j+1}_c $. Since 
$R_{1}  + \dots + R_{N}$ follows from 
(\ref{eq:agg3A_FD}) we are also able to derive $R_1$ and therefore $R_{i+1}$ by the recursion
$R_{i+1} = R_i \bar{A}^{-1}_c$, $i=2,\dots,p-1$.
Finally, we observe
{\small
\begin{eqnarray}
\label{eq:agg3A_FD:2}
T_2 &=&
	R_2 + R_3 + \dots + R_{N+1} =
	\left(R_1 + R_2 + \dots + R_{N} \right) 
	\bar{A}_c^{-1}  \nonumber \\
	&& \vdots \nonumber \\
	T_{N} &=&
	R_{N} + R_{N+1} + \dots + R_{N+N-1} =
	\dots  \nonumber \\
	T_{N + 1 } &=& R_{N + 1 }  \nonumber \\
	&& \vdots \nonumber \\
	T_{p-1} &=& R_{p-1}   \ .
	\end{eqnarray} }
	Hence $T_{i}$, $i=2,\dots,p-1$, are provided by (\ref{eq:agg3A_FD}). Recall that $T_{\beta}$ and $T_1$ follow from (\ref{eq: appa_FD T_beta T_1}). 

{\bf Step 5:}  (Obtain $\Sigma_{{\nu}}$)
	\\
Let  
{\footnotesize
\begin{align}
    \gamma_{\Delta_N y} (\kappa-\ell) & := \E \Delta_N y_{ t - \ell} \Delta_N y_{t-\kappa}'
    \ , \nonumber \\
    \gamma_{\beta} (\kappa - \ell) &:= \E \beta^{\prime} y_{t-\ell} (\beta^{\prime} y_{t-\kappa})'  \ ,  \\
    \gamma_{\beta, \Delta_N y} (\kappa - \ell) &:= \E \beta^{\prime} y_{t-\ell} \Delta_N y_{t-\kappa}'
    = \left( \E \Delta_N y_{t-\kappa} (\beta^{\prime} y_{t-\ell})^{\prime} \right)^{\prime }
    = \gamma_{\Delta_N y,\beta} (\ell - \kappa)^{\prime}
    \ , \ \text{ and } \nonumber \\
    \Gamma_{rp}  & := \E \underline{x}_{t+1} \underline{x}_{t+1}^{\prime} \in \mathbb{R}^{m \times m} \nonumber \\
    &=
    \left( 
    \begin{array}{ccccc}
    \gamma_{\beta} (0) & \gamma_{\beta, \Delta y} (0) &
    \gamma_{\beta, \Delta y} (1) & \dots &
    \gamma_{\beta, \Delta y} (p-2) \\
    \gamma_{\Delta y, \beta } (0) & \gamma_{\Delta y} (0) &
    \gamma_{\Delta y} (1) & \dots &
    \gamma_{\Delta y} (p-2) \\
    \gamma_{\Delta y, \beta } (-1) & \gamma_{\Delta y} (-1) &
    \gamma_{\Delta y} (0) & \dots &
    \gamma_{\Delta y} (p-3) \\
    & &  & \ddots & \\
    \gamma_{ \Delta y , \beta } (-p+2) & \gamma_{\Delta y} (-p+2) &
    \gamma_{\Delta y} (-p+3) & \dots &
    \gamma_{\Delta y} (0) 
    \end{array}
    \right)
    \nonumber \\
    &=
    \left( 
    \begin{array}{ccccc}
    \gamma_{\beta} (0) & \gamma_{\beta, \Delta y} (0) &
    \gamma_{\Delta y, \beta} (-1)^{\prime} & \dots &
    \gamma_{\Delta y,\beta} (p-2)^{\prime} \\
    \gamma_{\Delta y, \beta } (0) & \gamma_{\Delta y} (0) &
    \gamma_{\Delta y} (-1)^{\prime} & \dots &
    \gamma_{\Delta y} (-p+2)^{\prime} \\
    \gamma_{\Delta y, \beta }(-1) & \gamma_{\Delta y} (-1) &
    \gamma_{\Delta y} (0) & \dots &
    \gamma_{\Delta y} (-p+3)^{\prime} \\
    & & & \ddots & \\
    \gamma_{\Delta y, \beta } (-p+2) & \gamma_{\Delta y} (-p+2) &
    \gamma_{\Delta y} (-p+3) & \dots &
    \gamma_{\Delta y} (0) 
    \end{array}
    \right)
    \ , \label{eq:_sec_moms_forStep5} 
\end{align}
}
where $\underline{x}_t$ was defined in (\ref{eq: hf ss trans_FD}),(\ref{eq: hf ss obs_FD}). The last step follows from the fact that $\left(\underline{x}_t \right)_{t \in \mathbb{Z}}$ is stationary, such that $\Gamma_{rp}$ has to be symmetric.\\
Let $S_{\beta} := \left(I_{r \times r},  ,{0}_{r \times n}, \dots , {0} \right) \in \mathbb{R}^{r \times m}$, and
	$S_{\Delta_N y} := \left(0_{n \times r},  {I}_ n,{0}, \dots , {0} \right) \in \mathbb{R}^{n \times m}$. 
Then (\ref{eq: hf ss trans_FD}) and (\ref{eq: hf ss obs_FD}) result in	%
\begin{align*}
% \label{eq:7.16appa_FD}
&	\gamma_{ u^{\cdot }} (-hN) := 
\E u_{t - h N}^{\cdot  } u_t^{\cdot  \prime}
=
S_{\beta } c {A}^{hN}
	c^{-1} c
	\Gamma_{rp} c^{\prime} S_{\beta}^{\prime} 
	= S_{\beta } {A}_{b,c}^h 
	c
	\Gamma_{rp } c^{\prime} S_{\beta}^{\prime} 
	\ , \nonumber  \\
&	\gamma_{u^{\cdot} , \Delta_N y} (-hN) 
:= 
\E u_{t - h N}^{\cdot  }   \Delta_N y_t^{\prime}
=  S_{\beta } c {A}^{hN}
	c^{-1} c
	\Gamma_{rp} c^{\prime} S_{\Delta_N y}^{\prime} 
	= S_{\beta } {A}_{b,c}^h 
	c
	\Gamma_{rp } c^{\prime} S_{\Delta_N y}^{\prime} 
	\ , \nonumber  \\
&	\gamma_{\Delta_N y} ( - hN ) = S_{\Delta_N y} c {A}^{hN} c^{-1} c 
	\Gamma_{rp } c^{\prime} S_{\Delta_N y}^{\prime}
	= S_{\Delta_N y } {A}_{b,c}^{h} 
	c
	\Gamma_{rp } c^{\prime} S_{\Delta_N y}^{\prime} 
	\ , \ \text{ and } \nonumber \\ 
%	\label{eq:7.17appa_FD}
&	\underbrace{\left(
	\begin{array}{cc}
	\gamma_{u^{\cdot} } \left( 0 \right)
	 & \gamma_{u^{\cdot},  \Delta_N y} \left( 0 \right)
	\\
	\gamma_{u^{\cdot}, \Delta_N y} \left( 0 \right) 
	&
	\gamma_{\Delta_N y} \left( 0 \right) 
	\\
	\gamma_{u^{\cdot}, \Delta_N y} \left( N  \right) &
	\gamma_{\Delta_N y} \left( N  \right) \\
	\vdots \\
	\gamma_{u^{\cdot}, \Delta_N y} \left( (np-2)N  \right) &
	\gamma_{\Delta_N y} \left( (np-2)N \right)
	\end{array}{}
	\right)}_{\Gamma_{\beta \Delta_N y }} 
	=
	\underbrace{
		\left(
		\begin{array}{c}
		S_\beta \\
		S_{\Delta_N y} \\
		S_{\Delta_N y} {A}_{b,c}^N \\
		S_{\Delta_N y} {A}_{b,c}^{2N} \\
		\vdots \\
		S_{\Delta_N y} {A}_{b,c}^{N (np-2)}
		\end{array}
		\right) }_{\mathcal {O}_N }
	c \Gamma_{rp} c^{\prime} 
	\left( 
	\begin{array}{c}
	S_{\beta}
	^{\prime} \\
	S_{\Delta_N y}
	^{\prime}
	\end{array} 
	\right).
\end{align*}
\noindent Note that $\mathcal {O}_N  {A}_{b,c}^{- N}  = \mathcal{O}$, where $\mathcal{O}$ is defined in (\ref{eq:def:matrixO}). The matrix $\mathcal {O}$ has full column rank, as will be shown in Lemma~\ref{lemma:CandO_FD}, such that also $\mathcal {O}_{N}$ has full rank. Thus we obtain the first two column blocks of $\Gamma_{rp, c}$.
Now looking at the specific structure of 
%\begin{landscape}
{\tiny
\begin{align}
\label{matrix:xx}
\Gamma_{rp, c} &=  
\left(
	\begin{array}{cc|ccccc|}
	\gamma_{u^{\cdot} } \left( 0 \right)
	 & \gamma_{u^{\cdot},  \Delta_N y} \left( 0 \right)
	& \gamma_{u^{\cdot},  \Delta_N y} \left( 1 \right)
	& \gamma_{u^{\cdot},  \Delta_N y} \left( 2 \right)
	& \cdots &
	& \gamma_{u^{\cdot},  \Delta_N y} \left( N-1 \right) \\
	\gamma_{ \Delta_N y , u^{\cdot}  } \left( 0 \right)
	 & \gamma_{\Delta_N y} \left( 0 \right)
	& \gamma_{\Delta_N y} \left( 1 \right)
	& \gamma_{\Delta_N y} \left( 2 \right)
	& \cdots &
	& \gamma_{\Delta_N y} \left( N-1 \right)
	\\
	\gamma_{ \Delta_N y , u^{\cdot}  } \left( -1 \right)
	 & \gamma_{\Delta_N y} \left( -1 \right)
	& \gamma_{\Delta_N y} \left( 0 \right)
	& \gamma_{\Delta_N y} \left( 1 \right)
	& \cdots &
	& \gamma_{\Delta_N y} \left( N-2 \right)
	\\
	\gamma_{ \Delta_N y , u^{\cdot}  } \left( -2 \right)
	 & \gamma_{\Delta_N y} \left( -2 \right)
	& \gamma_{\Delta_N y} \left( -1 \right)
	& \gamma_{\Delta_N y} \left( 1 \right)
	& \cdots &
	& \gamma_{\Delta_N y} \left( N-3 \right)
\\ 
\vdots &\vdots &\vdots &\vdots &\vdots &\vdots& \vdots
\\
%
%
%\\ 
%\vdots &\vdots &\vdots &\vdots &\vdots &\vdots& \vots&\vdots&\vdots&\vdots & \vdots \\
	\gamma_{ \Delta_N y , u^{\cdot}  } \left( -(N-1) \right)
	 & \gamma_{\Delta_N y} \left( -(N-1) \right)
	& \gamma_{\Delta_N y} \left( -(N-2) \right)
	& \gamma_{\Delta_N y} \left( -(N-3) \right)
	& \cdots &
	& \gamma_{\Delta_N y} \left( 0 \right)
\\
\gamma_{\Delta y , u^{\cdot}  } \left( -N \right)
	 & \gamma_{\Delta y,  \Delta_N y} \left( -N \right)
	& \gamma_{\Delta y,  \Delta_N y} \left( -(N-1) \right)
	& \gamma_{\Delta y, \Delta_N y} \left( -(N-2) \right)
	& \cdots &
	& \gamma_{\Delta y ,  \Delta_N y} \left( -1 \right)
\end{array} 
	\right. \nonumber \\[2em]
& \hspace{2.5cm} \left.
	\begin{array}{|cccc}
 \gamma_{u^{\cdot},  \Delta y} \left( N \right) 
	&  \gamma_{u^{\cdot},  \Delta y} \left( N+1 \right)
	& \cdots 
	& \gamma_{u^{\cdot},  \Delta_N y} \left( p-2 \right)
\\
\gamma_{\Delta_N y , \Delta y} \left( N \right) 
	&  \gamma_{\Delta_N y, \Delta y} \left( N+1 \right)
	& \cdots 
	& \gamma_{\Delta_N y, \Delta_N y} \left( p-2 \right)
	\\
 \gamma_{\Delta_N y , \Delta y} \left( N-1 \right) 
	&  \gamma_{\Delta_N y, \Delta y} \left( N \right)
	& \cdots 
	& \gamma_{\Delta_N y, \Delta_N y} \left( p-3 \right)
	\\
\gamma_{\Delta_N y , \Delta y} \left( N-2 \right) 
	&  \gamma_{\Delta_N y, \Delta y} \left( N-1 \right)
	& \cdots 
	& \gamma_{\Delta_N y, \Delta_N y} \left( p-4 \right)
\\ 
 \vdots&\vdots&\vdots&\vdots 
\\
%
%
%\\ 
%\vdots &\vdots &\vdots &\vdots &\vdots &\vdots& \vots&\vdots&\vdots&\vdots & \vdots \\
 \gamma_{\Delta_N y , \Delta y} \left( 1 \right) 
	&  \gamma_{\Delta_N y, \Delta y} \left( 2 \right)
	& \cdots 
	& \gamma_{\Delta_N y, \Delta_N y} \left( p+2 - (N-1)  \right) \\
\gamma_{\Delta y} \left( 0 \right) 
	&  \gamma_{\Delta y} \left( 1 \right)
	& \cdots 
	& \gamma_{\Delta y} \left( p+2 - N  \right)
 \end{array} 
	\right),
\end{align}
}
% \end{landscape}
%
%
%
%
%
%
%
%
%
%
%
%
%
%
%
%
%
% R-code
%
%xt <- c("ut", "DNyt", "DNyt-1", "DNyt-2",
 %       "DNyt-(N-1)", "Dyt-N", "Dyt-N-1", "...", "Dyt-p+2")
%
%
%G0 <- outer(xt, xt, FUN = "paste0")
%
%xtp1 <- c("ut+1", "DNyt+1", "DNyt", "DNyt-1",
%          "DNyt-(N-2)", "Dyt-N-1", "Dyt-N", "...", "Dyt-p+3")
%
%G1 <- outer(xtp1, xt, FUN = "paste0")
%
%cbind(G0[,1:3], G1[,1:2]) 
%
%
%
%
%
we see the following relations 
\begin{align}
\Gamma_{rp, c}^{(2 + h)} &= \Gamma_{rp, c}^{(2)}(h) \qquad \mbox{for} \quad h = 1, ..., m-2  \nonumber \\
\Gamma_{rp,c}(h) &= A_c^h \Gamma_{rp, c} \qquad \mbox{for} \quad  h = 1, 2, ...  \nonumber \\
\Gamma_{rp, c}(2 +h) &= A_c^h\Gamma_{rp, c}^{(2)}, \label{eq: retrieve Gamma_rpc cols}
\end{align}
where by $\Gamma_{rp, c}^{(j)}(h)$, we denote the $j$-th column block of $\Gamma_{rp, c}(h)$. The first equation follows from the structure of the autocovariances of the states, i.e. $\Gamma_{rp,c}(h)= \E x_{t+h} x_t'$ for $h \in \mathbb N_0$, the second equation follows from the Lyapunov equations. Hence, we receive all columns of $\Gamma_{rp, c}$ by using the recursions in (\ref{eq: retrieve Gamma_rpc cols}) and therefore of $\Gamma_{rp} = c^{-1} \Gamma_{rp, c} {c^{-1}}'$. Finally, again by using the Lyapunov equations we have all second moments of $(\Delta y_t )_{t \in \mathbb Z}$ and $(u_t^\mathcal{S})_{t \in \mathbb Z}$.\\
Now $\Sigma_{\nu}$ retained by using the ``high frequency Yule-Walker type equations'', 
that is,
{\scriptsize
	\begin{eqnarray}
	\label{eq:7.18a_FD}
	&& \Delta y_{t} - \alpha  \beta^{\prime} y_{t-1} 
	- \Phi_1 \Delta y_{t-1} 
	- \dots - \Phi_{p-1} \Delta y_{t-p+1}  = \nu_t \nonumber \\
	&& 
	\Delta y_{t} \Delta y_{t}^{\prime} - \alpha  \beta^{\prime} y_{t-1} \Delta y_{t}^{\prime}
	- \Phi_1 \Delta y_{t-1} \Delta y_{t}^{\prime}
	- \dots - \Phi_{p-1} \Delta y_{t-p+1} \Delta y_{t}^{\prime} = \nu_t \Delta y_{t}^{\prime} \nonumber \\
	&& \underbrace{ \mathbb{E}_{} 
	\Delta y_{t} \Delta y_{t}^{\prime} 
	}_{\gamma_{\Delta y} (0) } 
	- \alpha  \underbrace{ \mathbb{E}_{}  \beta^{\prime} y_{t-1} \Delta y_{t}^{\prime}
	}_{\gamma_{\beta y \Delta y} (1) }
	- \Phi_1  \underbrace{ \mathbb{E}_{} \Delta y_{t-1} \Delta y_{t}^{\prime} 
	}_{\gamma_{\Delta y} (1) }
	- \dots - \Phi_{p-1} 
	\underbrace{\mathbb{E}_{} \Delta y_{t-p+1} \Delta y_{t}^{\prime}
	}_{\gamma_{\Delta y} (p-1) }
	= \underbrace{ \mathbb{E}_{} \nu_t \Delta y_{t}^{\prime} }_{\Sigma_{\nu} } \  . 
	\end{eqnarray} }
Hence, also generic identifiability of $\Sigma_{\nu}$ is established.\\
Finally we prove continuity of $\pi^{-1}$. This involves two steps: 1. The continuity of the mapping from the observed second moments to the parameters of a canonical minimal realisation $(\bar A_{b, c}, \bar B_{b, c}, \bar C_{b, c})$ (say the echelon form):\\
Recall that the set of transfer functions with McMillan-degree $m$, call it $\tilde M (m)$, can be decomposed in disjoint pieces corresponding to different Kronecker indices summing up to $m$. The set of transfer functions where the first $m$ rows of the Hankel matrix are a basis of the row space of the Hankel matrix is generic in $\tilde M (m)$ (w.r.t. the pointwise topology for $\tilde M(m)$ \citep[see][p. 65]{HannanAndDeistler2012}). This set is also called the ``generic neighbourhood''. As has been shown in Step 5 above, $\Gamma_{r, pc}$ from equation (\ref{matrix:xx}) has full rank $m$. We know that the linear dependencies in the Hankel matrix of the transfer function, say $\tilde{\mathcal H}$, and the Hankel matrix of the second moments, say $\tilde{\mathcal H}_\gamma$, are the same \citep[for the definitions see][]{AndersonEtAl2016a}. Now since $\Gamma_{rp, c}$ is the upper left $m \times m$ block of $\tilde{\mathcal H}_\gamma$, we know that the first $m$ rows of $\tilde{\mathcal H}$ are a basis of the row space of $\tilde{\mathcal H}$. Therefore $\Theta_I$ is a subset of the generic neighbourhood.\\
2. Note that from a given minimal realisation $(\bar A_{b, c}, \bar B_{b, c}, \bar C_{b, c})$ of $\theta \in \Theta_I$ all transformations involved in the retrieval algorithm described above are continuous. 
\end{proof}
\begin{lemma}
	\label{lemma:CandO_FD}
	Suppose that Assumptions~\ref{ass: Theta} and
	\ref{a: g-ident} hold. 
The matrix
		\begin{eqnarray}
		\label{eq:def:matrixO}
\mathcal{O}
		&=&
			\left(
			\begin{array}{c}	
				S_{\beta }^{}
				{A}_{b,c}  % element 1,1
				\\ %% first line
				S_{\Delta_N y}^{}
				{A}_{b,c} % element 1,1
				\\ %% first line
				S_{\Delta_N y}^{}
				{A}_{b,c}^2 \\ % element 1,1
				\vdots \\
				S_{\Delta_N y}^{}
				{A}_{b,c}^{n(p-1)}
			\end{array}
			\right) 
		\ 	
\end{eqnarray} 
is of full column rank $m=r+n(p-1)$.
\end{lemma}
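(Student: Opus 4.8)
The plan is to deduce full column rank of $\mathcal{O}$ from observability of the pair $(S_{\Delta_N y},A_{b,c})$ together with a bound on the observability index that exploits $r<n$; as a by-product this shows the $\beta$-block is only genuinely needed in the flow case. For the reduction, note that by \textbf{(I1)} the matrix $\mathcal A_p$, hence $A$, is nonsingular, so $A^N$ and $A_{b,c}=cA^Nc^{-1}$ are invertible. Right-multiplication by the invertible $A_{b,c}^{-1}$ preserves rank, so $\rank\mathcal O=\rank\big(\mathcal O A_{b,c}^{-1}\big)$, and the block rows of $\mathcal O A_{b,c}^{-1}$ are $S_\beta$ together with $S_{\Delta_N y}A_{b,c}^{k}$ for $k=0,\dots,n(p-1)-1$. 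In particular $\rank\mathcal O\ge\rank\widetilde{\mathcal O}_\Delta$, where $\widetilde{\mathcal O}_\Delta=\big(S_{\Delta_N y}A_{b,c}^{k}\big)_{k=0}^{n(p-1)-1}$ is an ordinary observability matrix; since $\mathcal O$ has only $m$ columns it suffices to show $\rank\widetilde{\mathcal O}_\Delta=m$.

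Next I would prove that $(S_{\Delta_N y},A_{b,c})$ is observable. By \textbf{(I3)}--\textbf{(I4)}, $A$ is diagonalizable with distinct eigenvalues $\lambda_i$, $|\lambda_i|<1$, so $A_{b,c}$ is diagonalizable with eigenvectors $\tilde q_i=cq_i$ and distinct eigenvalues $\mu_i=\lambda_i^{N}$. The companion structure gives $q_{i,j}=\lambda_i^{-(j-1)}q_{i,1}$, whence $S_{\Delta_N y}\tilde q_i=\big(\sum_{j=0}^{N-1}\lambda_i^{-j}\big)q_{i,1}=:s_iq_{i,1}$. Here $q_{i,1}\neq 0$ (otherwise $q_i=0$), and $s_i=0$ would force $\lambda_i^{N}=1$, impossible for $|\lambda_i|<1$; thus $S_{\Delta_N y}\tilde q_i\neq 0$ for every eigenvector and the PBH test yields observability of $(S_{\Delta_N y},A_{b,c})$.

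I would then invoke an observability-index bound to close the stock case. The output map $S_{\Delta_N y}=(0\ I_n\ 0\cdots 0)$ has rank $n$, so the partial ranks $\rho_k:=\rank\big(S_{\Delta_N y}A_{b,c}^{j}\big)_{j=0}^{k-1}$ satisfy $\rho_1=n$ and, by observability, increase by at least one at each step until they reach $m$; hence $\rho_k=m$ for some $k\le m-n+1$. Because $r<n$ by \textbf{(C1)}, we have $m-n+1=r+n(p-1)-n+1\le n(p-1)$, so the $n(p-1)$ block rows of $\widetilde{\mathcal O}_\Delta$ already attain rank $m$, and therefore $\rank\mathcal O=m$.

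The main obstacle is the flow case. There $S_{\Delta_N y}\tilde q_i$ is replaced by a weighted sum whose scalar factor is the polynomial $N\lambda_i^{N}/(\lambda_i-1)-(N-1)\lambda_i^{N-1}-\cdots-1$ appearing in the proof of Theorem~\ref{thm: Minimal}, which can vanish even for $|\lambda_i|<1$. For such exceptional modes the difference-output is blind, and observability must be recovered through the $\beta$-block: by Lemma~\ref{lemma: q_beta and q_1 and v_1}, $S_\beta\tilde q_i=q_{i,\beta}=\tfrac{\lambda_i}{\lambda_i-1}\beta'q_{i,1}$, which is nonzero exactly when $\beta'v_{i,1}\neq 0$, i.e. under the second alternative of \textbf{(I5)}. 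The delicate point is that $S_\beta$ occurs at only a single power in $\mathcal O$, so merely knowing that the combined pair $(\tilde C,A_{b,c})$ with $\tilde C=(S_\beta',S_{\Delta_N y}')'$ is observable does not by itself give the needed rank; I would have to verify, using the distinctness \textbf{(I4)} and $r<n$ in a Vandermonde/index count, that these finitely many exceptional modes are still resolved within the available $n(p-1)$ difference powers plus the single $\beta$-row. This is the step I expect to require the most care.
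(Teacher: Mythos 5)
Your stock--case argument is correct and is, in one respect, tighter than the paper's own proof. The paper reduces to eigenvectors of $A$ via the similarity $A_{b,c}=cA^{N}c^{-1}$ and the distinctness of the $\lambda_i^{N}$ (exactly as you do), but then simply invokes the PBH eigenvector test for the stacked output and asserts full column rank of $\mathcal O$, deferring to the proof of Theorem~7 in \citet{AndersonEtAl2016a} the question of why the particular powers appearing in $\mathcal O$ (a single power of $S_\beta$ and only $n(p-1)$ powers of $S_{\Delta_N y}$) suffice. Your observability-index bound $\nu\le m-\rank S_{\Delta_N y}+1=m-n+1\le n(p-1)$, which uses $r<n$ from \textbf{(C1)}, is precisely the missing glue, and your verification that $S_{\Delta_N y}\tilde q_i=\bigl(\sum_{j=0}^{N-1}\lambda_i^{-j}\bigr)q_{i,1}\neq 0$ for $0<|\lambda_i|<1$ is right (this also explains why the first alternative of \textbf{(I5)} is automatic for stable eigenvalues). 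Do make explicit that $q_{i,1}=0$ forces $q_i=0$ via $\alpha q_{i,\beta}=0$ and $\rank\alpha=r$, as in the proof of Theorem~\ref{thm: Minimal}.

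The genuine gap is the flow case, which you leave open, and the obstacle you anticipate there rests on a misidentification. The polynomial $N\lambda^{N}/(\lambda-1)-(N-1)\lambda^{N-1}-\dots-1$ that can vanish for $|\lambda|<1$ is the scalar factor attached to the \emph{first} ($\beta$-) row block of $c$ in (\ref{eq:cmatrix_FD_flow}), i.e.\ to $S_\beta c\,q_i$; it is why the second alternative of \textbf{(I5)} is needed for observability of the full blocked output in Theorem~\ref{thm: Minimal}. It is \emph{not} the factor attached to $S_{\Delta_N y}$, which selects the $\Delta_N^{\Sigma}y_t$-coordinate, i.e.\ the second row block $\left(0,\,I_n,\dots,I_n,\,-I_n,\dots,-I_n,\,0,\dots\right)$ of $c$. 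For that block, using $q_{i,j}=\lambda_i^{-(j-1)}q_{i,1}$,
\begin{align*}
S_{\Delta_N y}\,c\,q_i=\Bigl(\sum_{j=0}^{N-1}\lambda_i^{-j}-\sum_{j=N}^{2N-1}\lambda_i^{-j}\Bigr)q_{i,1}
=\bigl(1-\lambda_i^{-N}\bigr)\Bigl(\sum_{j=0}^{N-1}\lambda_i^{-j}\Bigr)q_{i,1}\ ,
\end{align*}
and each scalar factor vanishes only if $\lambda_i^{N}=1$, which is impossible for $0<|\lambda_i|<1$. Hence $(S_{\Delta_N y},A_{b,c})$ is PBH-observable in the flow case as well, your index bound applies verbatim, and no Vandermonde count involving the single $\beta$-row is needed in either case; the claimed ``by-product'' that the $\beta$-block is genuinely needed in the flow case should be dropped. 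With this correction your proof closes completely.
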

\begin{proof}%[Proof of Lemma~\ref{lemma:CandO_FD}]%
	The proof is very similar to the proof that the observability matrix is of full rank in \citet{AndersonEtAl2016a}[Proof of Theorem~7, page~823]. 	
	Since the matrix $c$ is of full rank $m$ we are allowed to consider $A^N$ and $A$. To see this, 
	let $\tilde{q}_i$ now denote an eigenvector of $c {A} c^{-1}$ with eigenvalue $\lambda_i$, then 
	$\left(c {A}^N c^{-1} \right) \tilde{q}_i  = c {A}^{N-1} c^{-1} c {A} c^{-1} \tilde{q}_i = \lambda_i c {A}^{N-1} c^{-1} \tilde{q}_i = \lambda_i^N \tilde{q}_i$. 
	In addition, if $q_i$ is an eigenvector of A, then $\tilde{q}_i= c q_i$ is an eigenvector of $A_{b,c}$.
	%
	%To see this, 
	% $A_{b,c} \tilde{q}_i = c A c^{-1} c q_i = \lambda c q_i = \lambda \tilde{q}_i$. %
	Moreover, $A_{b,c}^{j} \tilde{q}_i = c A^{jN} c^{-1} c q_i = \lambda_i^{Nj} c q_i$. 
	The eigenvalues of ${A}$ are such that 
	$\lambda_i \not=\lambda_j$ implies $\lambda_i^2 \not=\lambda_j^2$, the eigenvectors of ${A}$ and ${A}^2$ coincide. 
	To see this, let $q_i \in \mathbb{R}^{m}$ and $\lambda_i \in \mathbb{R}^{} $ denote an eigenvector and an eigenvalue of the matrix ${A}$. Then, ${A} q_i  = \lambda_i q_i$ and 
	${A}^2 q_i  = {A} {A} q_i = \lambda_i {A} q_i = \lambda_i^2 q_i$; for $N>2$ this works in the same way. Therefore it is sufficient to look at the eigenvectors and eigenvalues of the matrix $A$.
	%For the
	%linear aggregation scheme defined Section~\ref{subsect: mixed frequency data and %generic ident} we 
	%also observe that $\lambda_i \not=\lambda_j$ implies $\lambda_i^2 %\not=\lambda_j^2$, while the eigenvectors of $c {A}c^{-1}$ and $
	%\left( c{A}c^{-1} \right)^2$ coincide. Since $c$ has full rank $np$, the rank of  %$c {A}c^{-1}$ remains $np$.
	%
	%
	Similar to \citet{AndersonEtAl2016a}[Lemma~2] we have shown in the proof of the above Theorem~\ref{thm: Minimal} that the
	first $r+n$ components of an eigenvector of ${A}$ or $c A c^{-1}$ are not equal to a vector of zeros. Therefore, by the Popov-Belevitch-Hautus (PBH)-eigenvector test 
	\citep[see, e.g.,][page~135]{Kailath1980}, the matrix ${\mathcal{O}}$ has full {\em column} rank $r+n(p-1)$. That is,
\begin{eqnarray}
\label{eq:pbhtest:O}
 \left(
 \begin{array}{c}
 \left(
A^N - \lambda_i^N I_m 
 \right) \\
 \left(
 \begin{array}{c}
 S_{\beta} \\ 
 S_{\Delta y} 
 \end{array}
 \right) A^N \\
 \end{array}
 \right) q_i
 =
 \left(
 \begin{array}{c}
0_{m \times 1}
 \\
 \lambda_i^N
 \left[ q_i \right]_{1:n+r} \not= 0_{n+r \times 1}
 \end{array}
 \right) 
 \ .
\end{eqnarray}
\end{proof}	
%	
%
% \clearpage
\section{Proof of Theorem \ref{thm: Manfreds Question}}
\begin{proof}
The proof is constructed as follows: For each of the identifiability conditions in Assumption \ref{a: g-ident}, we suppose that \textbf{(Ij)} is violated for $j = 1, ..., 6$ and show that there exists no ``observationally equivalent'' $\theta \in \Theta_I$.\\
Suppose \textbf{(I1)} or \textbf{(I2)} are violated for $\theta_{\neg I}$, then it follows that the McMillan degree of $\tilde k (\tilde z)$ is less than $m$. Hence there exists no $\theta \in \Theta_I$ with the same auto-covariance function $\tilde \gamma$, which is granted by $\tilde{K} (0) = I_{\tilde{n}} $.\\
Suppose \textbf{(I3)} or \textbf{(I4)} are violated, then the minimal realisation of $\bar A_{b, c}$, which is directly obtained from $\tilde \gamma$ has eigenvalues $\lambda_i^N = \lambda_j^N$ for some $i \neq j$, and thus \textbf{(I4)} is violated.\\
Suppose that neither of the conditions in \textbf{(I5)} hold, then by equations (\ref{eq: tilde C_b times q}), we have $C_b q = 0$ and the system is not observable (and therefore of McMillan degree smaller than $m$).\\
Suppose that condition \textbf{(I6)} is not satisfied, then after going through steps Steps 1 and 2 of the retrieval algorithm in the proof of Theorem \ref{thm: identifiability in steps}, we obtain in equation (\ref{eq: S_nf A q undsoweiter}) that $S_{n_f}^{(1)} q_i = 0$ for some $i$ and therefore we are outside of $\Theta_I$ already. 
\end{proof}

\section{Proof of Theorem~\ref{thm:identify:mu0mu1}}
\label{appa:deterministic}
By considering the Granger-Representation-Theorem for the solution on $\mathbb Z$ in \citet{bauerwagner2012} [equation (26)], \citep[see also][Theorem~4.3]{Johansen1995}, we obtain 
\citep[see also][Exercise~4.5]{hansen1998workbook}
\begin{align}
    \E \Delta y_t &= 
    C(\mu_0 + \mu_1 t) + M_c \mu_1 \label{eq: E Delta y_t} \ , \\
    \qquad \mbox{with  } C &:= \beta_\bot \bigg(\alpha_\bot' \left( I_n - \sum_{j = 1}^{p-1} \Phi_j \right) \beta_\bot\bigg)^{-1} \alpha_\bot' \ , 
\end{align}
and $M_c$ is the limit of the stable part of the impulse responses in the particular solution defined in \citet{bauerwagner2012} [equation (8)] (this is  $k_\bullet(1)$ in \citet{bauerwagner2012} equation (26)).\\
Let $\bar \alpha' = (\alpha' \alpha)^{-1}\alpha'$, then 
\begin{align}
    \E \beta'y_{t} &=
    \bar \alpha' \bigg[C(\mu_0 + \mu_1 t) + M_c \mu_1 - \mu_0 - \mu_1 t \ , \nonumber \\ 
    &\hspace{3cm} - \sum_{j = 1}^{p-1} \Phi_j\big( C(\mu_0 + \mu_1 (t-j)) + M_c \mu_1\big) \bigg] \ . \label{eq: E beta y t-1} 
\end{align}
For $N=1$, both moments in equations (\ref{eq: E Delta y_t}) and (\ref{eq: E beta y t-1}) are observable (or consistently estimable from observed data) for the stock as well as for the flow case. In addition, for $N>1$ we get
\begin{align}
    \E \beta'y_{t-\ell} & =
    \bar \alpha' \bigg[C(\mu_0 + \mu_1 (t - \ell ) ) + M_c \mu_1 - \mu_0 - \mu_1 (t-\ell)  \nonumber \\ 
    &\hspace{1.cm} - \sum_{j = 1}^{p-1} \Phi_j\big( C(\mu_0 + \mu_1 (t-j)) + M_c \mu_1\big) \bigg] \ \text{for the stock case, and }  \label{eq: E beta y stock}    \\
    \E \beta'w_{t} &=
    \E \sum_{\ell=0}^{N-1} \beta'y_{t-\ell}  \ , \ \ \text{for the flow case} \ . \label{eq: E beta y flow}   
\end{align}
\begin{align}
    \E \Delta_N y_t &= \sum_{j=0}^{N-1} \left( C(\mu_0 + \mu_1 ({t-j})) + M_c \mu_1 \right) \label{eq: E Delta y_t N} 
    \nonumber
    \\ &= N C \mu_0 +  C \mu_1 \left( \sum_{j=0}^{N-1} ({t-j})   \right) + NC M_c \mu_1
    \ , \ \ \text{for the stock case, and } \\
\E \Delta_N^{\Sigma} y_t &= \sum_{\ell=0}^{N-1} 
    \E \Delta_N y_{t-\ell} = 
\sum_{\ell=0}^{N-1} \sum_{j=\ell }^{\ell+ N-1}
\left( C(\mu_0 + \mu_1 ({t-j})) + M_c \mu_1 \right) \nonumber \\
& = 
N^2 C \mu_0 + 
C \mu_1 
\left( \sum_{\ell=0}^{N-1} \sum_{j=\ell }^{\ell+ N-1} 
 ({t-j}) \right) + 
 N^2 M_c \mu_1 
\ , \ \ \text{for the flow case .} 
\end{align}
%
%Note that 
%$\left( \sum_{j=0}^{N-1} ({t-j})) \right) -
%\left( \sum_{j=0}^{N-1} ({t-1-j}) )  \right)=N$,
%$\left( \sum_{j=0}^{N-1} ({t-j})) \right) -
%\left( \sum_{j=0}^{N-1} ({t-N-j}) )  \right)=N^2$,
%$\left( \sum_{\ell=0}^{N-1} \sum_{j=\ell }^{\ell+ N-1} 
% ({t-j}) \right) - 
%\left( \sum_{\ell=0}^{N-1} \sum_{j=\ell }^{\ell+ N-1} 
% ({t-1-j}) \right) =N^2 $, and
%$ \left( \sum_{\ell=0}^{N-1} \sum_{j=\ell }^{\ell+ N-1} 
% ({t-j}) \right) - 
%\left( \sum_{\ell=0}^{N-1} \sum_{j=\ell }^{\ell+ N-1} 
% ({t-N-j}) \right) =N^3 $. 
%
%
%
%
%For $N>1$ and the $\beta$ known, we are still able to ``estimate'' 
%$\E \beta'y_{t-1}$ from data observed (that is $y_t$, observed on 
%$N \mathbb{Z}$) and therefor get (\ref{eq: E beta y t-1}).  
%
%With these equations we can reduce the case of deterministic terms to the case %without deterministic terms discussed above. Note that by means of $\E %\beta'y_{t-1}$ and $\E\Delta y_t$ we can also identify the following five cases %of deterministic terms which are described in \citet{Johansen1995}[page 81], see %also \cite{Johansen2004CointegrationAO}:\\
%
%
%
As already stated in Section~\ref{sect:deter}, we first remove the deterministic trends from 
 $\E \beta^{\prime} y_t$ and $\E \Delta_N y_t$
for the stock case or
$\E \beta^{\prime} w_t = 
\E \sum_{j=0}^{N-1} \beta^{\prime} y_{t-j}$
and $\E \Delta_N^{\Sigma} y_t = 
\E \sum_{j=0}^{N-1} \Delta_N y_{t-j}$ for the flow case. 
$\beta$ follows from mixed frequency data as demonstrated in \citet{chambers2020frequency}.
Then the parameters of the model without deterministic terms are generically identified as shown in the main text. For the five cases arsing from model
(\ref{eq:vecm:det1}) we get:

\textbf{Case $H_2(r)$:} ($\mu_0 = \mu_1 =0$) 
No deterministic terms. This case was shown above.\\
\textbf{Case $H_1(r)$:}  ($\mu_1 = 0$ and $\mu_0 \neq 0$.) We have a linear trend in $\mathbb{E} y_t$, and a constant in $\mathbb{E} \Delta y_t$, and a constant in $\mathbb{E} \beta' y_{t}$. 
Let the matrix $ S_C $ select $n-r$ basis rows of $
C$ (e.g., $S_C = \alpha_\bot^{\prime}$ can be used).
In this case it holds that 
\begin{align*}
 \E  \begin{pmatrix}
       \beta' y_t \\
      S_C \Delta_N y_t
    \end{pmatrix}  &= 
    \begin{pmatrix}
      \bar \alpha' ((I_n - \sum_{j = 1}^{p-1} \Phi_j ) C - I_n ) \\
      N S_C C , 
    \end{pmatrix} \mu_0 \ , 
 \ \ \text{for the stock case, and } \nonumber \\ 
  \E  \begin{pmatrix}
       \beta' w_t \\
      S_C \Delta_N^\Sigma y_t
    \end{pmatrix}  &= 
    \begin{pmatrix}
     N \bar \alpha' ((I_n - \sum_{j = 1}^{p-1} \Phi_j ) C - I_n ) \\
      N^2 S_C C 
    \end{pmatrix} \mu_0 \ , 
      \ \text{for the flow case} \ ,
\end{align*}
where the matrix on the LHS before $\mu_0$ is of rank $n$ since the first rowblock has rank $r$ and $C$ has rank $n - r$ and both are mutually orthogonal as $\alpha_\bot ' \alpha = 0$ and therefore (with $ \alpha_\bot'$ the last term of $C$)
\begin{align*}
    \alpha_\bot'\big((I_n - \sum_{j = 1}^{p-1} \Phi_j ) C - I_n \big)'\bar \alpha = 0 \ .
\end{align*}
From this we can compute $\mu_0$.\\
\textbf{Case $H_1^*(r)$:} ($\mu_1 = 0$, $\mu_0 \neq 0$ and $\alpha_\bot' \mu_0 = 0$) no linear trend but constant in $\E y_t$, constant in $\mathbb{E} \beta^{\prime} y_{t-1}$, and $\mathbb{E}\Delta y_t = 0$. This results in $r$ parameters in the cointegration equation. In this case we write $\mu_t = \alpha \rho_0$, where $\rho_0 \in \mathbb{R}^{r}$.
Note that 
$C \mu_t = C \alpha \rho_0 = 0$, which results in 
\begin{align*}
  \E  \begin{pmatrix}
       \beta' y_{t} \\
      \Delta_N y_t
    \end{pmatrix}  &= 
    \begin{pmatrix}
       \bar \alpha' ((I_n - \sum_{j = 1}^{p-1} \Phi_j ) C - I_n ) \\
      N C
    \end{pmatrix} 
    \alpha \rho_0
    \nonumber \\
    &= 
    \begin{pmatrix}
      - \bar \alpha' \alpha \rho_0 \\
      N 0
    \end{pmatrix} 
    = 
    \begin{pmatrix}
      - \rho_0 \\
      0
    \end{pmatrix} 
    \  ,   \  \  \text{for the stock case.}
\end{align*}
For the flow case we get
\begin{align*}
  \E  \begin{pmatrix}
       \beta' w_{t} \\
      \Delta_N^{\Sigma} y_t
    \end{pmatrix}  &= 
    \begin{pmatrix}
     N \bar \alpha' ((I_n - \sum_{j = 1}^{p-1} \Phi_j ) C - I_n ) \\
      N^2 C
    \end{pmatrix} 
    \alpha \rho_0
    \nonumber \\
    &= 
    \begin{pmatrix}
      - N \bar \alpha' \alpha \rho_0 \\
      N^2 0
    \end{pmatrix} 
    = 
    \begin{pmatrix}
      - N \rho_0 \\
      0
    \end{pmatrix} 
    \  .
\end{align*}

This allows to uniquely retrieve the unkown parameter $\rho_0$ from  
 $\E \beta' y_{t-1}$.\\
\textbf{Case $H(r)$:}  ($\mu_0 \in \mathbb{R}^n$ and  $\mu_1 \neq 0$) quadratic trend in $\mathbb{E} y_t$, linear trend in $\mathbb{E} \beta^{\prime} y_t$, linear trend in $\mathbb{E} \Delta y_t$. For this the stock case we get the system of linear equations
%{\tiny
\begin{align*}
  \E  \begin{pmatrix}
       \beta' y_{t} \\
     S_C  \Delta_N^{\Sigma} y_t \\
      \beta' y_{t-N} \\
     S_C \Delta_N^{\Sigma} y_{t-N} \\
    \end{pmatrix}  &= M_\mu
     \left( 
     \begin{array}{c}
          \mu_0 \\
          \mu_1
     \end{array}
     \right) \ , 
\end{align*} 
%}
%
where 
{\tiny
\begin{align*} 
M_{\mu}^{\mathcal{S}} & :=   \begin{pmatrix}
      \bar \alpha' ( C - I_n - \sum_{j = 1}^{p-1} C \Phi_j )  
      &|& 
\bar \alpha' \left[ C  t 
+ M_c - 
 t
I_n - \sum_{j = 1}^{p-1} \Phi_j \left( C 
 (t-j) 
- M_c \right) \right]
      \\
     N S_C C &|& S_C C \left(   
     \sum_{\ell =0 }^{N-1}
     (t-\ell)  + N M_c \right) \\ \hline
 \bar \alpha' ( C -  I_n - \sum_{j = 1}^{p-1} C \Phi_j )  
      &|& 
\bar \alpha' \left[ C 
(t-N) + M_c - 
(t-N) I_n - 
\sum_{j = 1}^{p-1} \Phi_j \left( C (t-N-j) - M_c \right) \right]
      \\
     N S_C C & | & S_C C
 \left(     
 \sum_{\ell=0}^{N-1} 
 (t-N-\ell)    
     + N M_c \right)
    \end{pmatrix} \\
    &= \begin{pmatrix}
M_{\mu 11 }^{\mathcal{S}} & M_{\mu 12}^{\mathcal{S}} \\
M_{\mu 21 }^{\mathcal{S}} & M_{\mu 22}^{\mathcal{S}}
     \end{pmatrix} \ .
\end{align*} 
}

For the flow case we have
\begin{align*}
  \E  \begin{pmatrix}
       \beta' w_{t} \\
     S_C  \Delta_N^{\Sigma} y_t \\
      \beta' w_{t-N} \\
     S_C \Delta_N^{\Sigma} y_{t-N} \\
    \end{pmatrix}  &= M_\mu
     \left( 
     \begin{array}{c}
          \mu_0 \\
          \mu_1
     \end{array}
     \right) \ , 
\end{align*} 
%}
%
where 
{\scriptsize
\begin{align*} 
& M_{\mu}^{\mathcal{F}}  := \left( \begin{matrix}
      N \bar \alpha' ( C - I_n - \sum_{j = 1}^{p-1} C \Phi_j ) &|& % (1, 1)
      \\
     N^2 S_C C &|& %(2, 1)
     \\ \hline 
 N \bar \alpha' \left(  C -  I_n -  \sum_{j = 1}^{p-1} C \Phi_j\right) &|& % (3, 1)
\\
     N^2 S_C C &|&
\end{matrix} \right.  
\\
& \quad \left. \begin{matrix}
\bar \alpha' \left[ C 
\sum_{\ell=0}^{N-1} (t-\ell) 
+ N M_c - 
\left( \sum_{\ell=0}^{N-1} (t-\ell) \right) 
I_n - \sum_{j = 1}^{p-1} \Phi_j 
\left( C 
\sum_{\ell=0}^{N-1} (t-\ell) 
- N M_c \right) \right] % (1, 2)
      \\
 S_C \left( 
     \left( \sum_{\ell=0}^{N-1} 
     \sum_{j=0}^{N-1}
     (t-j-\ell) \right) C  + N^2 M_c \right) %(2, 2)
     \\ \hline
\bar \alpha' \left[ C 
\sum_{\ell=0}^{N-1}
(t-N-\ell) + N M_c - 
\sum_{\ell=0}^{N-1} (t-\ell-N) I_n - 
\sum_{j = 1}^{p-1} \Phi_j \left( C 
\left( \sum_{\ell=0}^{N-1} 
(t-N-j-\ell) \right) - N M_c \right) \right] % (3, 2)
\\
     S_C \left( 
  \left(    
 \sum_{\ell=0}^{N-1} 
 \sum_{j=0}^{N-1}
 (t-N-\ell-j) \right)    
     C + N^2 M_c \right) % (4, 2)
\end{matrix} \right) 
\\
&= \begin{pmatrix}
M_{\mu 11 }^{\mathcal{F}} & M_{\mu 12}^{\mathcal{F}} 
\\
M_{\mu 21 }^{\mathcal{F}} & M_{\mu 22}^{\mathcal{F}}
     \end{pmatrix} \ .
\end{align*} 
}

\noindent
The matrix $ S_C $ selects $n-r$ basis rows of $
C$ 
(e.g., $S_C = \alpha_\bot^{\prime}$ can be used)
and $M_{\mu 11 }^{\cdot}= M_{\mu 21 }^{\cdot}$. The matrix $M_{\mu 11 }^{\cdot}$ is of full column rank $n$ by the above calculations where we set $\mu_1=0$. The determinant of a blocked matrix is given by 
{\small
$$ \det M_{\mu}^{\cdot} = \det M_{\mu 11 }^{\cdot}  \det \left( M_{\mu 22}^{\cdot} -  
\underbrace{M_{\mu 21 }^{\cdot} M_{\mu 11 }^{{\cdot} -1}}_{I_n} M_{\mu 12}^{\cdot} \right)=
\det M_{\mu 11 }^{\cdot}  \det \left( M_{\mu 22}^{\cdot} - M_{\mu 12}^{\cdot} \right) \ .$$ } Note that $M_{\mu 22}^{\mathcal{S}} - M_{\mu 12}^{\mathcal{S}} = 
\begin{pmatrix}
      N \bar \alpha' ( C - I_n - \sum_{j = 1}^{p-1} C \Phi_j )  
      \\
     N^2 S_C C 
    \end{pmatrix}  
 = N M_{\mu 11}^{\mathcal{S}}$ for the stock case, while 
 for the flow case
$M_{\mu 22}^{\mathcal{F}} - M_{\mu 12}^{\mathcal{F}} = 
\begin{pmatrix}
      N^2 \bar \alpha' ( C - I_n - \sum_{j = 1}^{p-1} C \Phi_j )  
      \\
     N^3 S_C C 
    \end{pmatrix}  
 = N M_{\mu 11}^{\mathcal{F}}$.
Therefore, $\det M_{\mu}^{\cdot}  \not=0$. This allows to uniquely solve for $\mu_0$ and $\mu_1$.\\

%
%XXX the first block has rank $n$ but I  cannot show that this %matrix has rank $2n$ XXX
%
%
%
%
%
%%
%
%
%
%
%
%
%
\textbf{Case $H^*(r)$:} ($\mu_0 \in \mathbb{R}^n$, $\mu_1 \neq 0$, with $\alpha_\bot' \mu_1 =0 $) linear trend in $\mathbb{E} y_t$, linear trend in $\mathbb{E} \beta^{\prime} y_t$, and constant in $\mathbb{E} \Delta y_t$. $\mu_0 $ contains $n$ free parameters, while $\mu_t = \mu_0 + \alpha \rho_1 t$, where $\rho_1 \in \mathbb{R}^{r}$.
Hence, 
$C \mu_t = C \mu_0 +  C \alpha \rho_1 t = C \mu_0$. 
This results in 
\begin{align*}
  \E  \begin{pmatrix}
       \beta' y_{t} \\
     S_C \Delta_N y_t \\
      \beta' y_{t-N} \\
     % \Delta y_{t-1} \\
    \end{pmatrix}  &= 
    \left[ 
    M_\mu^{\mathcal{S}} \right]_{1:r+n,1:2(r+m)}
     \left( 
     \begin{array}{c}
          \mu_0 \\
          \alpha \rho_1
     \end{array}
     \right)  \nonumber \\
&= 
     \underbrace{ \left[ M_\mu^{\mathcal{S}} \right]_{1:r+n,1:2(r+n)}
     \left( 
     \begin{array}{cc}
         I_n & 0_{n \times r} \\
         0_{n \times n}   &  \alpha 
     \end{array}
     \right) }_{M_{\mu,H_1^*(r)}^{\mathcal{S}} }
     \left( 
     \begin{array}{c}
          \mu_0 \\
          \rho_1
     \end{array}
     \right)
     \ .     
\end{align*} 
%}
%
Since $M_{\mu}^{\mathcal{F}} $ has full rank $2n$, the matrix 
$M_{\mu H_1^*(r)}^{\mathcal{F}}$ has rank $r+n$. This allows to uniquely obtain $\mu_0$ and $\rho_1$. For the flow case this works equivalently. 
\end{document}